\newcommand{\blind}{1}
\def\trans{^{\rm T}}
\newcommand{\bds}{{\bf s}}
\newcommand{\bdx}{{\bf x}}
\newcommand{\bdi}{{\bf i}}
\newcommand{\bdh}{{\bf h}}
\newcommand{\bdone}{{\bf 1}}
\newcommand{\BI}{{\bf I}}
\newcommand{\bdgamma}{{\pmb \gamma}}
\newcommand{\bdnu}{{\pmb \nu}}
\newcommand{\bdxi}{{\pmb \xi}}
\newcommand{\bdu}{{\pmb u}}
\newcommand{\var}{{\rm Var}}
\newcommand{\cov}{{\rm Cov}}
\newcommand{\E}{{\rm E}}
\newcommand{\bdbeta}{{\pmb \beta}}
\newcommand{\bdepsilon}{{\pmb \epsilon}}
\newcommand{\BX}{{\bf X}}
\newcommand{\BY}{{\bf Y}}
\newcommand{\BU}{{\bf U}}
\newcommand{\BSigma}{{\bf \Sigma}}
\def\wt{\widetilde}
\def\wh{\widehat}
\theoremstyle{plain}
\newtheorem{thm}{Theorem}[section]
\newtheorem{lm}{Lemma}[section]
\begin{document}

\def\spacingset#1{\renewcommand{\baselinestretch}%
{#1}\small\normalsize} \spacingset{1}


\if1\blind
{
  \title{\bf A Geospatial Functional Model For OCO-2 Data with Application on Imputation and Land Fraction Estimation}
  \author{Xinyue Chang$^1$, Zhengyuan Zhu$^1$, Xiongtao Dai$^1$ \\
  and Jonathan Hobbs$^2$ \\
    $^1$Department of Statistics, Iowa State University \\
    $^2$Jet Propulsion Laboratory, California Institute of Technology}
      \date{}
  \maketitle
} \fi

\if0\blind
{
\title{\bf A Geospatial Functional Model For OCO-2 Data with Application on Imputation and Land Fraction Estimation}
\author{}
\date{}
\maketitle
} \fi

\bigskip
\begin{abstract} 
Data from NASA's Orbiting Carbon Observatory-2 (OCO-2) satellite is essential to many carbon management strategies. A retrieval algorithm is used to estimate \ce{CO2} concentration using the radiance data measured by OCO-2. However, due to factors such as cloud cover and cosmic rays, the spatial coverage of the retrieval algorithm is limited in some areas of critical importance for carbon cycle science.  Mixed land/water pixels along the coastline are also not used in the retrieval processing due to the lack of valid ancillary variables including land fraction. We propose an approach to model spatial spectral data to solve these two problems by radiance imputation and land fraction estimation. The spectral observations are modeled as spatially indexed functional data with footprint-specific parameters and are reduced to much lower dimensions by functional principal component analysis. The principal component scores are modeled as random fields to account for the spatial dependence, and the missing spectral observations are imputed by kriging the principal component scores. The proposed method is shown to impute spectral radiance with high accuracy for observations over the Pacific Ocean. An unmixing approach based on this model provides much more accurate land fraction estimates in our validation study along Greece coastlines.
\end{abstract}

\noindent%
{\it Keywords:} Functional Principal Component Analysis, Ordinary Kriging, Remote Sensing Data, Spectral Unmixing.
\vfill

\newpage
\spacingset{1.75} 
\section{Introduction}
\label{sec:intro}
Satellite remote sensing data has been used to provide information on many processes in the Earth system for a long time. Geophysical quantities of interest are often inferred from the radiance spectra directly observed by remote sensing instruments. A growing constellation of satellites are providing estimates of greenhouse gas concentrations globally at fine spatial resolution. One of the more recent effort to estimate the atmospheric carbon dioxide (\ce{CO2}) concentration is through the NASA's Orbiting Carbon Observatory-2 (OCO-2), which provides information on the carbon cycle at the global and regional scales \citep{elderingsci}. Several data-processing and inference stages are executed in translating the observed satellite radiance, termed Level 1 data products, into inferences on carbon sources and sinks \citep{cressiejasa}. The {\it{retrieval algorithm}} implements the estimation of \ce{CO2} concentration from Level 1 data \citep{odell2018}. For OCO-2, the primary retrieval output, or Level 2 product, of interest is $X_{CO2}$, which is the average concentration of carbon dioxide in a column of dry air extending from Earth's surface to the top of the atmosphere. The OCO-2 instrument observes high-resolution spectra of reflected sunlight at wavelengths (colors) with three spectrometers, each focusing on a narrow spectral band of the near infrared portion of the electromagnetic spectrum. The \ce{O2} A-band covers wavelengths with substantial absorption of oxygen; the weak \ce{CO2} and strong \ce{CO2} bands include spectral ranges with carbon dioxide absorption. However, OCO-2 is also sensitive to other atmospheric and surface properties, including cloud coverage and land-ocean transitions. These challenges result in a significant amount of locations with unusable data for the retrieval. Retrieval spatial coverage could improve if we are able to impute spectral observations for missing locations. For a complicated and massive data product like OCO-2, additional ancillary data is often needed and is subject to error. This includes the land fraction estimate used as an input into the retrieval. Based on the imputation algorithm proposed in this work, an unmixing approach has the potential to obtain much more accurate land fraction estimation only using measured radiance and geolocation information.

The model we developed for spectral observation imputation is based on functional data analysis, which went through great developments in both theory and methodology in recent decades. It has been successfully applied to data with various structures such as longitudinal data \citep{zhao2004functional, wang2016functional}, image data \citep{li2019spatial, wang2019simultaneous} and spatial data \citep{delicado2010, kokoszka2019some}. Functional principal component analysis (FPCA, \cite{grenander1950stochastic}), is a widely used tool in FDA. Theory on estimation and inference for FPCA vary depending on how densely the function of interest is observed, i.e., a sparse scenario \citep{yao2005} as opposed to a dense scenario \citep{ramsay}. OCO-2 represents an instance of dense functional data, with each observation a spectral observed at a specific location and time and treated as a function of wavelength. Each spectral consists of thousands of radiances measurements at different wavelengths. Modeling OCO-2 data as dense functional observations, we are able to conduct FPCA to reduce the dimension of the problem and reconstruct radiance function from the reduced space for the task of imputation and land fraction estimation. Several specific challenges outlined in Section \ref{data} requires further consideration in modeling. One of them is that at any given time, the cross-track measurements on OCO-2 are aggregated into eight discrete {\it{footprints}} to meet the storage and downlink bandwidth limits. The footprints correspond to different physical locations on the instrument and are characterized separately in OCO-2 data processing. As a result, the radiance measurements associated with different footprints hold different data characteristics such as mean and variation, which has to be taken into account in modeling.

Another challenge is that by the nature of the physical atmospheric process, the functional observations from OCO-2 in different locations and time are dependent, and the dependence is extreme with locations nearby. The study of spatially dependent functional data has gained much attention recently \citep{martinez2020recent}. Much of the work is built for functional data indexed by locations on grids or lattice \citep{zhang2016, kuenzer2020principal}. While some works are motivated by functional data on irregular spatial locations, they mostly focus on different purposes. \cite{gromenko2012estimation} developed a test for the correlation of two functional spatial fields, \cite{menafoglio2016kriging} proposed a unifying framework for kriging in functional random fields. Although \cite{ruggieri2018comparing} aimed at imputing large gaps in space, it models data as a smooth function across space and time, which is not reasonable for our case. The model in \cite{liu2017} is the most similar to our approach. They treat spatial-temporal data as spatially indexed functional data in time and developed asymptotic results as well as two tests for separability and isotropy, assuming the functional data is observed sparsely in time. In our OCO-2 application, the hyperspectral remote sensing data are observed in both space and time. More importantly, modeling different characteristics from each footprint together is crucial to our analysis, which does not fit into existing frameworks. We model the principal component scores as spatially dependent processes and the mean function as a linear structure dependent on both locations and footprints. This model differs from both \cite{liu2017} and \cite{ruggieri2018comparing} in that we need to have a non-trivial model for the variation among footprints, and we have densely observed functional data and does not need to make any smoothness assumption in wavelength direction.

Treating radiance observations as dense functional data, we propose a geospatial functional model for spatial spectral data to impute missing hyperspectral radiance in the OCO-2 data. Spatial dependence among radiance function is introduced by modeling the FPC scores as spatial processes. Footprint-specific mean radiance functions and measurement error process are specified to account for the heterogeneity across footprints observed in the OCO-2 data. Unlike most dense functional data methodology, this point-wise approach does not need smoothness assumptions and fits spectral data naturally. We develop and implement practical algorithms for parameter estimation and imputation, and establish asymptotic consistency and convergence rate results for the procedure. Simulation studies and validation studies using OCO-2 data have shown that our algorithm achieved high accuracy for radiance imputation at footprints over water. An unmixing approach based on the imputation is also developed to estimate the land fraction over mixed footprints and shown in real OCO-2 data to provide much more accurate land fraction than those offered in the OCO-2 data.

Besides the statistical methodology contributions, our methodology can significantly improve the OCO-2 data by increasing the spatial coverage of the \ce{CO2} retrieval algorithm. 
The imputation approach allows for additional successful Level 2 retrievals, where radiances can be reliably imputed. The OCO-2 operational Level 2 retrieval algorithm has different configurations for land and ocean soundings. Retrievals are not attempted for mixed land/ocean soundings \citep{odell2018} partly due to a lack of reliable land fraction estimates. The more consistent and accurate estimate of land fraction from our model can facilitate retrievals in these mixed cases, further expand the spatial coverage of the OCO2 level 2 retrieval. The unmixed land fraction estimates could also supplement OCO-2's geolocation information.

The rest of the paper is organized as follows. We introduce the structure of OCO-2 data and the variables considered in Section \ref{data}. Then we propose a geospatial functional model for spatial spectral data with different characteristics on different footprints in Section \ref{model}. In Section \ref{estimation}, we discuss the estimation and prediction based on our data model for the purpose of radiance imputation in water area and land fraction correction in mixed regions, and derive asymptotic results that justify the procedure. Lastly, in Section \ref{application}, the spatial coverage and accuracy of the retrieval algorithm are shown to be improved when the proposed radiance imputation algorithm and land fraction estimation procedure were applied to OCO-2 Level 1 data. We conclude in Section \ref{discussion} with some discussion on possible future work. Technical proofs and additional numerical
results are relegated to the Supplementary Material.

\section{OCO-2 Data} \label{data}
OCO-2 is part of a constellation of polar-orbiting satellites known as the A-train and completes approximately 15 orbits per day. The satellite crosses the equator in the early afternoon local time on each orbit. 
\begin{figure}[ht]
    \centering
    \includegraphics[trim = 6cm 0cm 6cm 0cm, clip = true, width = 0.9\textwidth]{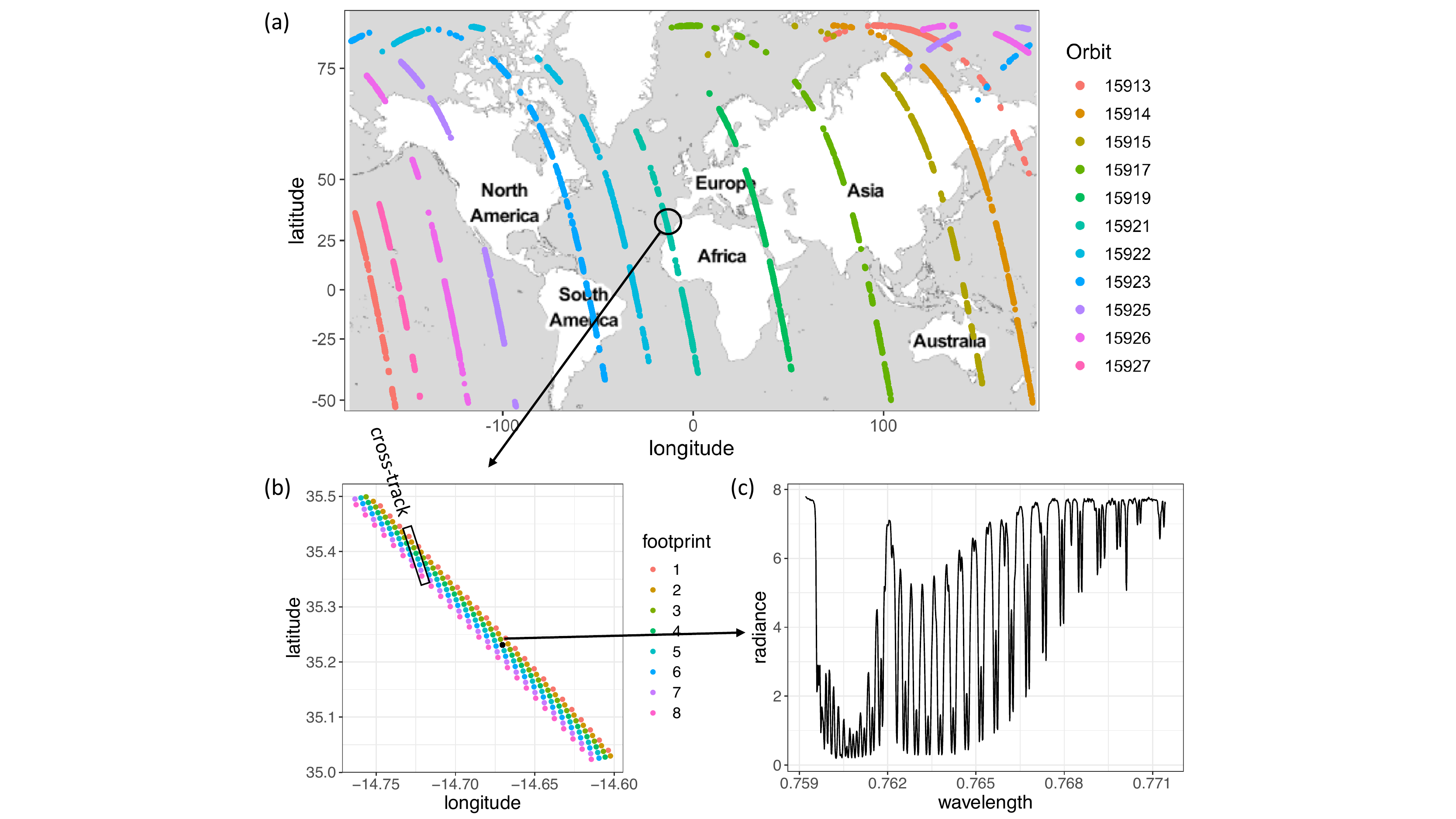}
    \caption{OCO-2 data illustration: (a) orbits completed on 2017-06-29 under glint mode; (b) spatial layout of 8 cross-track footprints from a partial region of orbit 15921; (c) radiance function of \ce{O2} band at (35.2307, -14.6704) on orbit 15921.}
    \label{fig:data_def}
\end{figure}
Orbits alternate between nadir and glint observing modes. In nadir mode, data are collected directly below the satellite, minimizing the optical path length through the atmosphere. In the glint mode, the instrument points at an angle directed toward the glint spot, allowing a high signal over the ocean \citep{ElderingEtal2017}. Our analysis focuses on glint observations, which are available over land and ocean, including along coastlines. As shown in the top panel (a) of Fig \ref{fig:data_def}, glint orbits completed in one day can be far apart and will cover worldwide after 16 days. The OCO-2 field of view is approximately 10 km wide along an orbit track, and this spatial orientation translates to physical positions on the focal plane arrays (FPAs) for each of the spectrometers on the instrument. In order to meet bandwidth limitations for storage and downlink from the satellite, the cross-track spectra are aggregated into eight discrete {\it{footprints}}. The panel (b) of Fig \ref{fig:data_def} is a zoom-in presentation of a partial region of orbit 15921, displays how eight footprints are distributed along cross-tracks. Because the footprints correspond to different physical locations on the instrument, they are characterized separately in OCO-2 data processing \citep{CrispCal}.

\begin{figure}[ht]
   \centering
    \begin{subfigure}[t]{.49\textwidth}
        \centering
 		\includegraphics[trim = 10cm 0 10cm 5cm, clip = true, width = \textwidth]{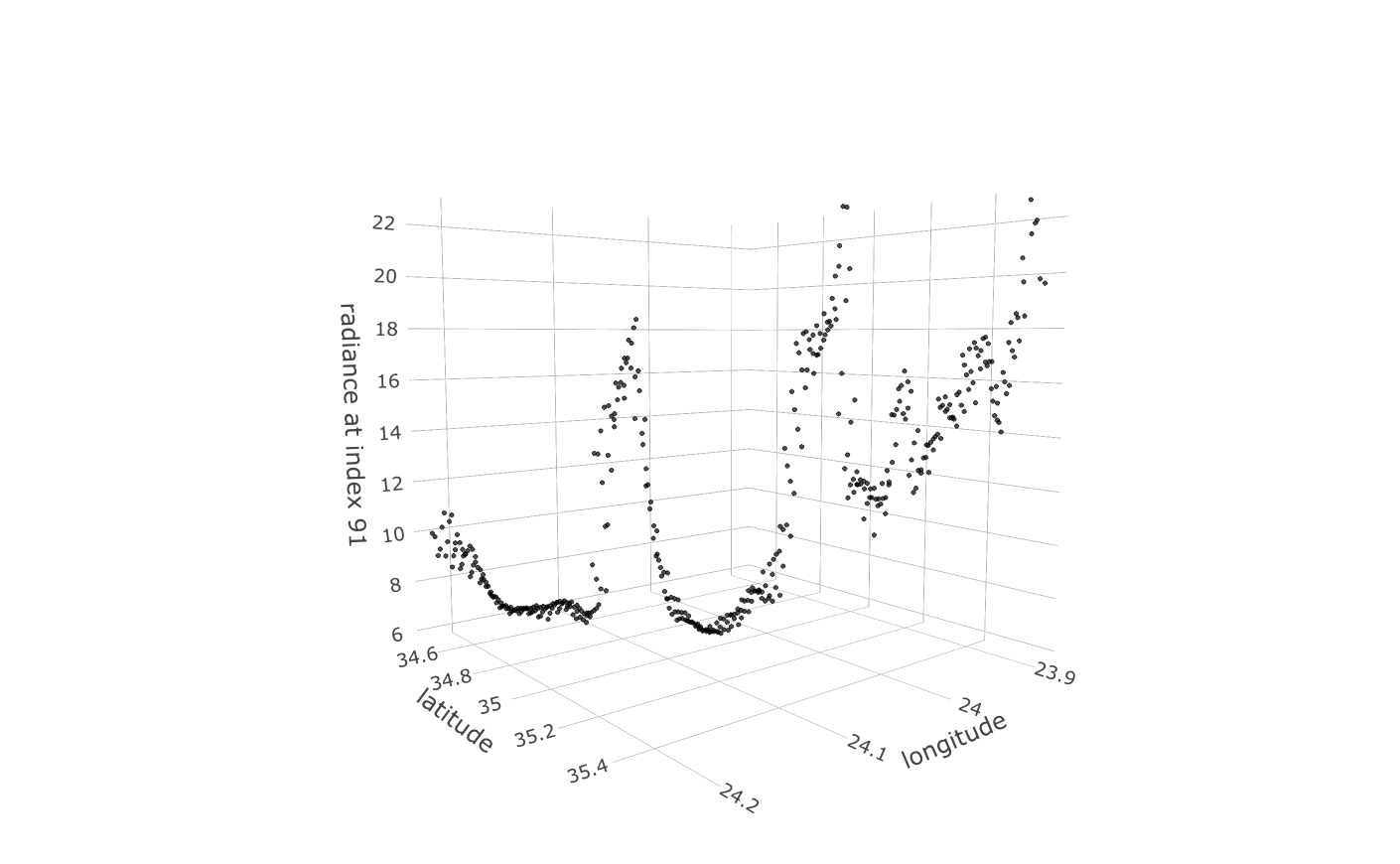}
 		\caption{}
 		\label{fig:91_rads_3Dplot}
    \end{subfigure}
 	\begin{subfigure}[t]{0.49\textwidth}
 		\centering
 		\includegraphics[width = \textwidth]{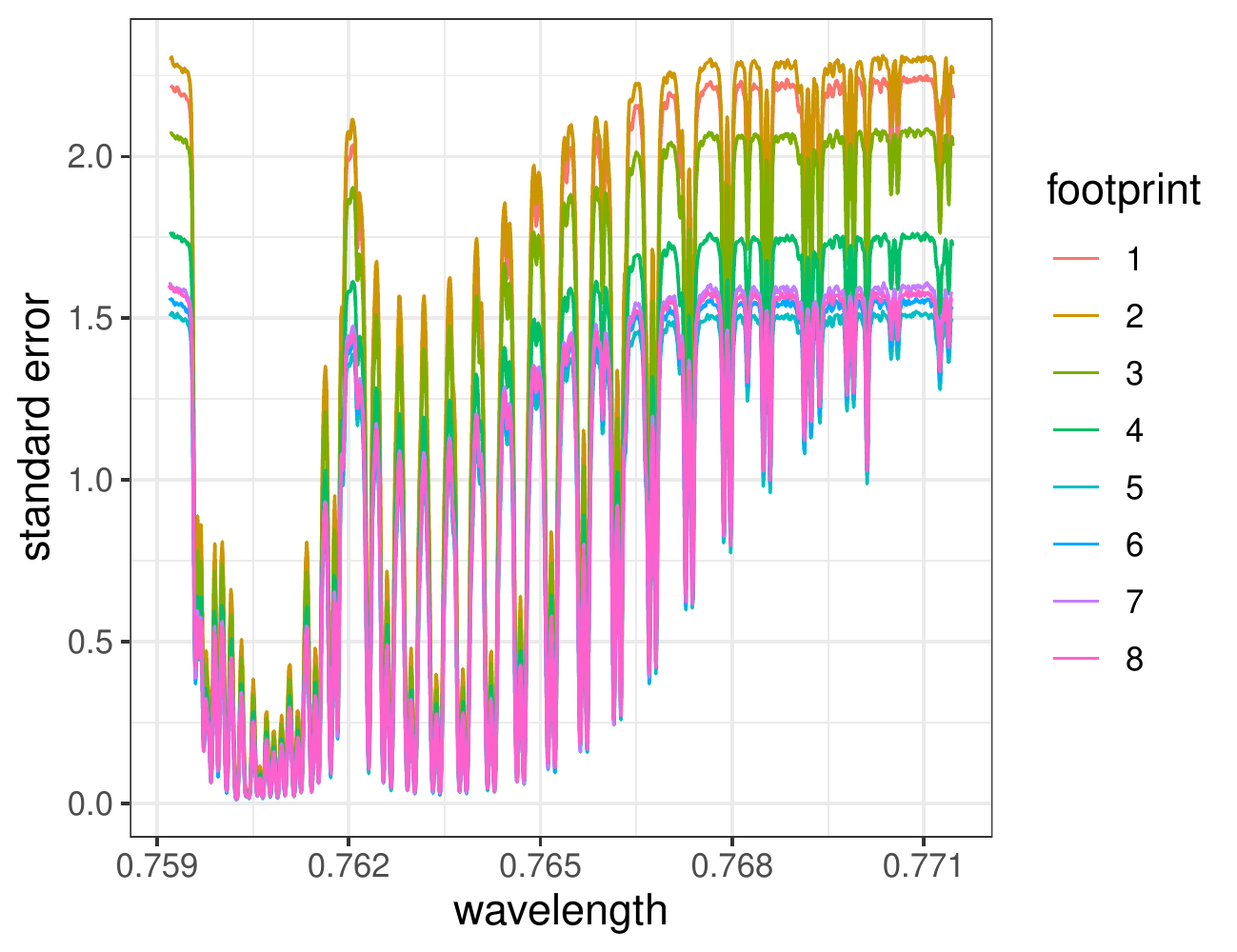}
 		\caption{}
 		\label{fig:measerror_ftprint}
 	\end{subfigure}
 	\caption{Measurement error at different wavelengths depends on footprints: (a) radiance at wavelength index 91 in a partial area of orbit 10575; (b) measurement error estimates by footprint in region within latitude [34.3, 34.8] on orbit 10575.}
 \end{figure}
We use OCO-2 Level 1 products, which include latitude, longitude, orbit, footprint, and land fractions for locations and times of interest. Each unique location (corresponding to one footprint) and time defines a single observation, or {\it{sounding}}. Level 1 data also includes the wavelengths and corresponding measured radiances for each sounding. As discussed in the introduction, the wavelengths and radiances are categorized into 3 bands, with 1016 observations made for each before filtering. For simplicity here, we will only consider radiances in the \ce{O2} band, which refers to wavelengths (colors) within the 0.765 micron molecular oxygen A-band (Fig \ref{fig:data_def}c). The wavelength indices, $j = 1, \ldots, 1016$ correspond to physical positions on the OCO-2 spectrometer, and each position is termed as {\it{spectral sample}} in the OCO-2 documentation. The wavelength corresponding to each sample varies slightly across soundings in space and time, but the within-sample variability is small compared to the monotonic change across samples with increasing $j$. Therefore, the recorded wavelength can be regarded as fixed, and it is equivalent to treat radiance as a function in discrete wavelength indices. Finally, throughout this paper, the measured radiance is reported in the unit of $10^{19} (\text{photons} \; \text{m}^{-2} \text{sr}^{-1} \mu \text{m}^{-1})$. 
 
  \begin{figure}[ht]
 	\begin{subfigure}{.49\textwidth}
 		\centering
 		\includegraphics[width = \textwidth]{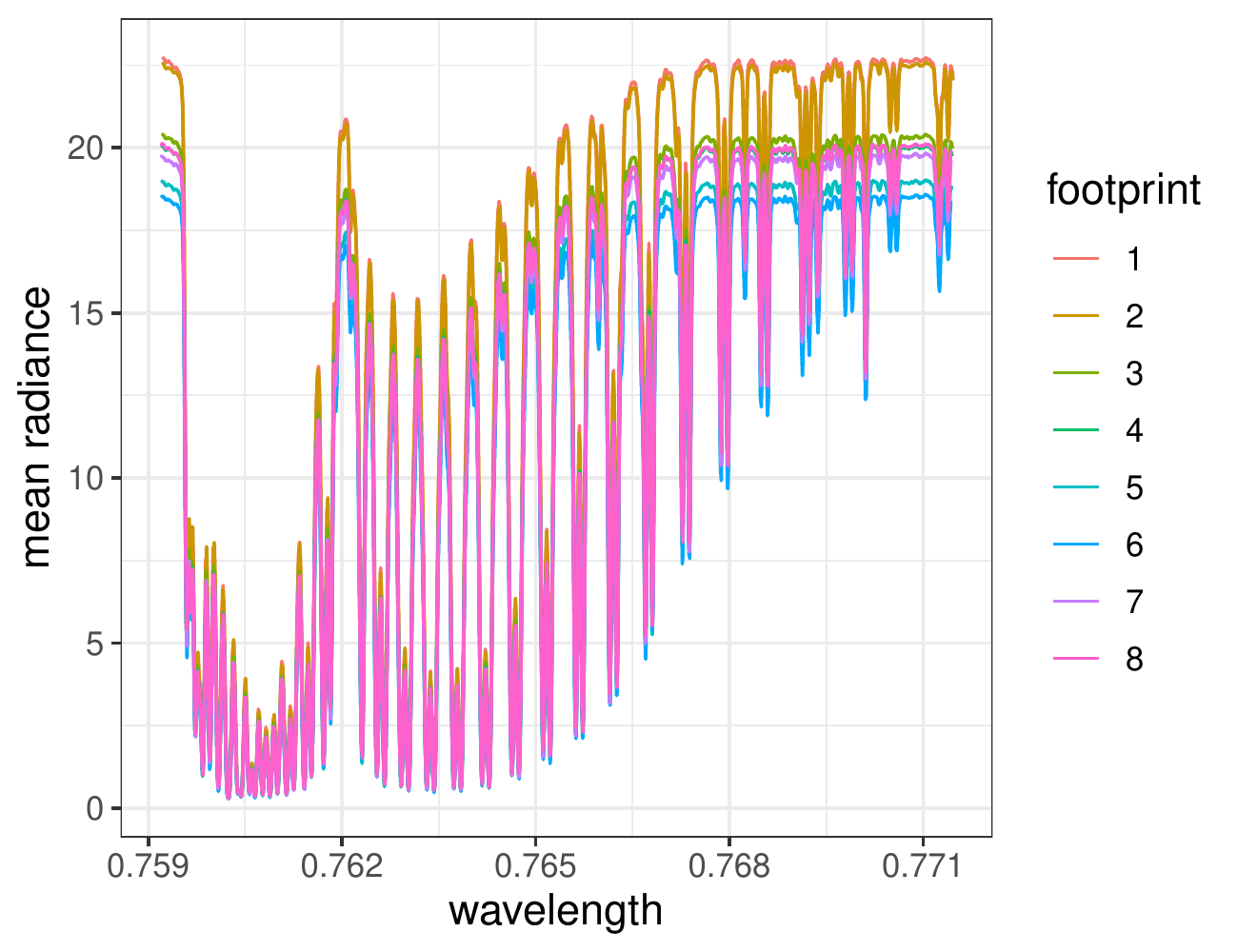}
 	\end{subfigure}
 	\begin{subfigure}{.49\textwidth}
 		\centering
 		\includegraphics[width = \textwidth]{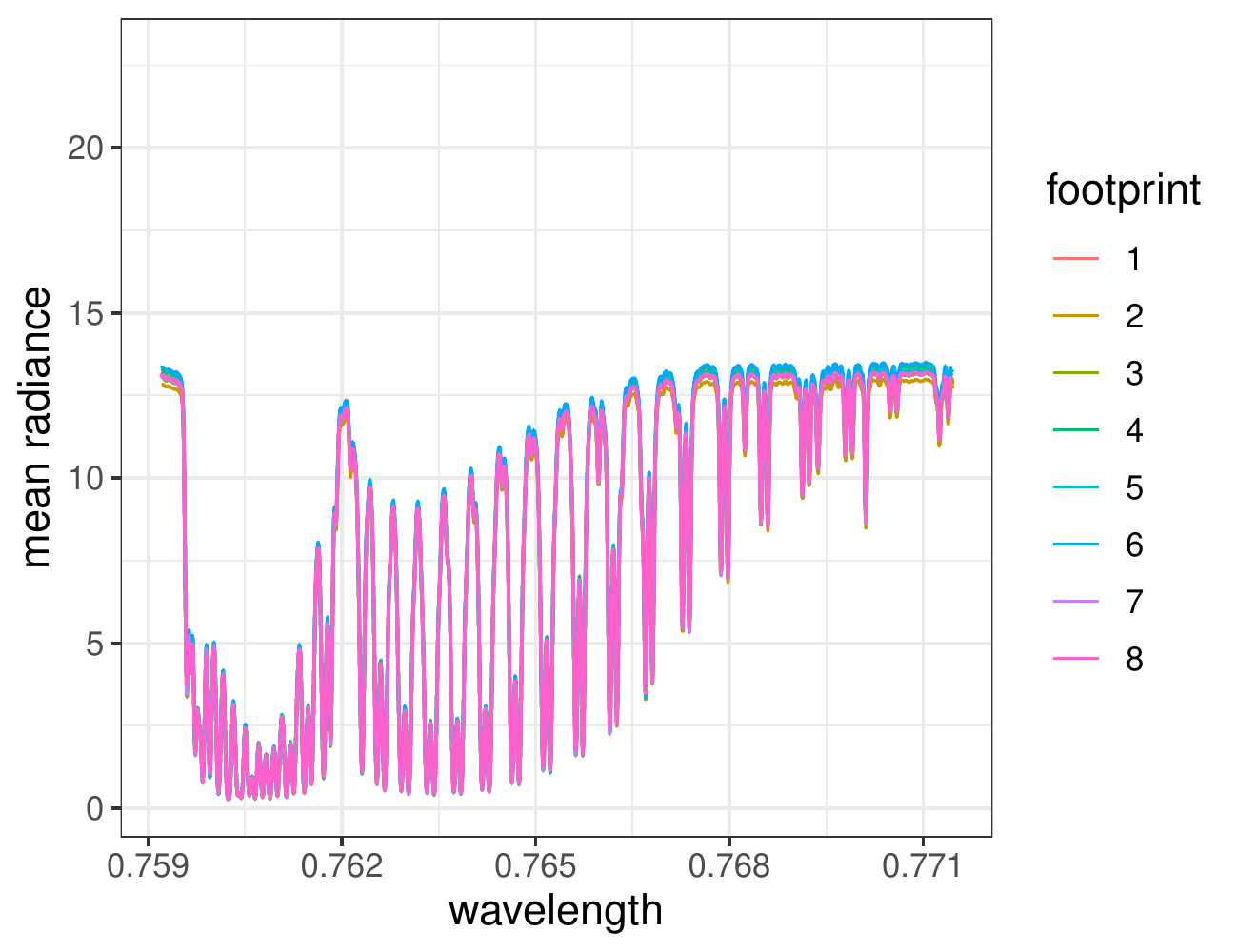}
 	\end{subfigure}
     \caption{Mean radiance by footprint in regions within latitude [34, 34.5] and [35, 35.5] on orbit 10575}
 \label{fig:mean_ftprint}
 \end{figure}
Due to the fact that eight footprints effectively constitute eight instruments, we found a significant difference in the variability of radiance across footprints. Preliminary analysis revealed that the standard error of measurement error varies at different wavelengths, which also depends on footprints (Fig \ref{fig:measerror_ftprint}). Here standard deviation is estimated by 2nd order differencing along spatial direction (eqn. \ref{meas_error_est}) since radiance as a function of location is smooth in that direction as shown in Fig \ref{fig:91_rads_3Dplot} compared to Fig \ref{fig:data_def}c. Due to this variation, the standard approaches to handle measurement errors in the dense FDA \citep{castro1986, yao2003} are no longer applicable and need to be adapted to the specific case concerned in this paper. Besides that, as demonstrated in Fig \ref{fig:mean_ftprint}, the naive moving-window mean radiance estimate (averaging observations within a window of locations) varies across locations and footprints. These data characteristics motivate motivated us to specify footprint-dependent mean functions and error process in our models detailed in the following two sections.

\section{Geospatial Functional Model based on FPCA}\label{model}

\subsection{Functional Data}
For the \ce{O2} band we are interested in, we treat the radiance $f(w; \bds_{i})$ in a sounding location $\bds_{i}\trans = (L_{i}, l_{i}) = (latitude, longitude)$ as a function of wavelengths, i.e., function in $w$ on a compact domain $\mathcal{W}$. 
Here the index $i$ indicates both the footprint and the time and is ordered by footprint (from 1 to 8) followed by the sounding time. $\bds_{i}$ is the spatial location of the radiance for the particular footprint/sounding time combination indicated by $i$, and $\mathcal{S} = \{\bds_1, \ldots, \bds_N\}$ is the area of interest. We model measured radiance in wavelength at location $\bds_i$ as
\begin{align}\label{fpca_model0}
r(w; \bds_{i}) = f(w; \bds_{i}) + \epsilon_{i}(w), \quad  i = 1, \ldots, N, 
\end{align}
where we assume the measurement error process $\epsilon_{i}(w)$ is independent of $f(w; \bds_i)$. Within an orbit, the footprint of $\bds_i$ is denoted as $p_i = q(\bds_i)$, where $q$ is a known function which determines the footprint corresponding to a location through the status of the satellite. Inspired by the findings in Fig \ref{fig:measerror_ftprint}, we assume that the measurement error $\epsilon_{i}(w)$ is distributed as $\sigma_{p_i}(w)e_{i}(w)$, where $e_{i}(w), i = 1, \ldots, N$ are i.i.d. stochastic processes with mean zero and variance 1, and $\sigma_p(w)$ is the variance function for footprint $p$, a bounded function on $\mathcal{W}$ for $p = 1, \ldots, 8$. Because of the real data characteristic shown in Fig \ref{fig:mean_ftprint}, we model the mean function $\mu(w; \bds_i) = \E\{f(w; \bds_i)\}$ to be dependent on location $\bds_i \in \mathcal{S}$ and continuous in $w \in \mathcal{W}$. Furthermore, we observe that measured radiance varies across different footprints and can be described by a linear relationship with location as covariates, and specify the mean function as a linear model with footprint specific coefficients for each fixed wavelength,
\begin{align} \label{mean_model}
 \mu(w; \bds_i) = \sum_{p=1}^8 I\{q(\bds_i) = p\} \{\beta_0^{p} (w) + \bds_i \trans \bdbeta_1^{p} (w)\}, \quad w \in \mathcal{W}.
\end{align}

Let $R_f(w, w')$ be the continuous covariance function of $f(w; \bds_i)$ at any given $\bds_i \in \mathcal{S}$. By the Karhunen–Lo\`{e}ve theorem,
\begin{align} \label{fpca_model0.5}
    f(w; \bds_i) - \mu(w; \bds_i) = \sum_{k = 1}^{\infty} \xi_k(\bds_i) \phi_k(w), \quad w \in \mathcal{W}, 
\end{align}
where $\xi_k(\bds_i)$ are uncorrelated principal component scores with mean zero and variance $\lambda_k$, and $\phi_k$ are eigenfunctions of $R_f$. 
By decomposing $\epsilon_i(w)$ onto the orthonormal basis $\phi_k$ in the Hilbert space, model (\ref{fpca_model0}) can be written as 
\begin{align} \label{fpca_model0_alt}
    r(w; \bds_i) = \mu(w; \bds_i) + \sum_{k = 1}^{\infty} \{\xi_k(\bds_i) + e_{ik}\} \phi_k(w), \quad w \in \mathcal{W}, 
\end{align}
where $e_{ik} = \int \epsilon_i(w)\phi_k(w) dw$ are independent random variables with mean zero and variance $\tau_k(p_i)$ depending on the location's footprint. It is noted that $e_{ik}$ are not necessarily independent over $k \geq 1$. 
Here $\sum_{k=1}^{\infty} e_{i,k}\phi_k(w)$ in (\ref{fpca_model0_alt}) represents the error process $\epsilon_i(w)$ in (\ref{fpca_model0}), and each footprint has a specific covariance function for its error process, 
\begin{align*}
R_{\epsilon, p}(w, w') & \equiv \cov\{\epsilon_i(w), \epsilon_i(w')\} \\
& = \sigma_{p}(w) \sigma_{p}(w') R_e(w, w') = \sum_{k=1}^{\infty} \sum_{k'=1}^{\infty} \cov(e_{ik}, e_{ik'})\phi_k(w) \phi_{k'}(w')
\end{align*}
for $\bds_i \in \mathcal{S}_p = \{\bds_i: q(\bds_i) = p\}$, where $R_e(w, w') = \cov\{e_i(w), e_i(w')\}$, $w,w' \in \mathcal{W}$ is assumed to be continuous.

\subsection{Spatial Dependence}
We model the functional spatial dependence among $f(w; \bds_i)$ through the FPC scores $\xi_k(\bds_i)$, and assume they are mean zero, second order stationary and isotropic random fields in space. Let $G_k(\cdot)$ be the covariance function for $\xi_k(\bds_i)$, and let $u_k(\bds_i) = \xi_k(\bds_i) + e_{ik}$. For any two points $\bds_i$ and $\bds_{i'}$, we have
\begin{align}
H_k(\bds_i, \bds_{i'}) \equiv \cov\{u_k(\bds_i), u_k(\bds_{i'})\} = G_k\{d(s_i, s_{i'})\} + \tau_k(p_i)I(\bds_i = \bds_{i'}),
\end{align}
where $d(\cdot, \cdot)$ denotes the great circle distance between two geolocations, and $H_k(\bds_i, \bds_i) = \lambda_k + \tau_k(p_i)$ is implied.
Based on model representation (\ref{fpca_model0_alt}), $u_k(\bds_i)$ is the projection of the combined process $f(w; \bds_i) + \epsilon_i(w)$ onto the Hilbert space spanned by orthonormal basis $\phi_k$. Because of the error component $e_{ik}$, the covariance function $H_k(\cdot)$ has a footprint-specific nugget effect. Note that we assume $G_k(\cdot)$ does not differ among footprints, which is supported by the empirical evidence. Models assuming the footprint-specific $G_k^p(\cdot)$ did not improve the prediction performance. 

In practice, the first few principal components capture most of the variation. Assuming $K < \infty$, we consider the truncated version,
\begin{align} \label{fpca_model1}
f(w; \bds_i) = \mu(w; \bds_i) + \sum_{k = 1}^{K} \xi_k(\bds_i) \phi_k(w), \quad w \in \mathcal{W}.
\end{align}
Thus, $f(w; \bds_i)$ is assumed to be a realization of a stationary spectral process in our study region $\mathcal{S}$ (approximately $\pm 0.5^{\circ}$ in latitude). It is assumed to have non-homogeneous mean radiance function $\mu(w, \bds_i)$ across spatial locations $\bds_i \in \mathcal{S}$, and the residuals defined as $f(w; \bds_i) - \mu(w; \bds_i)$ is approximated by the projection of the error process to the function space spanned by the first $K$ eigenfunctions.

\subsection{Mixing Process of Water and Land}
\label{model:mixing}
Let $\mathcal{M}$ be an area with mixed pixels such as a coastline crossing, where land fractions are between 0 and 1. In general, physical characteristic including radiance and $X_{CO2}$ concentration of mixed locations could be influenced by both water and land areas. It is natural to model the observed radiance process $r(w; \bds_i), w \in \mathcal{W}$ at mixed sounding location $\bds_i \in \mathcal{M}$ as a linear combination of possible water radiance $f_w(w; \bds_i)$ and land radiance $f_l(w; \bds_i)$ as follows, 
\begin{align} \label{mix_model}
r(w; \bds_i) = \alpha_i f_l(w; \bds_i) + (1-\alpha_i) f_w(w; \bds_i) + \epsilon_{i}(w),
\end{align}
where $\alpha_i$ is the land fraction of the mixed location $\bds_i$. 
The components of the mixture, the two underlying processes $f_l(w; \bds_i)$ and $f_w(w; \bds_i)$, are modeled as we described in (\ref{fpca_model1}), i.e.,
\begin{align*}
f_l(w; \bds_i) & = \mu_l(w; \bds_i) + \sum_{k = 1}^{K_l} \xi_k^l(\bds_i) \phi_k^l(w), w \in \mathcal{W}, \\
f_w(w; \bds_i) & = \mu_w(w; \bds_i) + \sum_{k = 1}^{K_w} \xi_k^w(\bds_i) \phi_k^w(w), w \in \mathcal{W}.
\end{align*}

\section{Estimation and Prediction} \label{estimation}

\subsection{Dense FPCA}
Without loss of generality, we assume the trajectories of $r(w; \bds_i)$ are fully observed on the domain set $\mathcal{W}$. Let $\bdx_p$ be the matrix of location covariates in the footprint-specific set $\mathcal{S}_p = \{\bds_i: q(\bds_i) = p\}$, and hence $\{\bds_1, \ldots, \bds_{N}\}\trans = (\bdx_1\trans, \cdots, \bdx_8\trans)\trans$. Similarly, we denote the measured radiance at wavelength $w \in \mathcal{W}$ as $\BY(w) = (\BY_{1}(w)\trans, \cdots, \BY_{8}(w)\trans )\trans$, where $\BY_{p}(w)$ is the vector of measured radiance $r(w; \bds_i)$ with $\bds_i \in \mathcal{S}_p$. So the full design matrix $\BX$ corresponding to the mean function model (\ref{mean_model}) is
\begin{align*}
    \BX = \begin{bmatrix}
   \bdone & \bdx_1 & & & & \\
   & & \bdone & \bdx_2 &  & \\
   & & & \ddots & \ddots & \\
   & & & & \bdone & \bdx_8 
\end{bmatrix}, 
\end{align*}
where the empty entries are all zeros. As a linear model for each wavelength, the footprint-specific coefficients $\bdbeta(w) = \{\beta_0^1(w), \bdbeta_1^1(w)\trans, \ldots, \beta_0^8(w), \bdbeta_1^8(w)\trans \}\trans $ at wavelength $w \in \mathcal{W}$ can be estimated by
\begin{align} \label{beta_estimate}
\wh \bdbeta(w) = (\BX\trans \BX)^{-1} \BX\trans \BY(w),
\end{align}
further we can estimate the location-dependent mean function as 
\begin{align} \label{mean_estimate}
    \wh \mu(w; \bds_i) = \sum_{p=1}^8 I\{q(\bds_i) = p\} \{\wh \beta_0^{p} (w) + \bds_i \trans \wh \bdbeta_1^{p} (w)\}, \quad w \in \mathcal{W},
\end{align}
where $I(\cdot)$ is an indicator function. 

Based on the data model (\ref{fpca_model0}), there are two parts in the covariance of observed radiance, 
\begin{align*}
\cov\{r(w; \bds_i), r(w'; \bds_i)\} = R_f(w, w') + \sum_{p=1}^8 I \{q(\bds_i) = p\} R_{\epsilon, p}(w, w'), \quad w, w \in \mathcal{W}, 
\end{align*}
for any $\bds_i \in \mathcal{S}$. Define $\Delta_p(w; \bds_i) = r(w; \bds_{i_1}) - 2r(w; \bds_i) + r(w; \bds_{i_2})$, where $\bds_{i_1} \in \mathcal{S}_{p}$ is the nearest location of $\bds_i$ in the positive latitude direction and $\bds_{i_2} \in \mathcal{S}_p$ is the nearest locations of $\bds_i$ in the negative latitude direction.
We can estimate the covariance function for footprint $p$ by averaging the second order differencing along the spatial direction, 
\begin{align} \label{meas_error_est}
\wh R_{\epsilon, p}(w, w')  & = \dfrac{1}{6 \wt N_p} \sum_{\substack{\bds_i \in \mathcal{S}_p}} \Delta_p(w; \bds_i)\Delta_p(w'; \bds_i),
\end{align}
where $\wt N_p$ is the number of points having both valid $\bds_{i_1}$ and $\bds_{i_2}$ in $\mathcal{S}_p$.
Next we can estimate the covariance function as 
\begin{align} \nonumber
 \wh R_f(w, w') = & \dfrac{1}{N-1}\sum_{\bds_i \in \mathcal{S}} \left\{r(w; \bds_i) - \wh \mu(w; \bds_i)\right\} \left\{r(w'; \bds_i) - \wh \mu(w'; \bds_i)\right\} \\ \label{cov_func_est}
    & - \dfrac{1}{N-1} \sum_{p = 1}^{8}N_p \wh R_{\epsilon, p}(w, w').
\end{align}
As shown in Fig \ref{fig:data_def}, locations within each footprint are aligned in one line, thus the nearest two locations can be referred through index after ordering locations. In practice, we enumerate the locations ordered by latitude as $\bds_i, i = 1, \ldots, N_p$ and set $\bds_{i_1} = \bds_{i-1}, \bds_{i_2} = \bds_{i+1}$.

Then eigenfunctions are estimated by the sample eigenfunction of $\wh R_f$, and in practice is obtained by matrix decomposition with respect to wavelength,
\begin{align} \label{cov_decompose}
\wh R_f(w, w') = \sum_{k=1}^K \wh \lambda_k \wh \phi_k(w) \wh \phi_k(w').
\end{align}
The number of principal components $K$ is selected as the truncation to which point certain variance can be explained, i.e., fraction of variance explained (FVE), 
\begin{align*}
    \mathrm{FVE}(K)=\frac{\sum_{k=1}^{K} \lambda_{k}}{\sum_{j=1}^{\infty} \lambda_{j}}, \quad \widehat{ \mathrm{ FVE}}(K)=\frac{\sum_{k=1}^{K} \wh \lambda_{k}}{\sum_{j=1}^{\infty} \wh \lambda_{j}}.
\end{align*}
The corresponding $u_k(\bds_i)$ are estimated by numerical integration,
\begin{align}  \label{xik_estimate}
\wh u_k(\bds_i) = \int \left\{r(w; \bds_i) - \wh \mu(w; \bds_i)\right\}  \wh \phi_k(w) dw,
\end{align}
and the variance of $e_{ik}$ is estimated similarly,
\begin{align} \label{tauk_estimate}
 \wh \tau_k(p) = \int \int \wh R_{\epsilon, p}(w, w') \wh \phi_k(w) \wh \phi_k(w')dw dw',
\end{align}
for $p = 1, \ldots, 8$. 

\subsection{BLUP for Principal Component Scores}
Suppose $\bds_0 = (L_0, l_0)$ is a sounding location with no measured radiance observed. Based on our assumptions, the principal components scores $\xi_k(\bds_i)$ form a second order stationary and isotropic random field. We use the best linear unbiased predictor (BLUP) to predict $\xi_k(\bds_0)$ using estimated $u_k(\bds_i)$. Let $\gamma_k(h; \theta_k)$ be the semivariogram of $\xi_k(\bds_i)$, which is a function of distance $h$ and with parameters $\theta_k \in \Theta$. We define the  semivariogram estimator $\wh \gamma_k(h)$ as 
\begin{align} \nonumber
    \wh \gamma_k(h_l) = & \dfrac{1}{2N(h_l)} \sum_{d(\bds_i,\bds_j)=h_l} \{\wh u_k(\bds_i) - \wh u_k(\bds_j)\}^2 \\ \label{sample_variogram}
     & - \dfrac{1}{2N(h_l)} \sum_{d(\bds_i,\bds_j)=h_l} \{\wh \tau_k(p_i) + \wh \tau_k(p_j)\}, \quad l = 1, \ldots, L,
\end{align}
for a collection of $L$ distance bins, where $N(h_l)$ is the number of pairs with distance $h_l$.
We use weighted least squares to estimate $\theta_k$ with $V_k(\theta_k)$ as the weight matrix for the $k$th component, 
\begin{align} \label{thetak_estimate}
    \wh \theta_k = \underset{\theta_k \in \Theta}{\arg\min} \{\wh \bdgamma_k - \bdgamma_k(\theta_k)\}\trans V_k(\theta_k)  \{\wh \bdgamma_k - \bdgamma_k(\theta_k)\},
\end{align}
where $\wh \bdgamma_k = \{\wh \gamma_k(h_1), \ldots, \wh \gamma_k(h_L)\}\trans$ and $\bdgamma_k(\theta_k) = \{\gamma_k(h_1; \theta_k), \ldots, \gamma_k(h_L; \theta_k)\}\trans$. In practice, the exponential model is used as our choice of $\gamma_k(h; \theta_k)$. For $V_k(\theta_k)$, diagonal matrix using $N(h_l)/h_l^2$ and $N(h_l)$ are both valid.

Let $\bdu_k = \{u_k(\bds_1), \ldots, u_k(\bds_N)\}\trans$, $\BSigma_k = \var(\bdu_k)$ and $\bdnu_k = \cov\{\xi_k(\bds_0), \bdu_k\}$, which can be constructed by the covariance function $G_k(\cdot)$ and $\tau_k(p), p = 1, \ldots, 8$. With $\wh \theta_k$ and $\wh \tau_k(p)$ computed using the methods described above, it is straight forward to obtain $\wh \BSigma_k$ and $\wh \bdnu_k$. In our implementation, the estimated weighting scores $\wh u_k(\bds_i)$ is used in place of $u_k(\bds_i)$ in calculation, and we estimate $\xi_k(\bds_0)$ by the plug-in ordinary kriging predictor, 
\begin{align} \label{xik_predict}
\wh \xi_k(\bds_0) = \wh \kappa + \wh \bdnu_k\trans \wh \BSigma_k^{-1} (\wh \bdu_k - \bdone \wh \kappa),
\end{align}
where $\wh \kappa = \left(\bdone \trans \wh \BSigma_k^{-1} \bdone \right)^{-1} \bdone \trans \wh \BSigma_k^{-1}\wh \bdu_k$. Finally the radiance function at the location $\bds_0$ is predicted as
\begin{align} \label{predict_eqn}
\wh f(w; \bds_0) = \wh \mu(w; \bds_0) + \sum_{k=1}^{K} \wh \xi_k(\bds_0) \wh \phi_k(w), \quad w \in \mathcal{W},
\end{align}
which is used as the imputation at location $\bds_0$.

\subsection{Land Fraction Estimation}
\subsubsection{Unmixing Approach}
For any mixed location $\bds_i$, we can find the nearest homogeneous water area $\mathcal{S}_w$ and land area $\mathcal{S}_l$. Using the imputation method described above (\ref{predict_eqn}), we can obtain two imputation of the radiance from the model on $\mathcal{S}_w$ and $\mathcal{S}_l$ respectively. On any fixed wavalength $w_j \in \mathcal{W}$, $j = 1, \ldots, m_i$,  the water and land radiance estimation are $\wh f_w(w_j; \bds_i)$ and $\wh f_l(w_j; \bds_i)$. Under the mixing model (\ref{mix_model}), we can estimate land fraction $\alpha_i$ by minimizing the sum of squared error loss $||r(w_j; \bds_i) - \alpha \wh f_l(w_j; \bds_i) - (1-\alpha) \wh f_w(w_j; \bds_i) ||_2^2$, which has the solution 
\begin{align} 
    \label{unmixing}
    \wh \alpha_i^k & = \dfrac{\sum_{j = 1}^{m_i} \{r(w_j; \bds_i) - \wh f_w(w_j; \bds_i)\} \{\wh f_l(w_j; \bds_i) - \wh f_w(w_j; \bds_i)\}}{\sum_{j = 1}^{m_i} \{\wh f_l(w_j; \bds_i) - \wh f_w(w_j; \bds_i)\}^2}.
\end{align}
Although model (\ref{mix_model}) does not assume constant variance through different wavelengths, we use ordinary least square to estimate land fractions, because it would be impossible to estimate a suitable covariance function of measurement error using limited points in mixed region. Ordinary least squares performs well in simulation and real data. 
To demonstrate the value of this kriging based unmixing approach, we compare it with a simple linear interpolation method without accounting for the measurement errors. Suppose the nearest land footprint of $\bds_i$ is $\bds_i^l$ and water footprint is $\bds_i^w$, this method estimate $\alpha$ by minimizing $||r(w_j; \bds_i) - \alpha r(w_j; \bds_i^l) - (1-\alpha) r(w_j; \bds_i^w) ||_2^2$, which has the solution 
\begin{align}
    \label{interpolation}
	\wh \alpha_i^t & = \dfrac{\sum_{j = 1}^{m} \{r(w_j; \bds_i) - r(w_j; \bds_i^w) \} \{r(w_j; \bds_i^l) - r(w_j; \bds_i^w) \}}{\sum_{j = 1}^m \{ r(w_j; \bds_i^l) - r(w_j; \bds_i^w)\}^2}.
\end{align}

\subsubsection{Simulation}
We conducted a simulation study imitating OCO-2 data to illustrate the advantage of the kriging based unmixing approach (\ref{unmixing}) over the interpolation method (\ref{interpolation}). For simplicity, data were only simulated for a single footprint, with locations equally spaced in latitude and longitude. The location in the middle (35.49881, 23.83578) is taken to be a mixed site with land fraction between 0 and 1. The portion south of the middle location is assumed to be water, and to the north is assumed to be land. The layout is similar to Fig \ref{fig:data_def}b except that we consider only one footprint and the latitude range is larger. The hyper-spectral observations at these fixed locations are simulated based on the model in Section \ref{model:mixing} with components specified as following.
\begin{itemize}
    \item[1.] The wavelength dependent coefficients $\bdbeta(w_j)$ and eigenvector $\phi_k(w_j)$ in water and land are borrowed from a real data sample, latitude between 34.7129 and 35.7523 in orbit 05216 recorded on 06-25-2015. For our locations design, only the coefficients corresponding to footprint $p=4$ are used.
    \item[2.] For both land and water areas, the first principal component is assumed to be multivariate normal with covariance defined by exponential models (great circle distance $h$ in km): $G_1(h) = 5\exp(-h/10)$ for water area, and $G_1(h) = 10\exp(-h/7)$ for land area.
    \item[3.] For latter principal components, we assume $\xi_2^w(\bds_i) \overset{iid}{\sim} \mathcal{N}(0, 2)$, $\xi_2^l(\bds_i) \overset{iid}{\sim} \mathcal{N}(0, 2)$ and $\xi_3^l(\bds_i) \overset{iid}{\sim} \mathcal{N}(0, 1)$. 
    \item[4.] As observed in Fig \ref{fig:measerror_ftprint}, the standard error of measurement error at one footprint looks like a radiance function, which is further confirmed by our finding that it is proportional to the mean radiance within a small area. Let $\rho$ be the ratio, then
    \begin{align*}
        \sigma^t(w_j) = \rho \dfrac{1}{|\mathcal{S}_t|}\sum_{\bds_i \in \mathcal{S}_t} \mu(w_j; \bds_i), j = 1, \ldots, m_i,
    \end{align*}
    where $t = l, w$ and $p$ is dropped in the single footprint setting. The i.i.d. Lipschitz continuous stochastic process is generated as $
        e_i(w) = \nu_{i1} (1/\sqrt{c_w/2})\sin\{\pi (w - w_{min})/c_w\} 
     + \nu_{i2}(1/\sqrt{c_w/2})\cos\{\pi (w - w_{min})/c_w\}$, $w \in \mathcal{W}$, where $\nu_{i1}$ and $\nu_{i2}$ are independent and both are i.i.d. $\mathcal{N}(0,c_w/2)$, $c_w = |\mathcal{W}|$, $w_{min} = min\{w \in \mathcal{W}\}$.
    \item[5.] The land fraction $\alpha$ at the middle mixed point was generated from $\mathrm{Uniform}(0,1)$.
\end{itemize}

\begin{figure}[htbp]
    \centering
    \includegraphics[width = 0.7\textwidth]{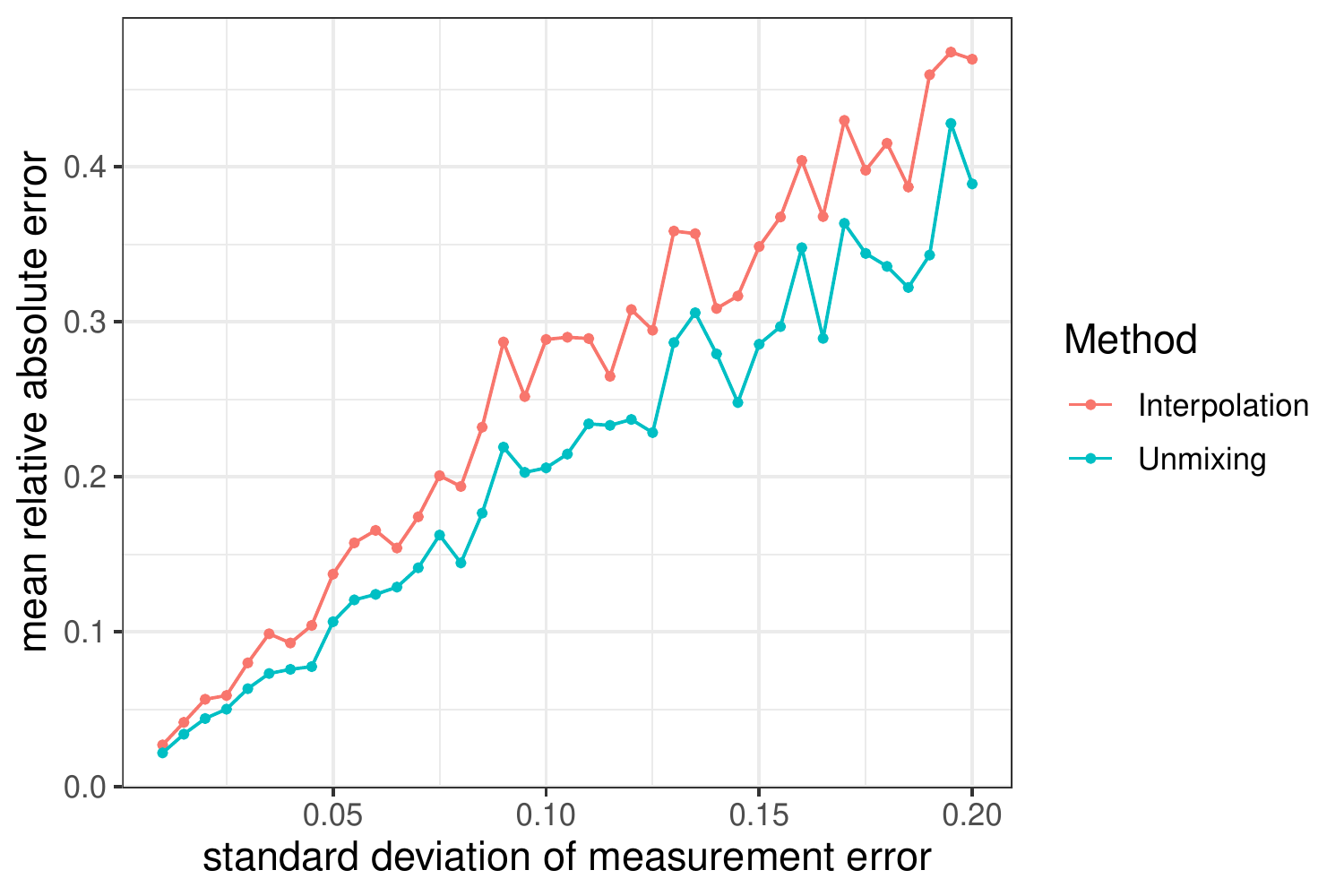}
    \caption{Relative absolute error against relative standard deviation of measurement error for both methods.}
\label{fig:unmixing_sim}
\end{figure}
Let $\wh \alpha_i$ be the land fraction estimator for location $\bds_i$, which can be either $\wh \alpha^t_i$ or $\wh \alpha^k_i$. The relative absolute error for the estimator $\wh \alpha_i$ is defined as $
    e_i = \dfrac{|\wh \alpha_i - \alpha|}{\alpha}
$.
We let relative standard deviation $\rho$, the ratio of the standard deviation to the mean radiance wavelength, vary from 0.01 to 0.2 and performed 200 simulations for each choice. Results are summarized from the 200 simulations by using 0.1 trimmed mean, which are shown in Fig \ref{fig:unmixing_sim}. 
The mean relative absolute error increases for both methods as data contains more noise, and unmixing approach based on FPCA and kriging is more stable and accurate than the simple linear interpolation across all levels of error. Based on our simulation results not shown here, benefits of unmixing approach over interpolation also become more dominant as the covariance of the measurement error process $e_i(w)$ becomes weaker.

\section{Theoretical Results} \label{asymptotics}
To derive asymptotic inference results for estimation, we first introduce notations here. In the rest of this paper, $||\cdot||_1$ denotes \textit{1-norm} and $||\cdot||_2$ denotes \textit{2-norm}. For a matrix or a vector $A$, $||A||_1 = \underset{||x||_1 = 1}{\max}||Ax||_1$ and $||A||_2 = \underset{||x||_2 = 1}{\max}||Ax||_2$. Given a sequences $f(n)$ and $g(n)$, the notation $f(n) = O(g(n))$ means there exists $c_1>0$ such that $|f(n)| \leq c_1 |g(n)|$, and $f(n) = \Omega(g(n))$ means that $|f(n)| \geq c_2 |g(n)|$ for some $c_2 > 0$. Also, $f(n) = \Theta(g(n))$ denotes when both $f(n) = O(g(n))$ and $f(n) = \Omega(g(n))$.

\subsection{Spatial Asymptotic Framework}
Based on the real data structures, we follow the spatial asymptotic framework in \cite{lahiri2003central} and adopt the mixed-increasing-domain structure under fixed design to develop the asymptotic results. Let $\mathcal{R}_0$ be an open subset of $(-1/2, 1/2]^2$ containing the origin, where two dimensions represent longitude and latitude in our context. Also, let $\lambda_n$ be a sequence of positive real numbers such that $\lambda_n \rightarrow \infty$ as $n \rightarrow \infty$. For each footprint $p$, we assume the sampling region as inflated $\mathcal{R}_0$ with a scaling factor $\lambda_n$ and a shift $(\Delta L_{n,p}, \Delta l_{n,p}) \in \mathbb{R}^2$. Formally, the sampling region for footprint $p$ is denoted as $\mathcal{R}_{np} = \lambda_n \mathcal{R}_0 + (\Delta L_{n,p}, \Delta l_{n,p})$. To avoid pathological cases, we follow \citep{lahiri2002asymptotic} and assume
\begin{enumerate}[label=(A.\arabic*)]
    \item \label{cond:sp_reg} For any sequence of positive real numbers $\{t_n\}$ with $t_n \rightarrow 0$ as $n \rightarrow \infty$, the number of cubes of the form $(\bdi + (0, 1]^2 )t_n, \bdi \in \mathbb{Z}^2$ that intersect both $\mathcal{R}_0$ and its complement $\mathcal{R}_0^c$ is $O(t_n)$ as $n \rightarrow \infty$.
\end{enumerate}
Thus, sampling region for each footprint has the same shape, but different centers. We define a lattice as $\mathcal{Z}^2 = \{(\delta_1 i_1, \delta_2 i_2): (i_1, i_2) \in \mathbb{Z}^2\}$, where $0 < \delta_1, \delta_2 < \infty$ are increments in two directions, and a positive sequence $h_n \rightarrow 0$ as $n \rightarrow \infty$ to vary the minimal distance in lattice. Under the \textit{mixed-increasing-domain} asymptotic framework, sampling locations at stage $n$ are intersection of scaled lattice $h_n \mathcal{Z}^2$, and the sampling region expanded at rate $\lambda_n$, i.e., 
\begin{align*}
    \mathcal{S}_{np}& = \mathcal{R}_{np} \cap (h_n \mathcal{Z}^2), \quad p = 1, 2, \ldots, 8\\
    \mathcal{S}_n & = \{\bds_1, \bds_2, \ldots, \bds_N\} = \cup_{p=1}^8 \mathcal{S}_{np},
\end{align*}
where the sample size $N$ and sample size $N_p$ in footprint $p$ need not be equal to $n$. Since our locations are deterministic, this is referred to as the \textit{fixed design} case. Summarizing, we require
\begin{enumerate}[label=(A.\arabic*), resume]
\item The sampling framework satisfies mixed-increasing-domain and fixed design.
\end{enumerate}
By (2.4) in \cite{lahiri2003central}, the fixed design in the mixed-increasing-domain case satisfies the growth condition,
$
\underset{n \rightarrow \infty}{\lim} r_0 \lambda_n^{2} h_n^{-2} / N_p = 1
$,
where $r_0$ is same positive constant depending on $R_0, \delta_1$ and $\delta_2$. As $n$ grows, the sample size increases at a rate of $\lambda_n^{2} h_n^{-2}$. \\
\textit{Remark:} The $\mathcal{S}$ and $\mathcal{S}_p$ defined above for estimation refers to the data sample actually used, which is considered as $\mathcal{S}_n$ and $\mathcal{S}_{np}$ at some certain stage $n$. Our real data setting shown in Fig \ref{fig:data_def} can be regarded as a special case of the \textit{mixed-increasing-domain} with \textit{fixed design} described above. The shift in our dataset is only in longitude since locations from different footprint are aligned along meridian. Also, the shape of sampling region $\mathcal{R}_0$ in our application is similar to a narrow parallelogram.

\subsection{Theoretical Results in Dense FPCA}
For any two subsets $\varLambda_1$ and $\varLambda_2$ of $\mathbb{R}^2$, $dist(\varLambda_1, \varLambda_2)$ is defined as $\inf \{||\bds_1 - \bds_2||_1: \bds_1 \in \varLambda_1, \bds_2 \in \varLambda_2\}$. We introduce the strong mixing coefficient $\alpha_k(a; b)$ \citep{guyon} to describe the dependence of the random field $\xi_k(\cdot)$, 
\begin{align} \nonumber
\alpha_k(a; b)= & \sup \{ |P(A\cap B) - P(A) P(B)|: A \in \mathcal{F}_k(\varLambda_1), B \in \mathcal{F}_k( \varLambda_2), \\ \label{mix_coef}
 & |\varLambda_1| \leq b, |\varLambda_2| \leq b, dist(\varLambda_1, \varLambda_2) \geq a\},
\end{align}
where $\mathcal{F}_k(\varLambda)$ is the $\sigma$-algebra generated by the variables $\{\xi_k(\bds): \bds \in \varLambda\}$, $\varLambda \subset \mathbb{R}^2$.
To specify weak dependence via mixing coefficient, we assume \citep{lahiri2003central}
\begin{enumerate}[label=(A.\arabic*), resume]
    \item \label{cond:alpha} There exists a non-increasing function $\alpha_1(\cdot)$ with $\lim_{a \rightarrow \infty}\alpha_1(a) = 0$ and a non-decreasing function $g(\cdot)$ such that for $k = 1, \ldots, K$, $\alpha_k(a; b) \leq \alpha_1(a) g(b)$, $a > 0, b > 0$.
    \item \label{cond:alpha_1} $\int_{0}^{\infty} |y| \alpha_1(y) dy < \infty$.
    \item \label{cond:g} Define the function $f_1(t) = t^2 \int_{1}^t y^3 \alpha_1(y) dy, t \geq 1$, it satisfies that $g(t) = o(\{f_1^{-1}(t)\}^2/ \\ \{t\alpha_1(f_1^{-1}(t))\})$ as $t \rightarrow \infty$. 
\end{enumerate}
Conditions \ref{cond:alpha}--\ref{cond:g} state the restriction on the mixing coefficient, which can be satisfied under the following circumstances. For example, if $\alpha(a; b) \leq C (1+a)^{-\tau_1} b^{\tau_2}$, where $\tau_1 > 4$, $\tau_2 < \tau_1/2$ and $C > 0$, then \ref{cond:alpha_1} is satisfied with $f_1^{-1}(t) = \Theta(t^{1/2})$ and \ref{cond:g} is satisfied since $\{f_1^{-1}(t)\}^2/\{t\alpha_1(f_1^{-1}(t))\} = \Theta(t^{\tau_1/2})$. In addition, we need the following conditions for the geospatial functional data.
\begin{enumerate}[label=(A.\arabic*), resume]
    \item \label{cond:mm} For any $\bds_i \in \mathcal{S}_n$, $\E\left\{\underset{w \in \mathcal{W}}{\sup} |f(w; \bds_i)|^{c_1}\right\} < \infty$ for some $c_1 > 2$. Also, $\E\left\{\underset{w \in \mathcal{W}}{\sup}|e_i(w)|^{c_2}\right\} < \infty$ for some $c_2 > 4$, $|e_i(w) - e_i(w')| \leq L_i|w - w'|$ where $L_i$ are i.i.d. random variable with $\E L_i^4 < \infty$. 
    \item \label{cond:xi_k} There exists positive constant $C_0$ such that $|\xi_k(\bds)| < C_0$ a.s. for all $1 \leq k \leq K$.
    \item \label{cond:xi_k_smooth} For any $k$th principal component score within footprint $p$, it is continuous in probability, $|\xi_k(\bds + h) - \xi_k(\bds)| = O_p(h^{\beta})$ for some $\beta > 0$, $\bds \in \mathcal{S}_{np}$.
\end{enumerate}
The Theorem \ref{thm1} stated below establishes the uniform converge rate for $\wh \bdbeta(w)$ in (\ref{beta_estimate}) and $\wh \mu(w; \bds_i)$ in (\ref{mean_estimate}) for the proposed mean estimation.
\begin{thm}\label{thm1}
Under conditions \ref{cond:sp_reg}--\ref{cond:xi_k}, 
\begin{align*}
\sup_{w \in \mathcal{W}} ||\wh \bdbeta(w) - \bdbeta(w)||_2 & = O_p(1/\lambda_n), \\
\sup_{w \in \mathcal{W}} |\wh \mu(w; \bds_i) - \mu(w; \bds_i)| & = O_p(1/\lambda_n)
\end{align*}
for any $\bds_i \in \mathcal{S}_n$. 
\end{thm}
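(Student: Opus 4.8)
The plan is to exploit the block-diagonal structure of $\BX$ to reduce the claim to eight separate least-squares problems, one per footprint, and then to track how the mixed-increasing-domain geometry controls each. Fix a footprint $p$ and write $\wt{\bdx}_p = [\bdone, \bdx_p]$ for its design (intercept plus the two location coordinates), so that $\wh\bdbeta^p(w) - \bdbeta^p(w) = (\wt{\bdx}_p\trans\wt{\bdx}_p)^{-1}\wt{\bdx}_p\trans\bde_p(w)$, where the stacked regression error $\bde_p(w)$ has entries $[\bde_p(w)]_i = r(w;\bds_i) - \mu(w;\bds_i) = \sum_{k=1}^K\xi_k(\bds_i)\phi_k(w) + \epsilon_i(w)$ under the truncated model (\ref{fpca_model1}). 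I would reparametrize by centering the covariates at $\bar\bds_p = N_p^{-1}\sum_{\bds_i\in\mathcal{S}_p}\bds_i$, which makes the Gram matrix block diagonal: the intercept block is $N_p$ and the slope block is $S_p = \sum_{\bds_i\in\mathcal{S}_p}(\bds_i - \bar\bds_p)(\bds_i - \bar\bds_p)\trans$. Since slopes are invariant under centering, it suffices to bound the centered intercept $\wh\alpha_p - \alpha_p = N_p^{-1}\bdone\trans\bde_p(w)$ and the slope $\wh\bdbeta_1^p - \bdbeta_1^p = S_p^{-1}\sum_{\bds_i\in\mathcal{S}_p}(\bds_i - \bar\bds_p)[\bde_p(w)]_i$ separately, uniformly in $w$.

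Second, I would pin down the scaling of the design. From the growth condition $r_0\lambda_n^2 h_n^{-2}/N_p \to 1$ we have $N_p = \Theta(\lambda_n^2 h_n^{-2})$, and because the locations fill a scaled copy of the two-dimensional region $\mathcal{R}_0$ (open, with nonempty interior by \ref{cond:sp_reg}), the normalized second-moment matrix $(N_p\lambda_n^2)^{-1}S_p$ converges to a fixed positive-definite limit; hence the eigenvalues of $S_p$ are $\Theta(\lambda_n^4 h_n^{-2})$ and $\|S_p^{-1}\|_2 = \Theta(\lambda_n^{-4}h_n^{2})$. I would also record that the sampling-region definition $\mathcal{R}_{np}=\lambda_n\mathcal{R}_0+(\Delta L_{n,p},\Delta l_{n,p})$ forces $\|\bds_i\|_2 = O(\lambda_n)$, a fact that turns out to be decisive for the final rate.

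Third, the analytic core is bounding the two noise--design sums uniformly in $w$. I would split each into the finite spatial part $\sum_{k\le K}\phi_k(w)\sum_i a_i\xi_k(\bds_i)$ and the measurement part $\sum_i a_i\epsilon_i(w)$, with weights $a_i \equiv N_p^{-1}$ for the intercept and $a_i$ the centered covariate for the slope. For the spatial part the key estimate is the variance bound $\var\{\sum_i a_i\xi_k(\bds_i)\} \le (\sum_i a_i^2)\max_i\sum_j|G_k(d(\bds_i,\bds_j))|$; the boundedness \ref{cond:xi_k} of the scores lets Davydov's inequality bound $|\cov\{\xi_k(\bds_i),\xi_k(\bds_j)\}|$ by a constant multiple of the mixing coefficient $\alpha_k(d(\bds_i,\bds_j);1)$, and \ref{cond:alpha}--\ref{cond:g} then make the lattice sum converge, yielding $\max_i\sum_j|G_k(d(\bds_i,\bds_j))| = O(h_n^{-2})$ after absorbing the infill density $h_n^{-2}$. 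Plugging $\sum a_i^2 = N_p^{-1}$ into the intercept gives $\bar\xi_k := N_p^{-1}\sum_i\xi_k(\bds_i) = O_p(h_n^{-1}N_p^{-1/2}) = O_p(\lambda_n^{-1})$, while $\sum_i\|\bds_i-\bar\bds_p\|_2^2 = \Theta(\lambda_n^4 h_n^{-2})$ combined with $\|S_p^{-1}\|_2$ gives slope error $O_p(\lambda_n^{-2})$. Uniformity in $w$ is immediate for the spatial part because $K<\infty$, $\|\phi_k\|_\infty<\infty$, and the random vectors $\sum_i a_i\xi_k(\bds_i)$ do not depend on $w$; for the measurement part I would obtain $\sup_w$ control by a covering argument over the compact $\mathcal{W}$ using the Lipschitz continuity and the moment bounds on $e_i(w)$ and $L_i$ in \ref{cond:mm}, and this part is of the smaller order $O_p(N_p^{-1/2})$.

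Finally, I would transform back through $\wh\beta_0^p - \beta_0^p = (\wh\alpha_p - \alpha_p) - \bar\bds_p\trans(\wh\bdbeta_1^p - \bdbeta_1^p)$: the factor $\|\bar\bds_p\|_2 = O(\lambda_n)$ multiplies the slope error $O_p(\lambda_n^{-2})$ to give $O_p(\lambda_n^{-1})$, matching the $O_p(\lambda_n^{-1})$ coming from $\bar\xi_k$, so $\sup_w\|\wh\bdbeta^p(w) - \bdbeta^p(w)\|_2 = O_p(\lambda_n^{-1})$; taking the maximum over the eight footprints gives the first assertion. The mean bound follows from $\wh\mu(w;\bds_i) - \mu(w;\bds_i) = \{\wh\beta_0^{p_i}(w) - \beta_0^{p_i}(w)\} + \bds_i\trans\{\wh\bdbeta_1^{p_i}(w) - \bdbeta_1^{p_i}(w)\}$ by the same mechanism, since $\|\bds_i\|_2 = O(\lambda_n)$ turns the slope rate $O_p(\lambda_n^{-2})$ into $O_p(\lambda_n^{-1})$. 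I expect the main obstacle to be the spatial-dependence bookkeeping of the third step: converting the mixing conditions \ref{cond:alpha}--\ref{cond:g} into genuinely summable covariances and correctly accounting for the infill factor $h_n^{-2}$ as the lattice refines while the domain inflates, and doing so uniformly over the continuous wavelength domain $\mathcal{W}$ rather than pointwise, which is where the Lipschitz and moment machinery of \ref{cond:mm} must drive a covering argument.
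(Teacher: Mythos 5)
Your proposal follows the same overall architecture as the paper's proof: reduction to per-footprint least squares with centred covariates, the split of the regression error into the FPC-score part and the measurement-error part, the design scaling (the paper shows $s_{d,n}^2=\Omega(\lambda_n^4/h_n^2)$ by a direct lattice computation, matching your eigenvalue claim for $S_p$), the intercept rate $O_p(1/\lambda_n)$ dominating the slope rate $O_p(1/\lambda_n^2)$, and the conversion of the slope rate into the mean rate through $\|\bds_i\|_2=O(\lambda_n)$. Within this shared skeleton, your handling of the FPC-score sums is genuinely different and valid: where the paper invokes Proposition~4.1 and Theorem~4.1 of Lahiri (2003), you use Davydov's covariance inequality (legitimate because \ref{cond:xi_k} gives bounded scores), the Schur-type bound $\var\{\sum_i a_i\xi_k(\bds_i)\}\le(\sum_i a_i^2)\max_i\sum_j|G_k(d(\bds_i,\bds_j))|$, summability from \ref{cond:alpha_1} to control the lattice sum by $O(h_n^{-2})$, and Chebyshev. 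This elementary route yields exactly the rates needed and is more self-contained; the paper's route proves more (central limit theorems) than Theorem~\ref{thm1} actually requires.

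The genuine gap is in the measurement-error part, which is precisely where the paper's real technical work sits (Lemma~\ref{lm:weighted_sum_e}). You need $\sup_{w\in\mathcal{W}}|\sum_i a_i\epsilon_i(w)|$ to be of the same order as its pointwise standard deviation, e.g.\ $O_p(N_p^{-1/2})$ for the intercept weights, and a single-scale covering argument cannot deliver this. Concretely: with a net of mesh $\delta$, moment bounds of order $c_2$ from \ref{cond:mm} at the net points, and the Lipschitz constants $L_i$ controlling oscillation within cells, optimizing $\delta$ gives only $O_p(N_p^{-1/2+\eta})$ with $\eta$ roughly $1/\{2(c_2+1)\}>0$ --- a polynomial loss. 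The slack available is only the factor $h_n^{-1}$ separating $N_p^{-1/2}=\Theta(h_n/\lambda_n)$ from the target $1/\lambda_n$, and since the framework does not tie $h_n$ to $\lambda_n$, the loss $N_p^{\eta}=(\lambda_n/h_n)^{2\eta}$ can dominate $h_n^{-1}$ (for instance $h_n$ decaying logarithmically while $\lambda_n$ grows polynomially). Closing this requires genuine chaining rather than a one-level net: the paper establishes weak convergence of the normalized partial-sum process in $\ell^\infty(\mathcal{W})$ via Theorem~2.11.1 of van der Vaart and Wellner, using the Lindeberg condition, a random semimetric built from the $L_i$, and a covering-integral bound, which is what makes the uniform rate sharp. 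Your sketch names the right ingredients (Lipschitz continuity, moments of $e_i$ and $L_i$), but the phrase ``covering argument'' must be upgraded to that multi-scale empirical-process argument for the stated rate to follow.
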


\noindent \textit{Remark:} This result shows that the converge rate of our mean function estimator is controlled by the expansion rate $\lambda_n$ of spatial locations. Using location coordinates in fitting linear model under mixed increasing domain, we obtain a rate at $1/\lambda_n \sim N^{-1/2} h_n^{-1}$. Compared to the root-$n$ convergence rate which is typical when variables are assumed to be independent or dependent but under the pure increasing domain asymptotics, this convergence rate is slower. As the infill density changes with rate $h_n^{-1}$, the increasing amount of dependence slows down the convergence rate of the mean estimates. 

The next theorem shows the asymptotic property of covariance function estimates $\wh R_{\epsilon, p}(w, w')$ (\ref{meas_error_est}) and $\wh R_f(w, w')$ (\ref{cov_func_est}) under similar conditions. 
\begin{thm} \label{thm2}
   Under conditions \ref{cond:sp_reg}--\ref{cond:xi_k_smooth}, 
   \begin{align*}
   \sup_{w, w' \in \mathcal{W}} |\wh R_{\epsilon,p}(w, w') - R_{\epsilon,p}(w, w')| &= O_p(h_n/\lambda_n + h_n^{\beta_1}),\\
	\sup_{w, w' \in \mathcal{W}} |\wh R_f(w, w') - R_f(w, w')| &= O_p(1/\lambda_n + h_n^{\beta_1})
	\end{align*}
where $\beta_1 = \min\{1, \beta\}$, $\beta > 0$, and $p = 1, \ldots, 8$.
\end{thm}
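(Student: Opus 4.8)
The plan is to decompose each estimator into a dominant noise or signal part, a contamination/bias part, and a mean–plug-in part, and to bound these separately, exploiting the contrast that the independent measurement errors average at the fast infill rate while the spatially correlated scores average only at the slow increasing-domain rate. Throughout I write the centered signal as $\wt f_i(w) = f(w;\bds_i) - \mu(w;\bds_i) = \sum_{k\le K}\xi_k(\bds_i)\phi_k(w)$ and $\epsilon_i(w) = \sigma_{p_i}(w)e_i(w)$, so that $r(w;\bds_i) = \mu(w;\bds_i) + \wt f_i(w) + \epsilon_i(w)$, and I use the finite truncation $K$, the boundedness \ref{cond:xi_k}, and the Lipschitz/moment conditions \ref{cond:mm} to pass from pointwise to uniform-in-$(w,w')$ bounds via a continuity argument on $\mathcal{W}\times\mathcal{W}$.

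First I would treat $\wh R_{\epsilon,p}$. Substituting $r=\mu+\wt f+\epsilon$ into the second difference $\Delta_p(w;\bds_i)$ splits it as $\Delta_p^\mu + \Delta_p^f + \Delta_p^\epsilon$. Because $\mu$ is linear in location, its second difference vanishes on an exact lattice and is at most $O(h_n)$ under the near-lattice design; by \ref{cond:xi_k_smooth} each score second difference $\xi_k(\bds_{i_1})-2\xi_k(\bds_i)+\xi_k(\bds_{i_2})$ is $O_p(h_n^\beta)$, so with \ref{cond:xi_k} and bounded eigenfunctions $\sup_w\max_i|\Delta_p^f(w;\bds_i)+\Delta_p^\mu(w;\bds_i)| = O_p(h_n^{\beta_1})$, $\beta_1=\min\{1,\beta\}$, the cap at $1$ coming from the linear-mean remainder. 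Expanding $\Delta_p(w)\Delta_p(w')$, the noise-only average $\tfrac{1}{6\wt N_p}\sum_i\Delta_p^\epsilon(w)\Delta_p^\epsilon(w')$ is exactly centered at $R_{\epsilon,p}(w,w')$ (the constant $6$ is chosen for this), and since $\Delta_p^\epsilon$ depends only on $e_{i-1},e_i,e_{i+1}$ it is a finitely dependent array whose average fluctuates at $O_p(\wt N_p^{-1/2}) = O_p(h_n/\lambda_n)$ using $N_p=\Theta(\lambda_n^2 h_n^{-2})$. The remaining cross and quadratic terms are bounded uniformly by $O_p(h_n^{\beta_1})$ using the signal-difference bound above together with $\tfrac1{\wt N_p}\sum_i\sup_w|\Delta_p^\epsilon(w)| = O_p(1)$ from \ref{cond:mm}. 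This yields the first claim.

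Next, for $\wh R_f$ I would write $r-\wh\mu = \wt f_i + \epsilon_i + \delta_i$ with $\delta_i = \mu(\cdot;\bds_i)-\wh\mu(\cdot;\bds_i)$, expand the sample second moment into nine terms, and match them against $\cov\{r(w;\bds_i),r(w';\bds_i)\} = R_f + R_{\epsilon,p_i}$. The leading signal term $\tfrac1{N-1}\sum_i\wt f_i(w)\wt f_i(w')$ is centered at $R_f(w,w')$ by orthogonality of the scores, and its fluctuation is controlled by the mixing conditions \ref{cond:alpha}--\ref{cond:g}: for a mixing field on the mixed-increasing-domain design the sample-mean variance is dominated by the spatial correlation, so that $\tfrac1{N^2}\sum_{i,j}\cov$ is of order $\lambda_n^{-2}$, giving $O_p(1/\lambda_n)$ — the slow rate, reflecting that only $\Theta(\lambda_n^2)$ correlation blocks are effectively independent. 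The $\epsilon\epsilon$ term minus the subtracted $\tfrac1{N-1}\sum_p N_p\wh R_{\epsilon,p}$ inherits $O_p(h_n/\lambda_n + h_n^{\beta_1})$ from the first part, the signal–noise cross terms are $O_p(N^{-1/2}) = O_p(h_n/\lambda_n)$ because $\epsilon_i$ is independent across $i$, and every term containing $\delta_i$ is $O_p(1/\lambda_n)$ via Theorem \ref{thm1} and Cauchy--Schwarz. Summing gives $O_p(1/\lambda_n + h_n^{\beta_1})$.

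I expect the main obstacle to be twofold. The delicate quantitative point is the variance computation for the correlated-signal average: obtaining the sharp $O_p(1/\lambda_n)$ rate (rather than the naive $O_p(N^{-1/2})$) requires carefully summing the spatial covariances under the infill-plus-increasing-domain geometry, invoking \ref{cond:alpha_1}--\ref{cond:g} to make the covariance sum over physical distance finite while counting $\Theta(h_n^{-2})$ points per unit area. The delicate technical point is upgrading all pointwise bounds to hold uniformly over $(w,w')\in\mathcal{W}\times\mathcal{W}$, and in the cross terms uniformly over the $N_p$ indices; this is where the Lipschitz-in-$w$ structure and fourth-moment bound in \ref{cond:mm}, the uniform boundedness \ref{cond:xi_k}, and the finite truncation $K$ are essential, letting an equicontinuity or chaining argument convert the finitely many pointwise rates into sup-norm rates. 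Finally, care is needed to propagate the plug-in errors cleanly, since the estimated mean $\wh\mu$ enters $\wh R_f$ and the estimated $\wh R_{\epsilon,p}$ is itself subtracted, but these contribute only terms already present in the target rate.
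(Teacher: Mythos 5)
Your proposal follows essentially the same route as the paper's proof: the same decomposition of $\wh R_f$ into signal--signal, signal--noise, noise--noise, and mean--plug-in terms, the same treatment of $\wh R_{\epsilon,p}$ (second differencing annihilates the linear mean up to $O(h_n)$, the score-smoothness condition gives the $O_p(h_n^{\beta_1})$ contribution, the noise quadratic is centered at $6R_{\epsilon,p}$ and fluctuates at $O_p(\wt N_p^{-1/2})$ via a finite-dependence/blocking argument), and the same rate bookkeeping ($1/\lambda_n$ for the mixing signal averages versus $h_n/\lambda_n = N^{-1/2}$ for independent-error averages, with plug-in errors absorbed via Theorem \ref{thm1}). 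The two technical points you flagged are exactly the ones the paper handles explicitly: uniformity in $(w,w')$ is packaged into a weighted-sum lemma (and its two-dimensional extension) proved by an empirical-process argument, and the sharp $O_p(1/\lambda_n)$ rate for the correlated signal term is obtained by verifying that the derived fields $\xi_k^2$ and $\xi_{k_1}\xi_{k_2}$ inherit the mixing conditions and then citing Lahiri (2003).
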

\noindent \textit{Remark:} Similar to the mean parameter estimation, the convergence rate is related to how fast the area grow and how dense sampling points fill in the region. Compared to the mean function estimate, we also need measurement error variance estimate to converge using equation (\ref{meas_error_est}). Thus, the $h_n^{\beta_1}$ is included because of Condition \ref{cond:xi_k_smooth}, which shows that convergence depends on infill density and the continuity of principal component scores as a function of locations. 

The following result describes the consistency for estimating the principal component scores through dense functional principal component analysis (\ref{cov_decompose})--(\ref{tauk_estimate}).
\begin{thm} \label{thm3}
   Under Conditions \ref{cond:sp_reg}--\ref{cond:xi_k_smooth}, 
   \begin{align*}
  \underset{w \in \mathcal{W}}{\sup} |\wh \phi_k(w) - \phi_k(w)| & = O_p(1/\lambda_n + h_n^{\beta_1}), \\
   |\wh u_k(\bds_i) - u_k(\bds_i)| & = O_p(1/\lambda_n + h_n^{\beta_1}), \\
   |\wh \tau_k(p) - \tau_k(p)| & = O_p(1/\lambda_n + h_n^{\beta_1}),
   \end{align*}
where $\beta_1 = \min\{1, \beta\}$, $\beta > 0$, and $p = 1, \ldots, 8$.
\end{thm}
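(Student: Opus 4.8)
The plan is to derive all three bounds by leveraging Theorems \ref{thm1} and \ref{thm2}, treating the eigenfunction convergence as the central result from which the score and nugget-variance rates follow by plug-in expansions. Throughout I would work on the compact domain $\mathcal{W}$, so that the Hilbert--Schmidt norm of an integral operator with kernel $R$ is controlled by $\sup_{w,w'}|R(w,w')|$ up to the factor $|\mathcal{W}|$; this converts the uniform covariance rates of Theorem \ref{thm2} into operator-norm rates that feed the perturbation arguments.

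For the eigenfunctions, I would first invoke standard perturbation theory for self-adjoint compact operators (a Davis--Kahan / Bosq-type bound): assuming the leading $K$ eigenvalues are simple with positive spectral gaps $\delta_k$ (the usual FPCA identifiability requirement) and fixing the sign of $\wh\phi_k$ so that $\langle\wh\phi_k,\phi_k\rangle\geq 0$, one gets $\|\wh\phi_k-\phi_k\|_{L^2}\leq C\,\delta_k^{-1}\|\wh R_f-R_f\|_{\mathrm{op}}$, which by Theorem \ref{thm2} is $O_p(1/\lambda_n+h_n^{\beta_1})$. To upgrade this $L^2$ rate to the claimed supremum-norm rate I would use the self-reproducing representation $\phi_k(w)=\lambda_k^{-1}\int R_f(w,w')\phi_k(w')\,dw'$ together with its empirical analogue, writing $\wh\phi_k(w)-\phi_k(w)$ as a sum of three pieces driven respectively by $(\wh\lambda_k-\lambda_k)$, $(\wh R_f-R_f)$, and $(\wh\phi_k-\phi_k)$. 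The first two are controlled uniformly in $w$ by the eigenvalue perturbation bound and Theorem \ref{thm2}, while the third, after Cauchy--Schwarz against the bounded kernel $R_f$, contributes only at the $L^2$ rate just established; continuity and boundedness of $R_f$ keep all constants uniform in $w$.

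For the scores, recall $u_k(\bds_i)=\int\{r(w;\bds_i)-\mu(w;\bds_i)\}\phi_k(w)\,dw$, so I would expand $\wh u_k(\bds_i)-u_k(\bds_i)$ from (\ref{xik_estimate}) into the cross terms $\int\{r-\mu\}(\wh\phi_k-\phi_k)\,dw$ and $\int\{\mu-\wh\mu\}\phi_k\,dw$, plus a negligible product term. The first is bounded by $\|r(\cdot;\bds_i)-\mu(\cdot;\bds_i)\|_{L^2}\,\|\wh\phi_k-\phi_k\|_{L^2}$, where the first factor is $O_p(1)$ by the moment condition \ref{cond:mm} and the second is the eigenfunction $L^2$ rate; the second is at most $\sup_w|\mu-\wh\mu|\cdot\|\phi_k\|_{L^1}$, which is $O_p(1/\lambda_n)$ by Theorem \ref{thm1}. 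Adding these yields the stated $O_p(1/\lambda_n+h_n^{\beta_1})$.

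For the nugget variance, since $\tau_k(p)=\int\int R_{\epsilon,p}(w,w')\phi_k(w)\phi_k(w')\,dw\,dw'$, I would split $\wh\tau_k(p)-\tau_k(p)$ from (\ref{tauk_estimate}) into a term carrying $(\wh R_{\epsilon,p}-R_{\epsilon,p})$ and a term carrying $(\wh\phi_k\wh\phi_k-\phi_k\phi_k)$. The former is bounded by $\sup_{w,w'}|\wh R_{\epsilon,p}-R_{\epsilon,p}|\cdot\|\wh\phi_k\|_{L^1}^2=O_p(h_n/\lambda_n+h_n^{\beta_1})$ via Theorem \ref{thm2} and the $L^2$-normalization of $\wh\phi_k$; the latter, after writing the product difference as $(\wh\phi_k-\phi_k)\wh\phi_k+\phi_k(\wh\phi_k-\phi_k)$ and integrating the bounded kernel $R_{\epsilon,p}$ against one factor, reduces by Cauchy--Schwarz to the eigenfunction rate $O_p(1/\lambda_n+h_n^{\beta_1})$. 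Combining absorbs $h_n/\lambda_n$ into $1/\lambda_n$ and gives the claim. The main obstacle throughout is the eigenfunction step: the $L^2$ perturbation bound is routine, but promoting it to a uniform-in-$w$ rate while handling the spectral gaps and the sign indeterminacy cleanly is where the real work lies, since every subsequent bound plugs in $\sup_w|\wh\phi_k-\phi_k|$ or $\|\wh\phi_k-\phi_k\|_{L^2}$.
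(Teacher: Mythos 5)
Your proposal is correct, and its overall architecture matches the paper's: establish the eigenfunction rate first, then obtain the score and nugget-variance rates by plug-in decompositions (your treatment of $\wh u_k(\bds_i)$ and $\wh\tau_k(p)$ is essentially identical to the paper's, which bounds $|\wh u_k(\bds_i)-u_k(\bds_i)|$ by $\int |r-\wh\mu|\,|\wh\phi_k-\phi_k| + \int|\mu-\wh\mu|\,|\phi_k|$ and splits $\wh\tau_k(p)-\tau_k(p)$ by the triangle inequality using Theorems \ref{thm1} and \ref{thm2}). Where you genuinely diverge is the eigenfunction step. The paper does not run a perturbation argument from scratch: it directly invokes the asymptotic expansion (2.8) of \cite{hall2006properties} (in the form of result (S.2) of \cite{li2013selecting}), valid for $K<\infty$, which expresses $\wh\phi_k(w)-\phi_k(w)$ pointwise in $w$ as a finite sum of terms each driven by integrals of $\wh R_f-R_f$ against eigenfunctions, up to a $\{1+o_p(1)\}$ factor; taking $\sup_w$ and applying Theorem \ref{thm2} immediately gives $O_p(1/\lambda_n+h_n^{\beta_1})$. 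You instead re-derive this: a Davis--Kahan/Bosq bound for the $L^2$ rate, then a bootstrap through the eigen-identity $\phi_k(w)=\lambda_k^{-1}\int R_f(w,w')\phi_k(w')\,dw'$ and its empirical analogue to promote $L^2$ to sup-norm. Both routes are sound and deliver the same rate. The paper's route is shorter because the cited expansion already encodes the spectral-gap structure (note the $(\lambda_k-\lambda_{k'})$ denominators); your route is more self-contained and has the merit of making explicit the simple-eigenvalue and sign-alignment conditions that the paper's argument also needs but never states among its assumptions \ref{cond:sp_reg}--\ref{cond:xi_k_smooth}. One cosmetic difference: by expanding around $r-\wh\mu$ rather than $r-\mu$, the paper avoids the extra product term you must (and correctly do) argue is of higher order.
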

\noindent \textit{Remark:} By the fact that we are able to adopt numerical integral to estimate PC scores, it is easy to understand that Theorem \ref{thm3} provides the same converge rate as in Theorem \ref{thm2}. To reconstruct a radiance function in $\mathcal{S}$, we are able to have consistent estimate based on Theorem \ref{thm1}, \ref{thm2}, and \ref{thm3}.

\subsection{Theoretical Results for Variogram Estimation}
Following \cite{lahiri2002asymptotic}, we assume following regularity conditions on the semivariogram model $\gamma_k(h; \theta_k)$, $k = 1, \ldots, K$.
\begin{enumerate}[label=(C.\arabic*)]
    \item \label{c1} For any $\varepsilon > 0$ there exists a $\delta > 0$ such that $\inf\{\sum_{l=1}^L (\gamma_k(h_l; \theta_1) - \gamma_k(h_l; \theta_2))^2: ||\theta_1 - \theta_2||_2 \geq \varepsilon\} > \delta$.
    \item \label{c2} $\sup\{\gamma_k(h; \theta_k): h \in \mathbb{R}, \theta_k \in \Theta\} < \infty$, and $\gamma_k(h; \theta_k)$ is continuous with respect to $\theta_k$.
\end{enumerate}
The exponential model is used as our choice of $\gamma_k(h; \theta_k)$, for which both \ref{c1} and \ref{c2} are satisfied on a compact parameter space for some distance lags $h_1, \ldots, h_L$. The weight matrix $V_k(\theta_k)$, $k = 1, \ldots, K$ is assumed to satisfy the following condition \citep{lahiri2002asymptotic}, 
\begin{enumerate}[label=(C.\arabic*), resume]
    \item \label{c3} $V_k(\theta_k)$ is positive definite for all $\theta_k \in \Theta$, $\sup\{||V_k(\theta_k)||_2 + ||V_k(\theta_k)^{-1}||_2: \theta_k \in \Theta\} < \infty$, and $V_k(\theta_k)$ is continuous on $\Theta$. 
\end{enumerate}
Condition \ref{c3} requires the continuity of function $V_k(\theta_k)$ in $\Theta$, and common practices such as the use of diagonal matrix $N(h_l)/\gamma_k^2(h_l; \theta_k)$ \citep{cressie1985fitting}, $N(h_l)/h_l^2$ and $N(h_l)$ are all valid to use. The next theorem shows that the parameter estimator (\ref{thetak_estimate}) for variogram model is still consistent with the use of estimated scores. 
\begin{thm} \label{thm4}
   Under Conditions \ref{cond:sp_reg}--\ref{cond:xi_k_smooth}, \ref{c1}--\ref{c3},  
   $$\wh \theta_k - \theta_k \overset{p}{\rightarrow} 0 \quad as \quad n \rightarrow \infty.$$
\end{thm}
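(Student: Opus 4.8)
\textit{Proof proposal.}

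The plan is to prove consistency via the standard minimum-contrast (argmin) route: show that the weighted least squares criterion in (\ref{thetak_estimate}) converges uniformly in probability to a deterministic limit that is uniquely and well-separatedly minimized at the true $\theta_k$. The only features beyond a textbook variogram-fitting argument are that the sample semivariogram (\ref{sample_variogram}) is built from the \emph{estimated} scores $\wh u_k(\bds_i)$ and estimated nuggets $\wh\tau_k(p)$, and that the nugget correction is footprint-specific; so I would split the work into an oracle step and a plug-in step. For the oracle step, define $\gamma_k^*(h_l)$ by substituting the true $u_k$ and $\tau_k$ into (\ref{sample_variogram}). Since $e_{ik}$ is mean-zero with variance $\tau_k(p_i)$, independent across $i$ and of the field, a direct second-moment computation gives $\E\{u_k(\bds_i)-u_k(\bds_j)\}^2=2\gamma_k(h_l;\theta_k)+\tau_k(p_i)+\tau_k(p_j)$ whenever $d(\bds_i,\bds_j)=h_l$, so the nugget-corrected $\gamma_k^*(h_l)$ is unbiased for $\gamma_k(h_l;\theta_k)$. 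Consistency $\gamma_k^*(h_l)\overset{p}{\to}\gamma_k(h_l;\theta_k)$ for each of the finitely many bins then reduces to showing $\var\{\gamma_k^*(h_l)\}\to0$, which is controlled through the strong-mixing conditions \ref{cond:alpha}--\ref{cond:g} and the moment bounds \ref{cond:mm}--\ref{cond:xi_k} exactly as in the spatial law-of-large-numbers results of \cite{lahiri2003central,lahiri2002asymptotic} under the mixed-increasing-domain fixed design; I would cite these rather than reprove them.

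Next I would bound the plug-in error $\wh\gamma_k(h_l)-\gamma_k^*(h_l)$. Writing $a_{ij}=u_k(\bds_i)-u_k(\bds_j)$ and $\wh a_{ij}=\wh u_k(\bds_i)-\wh u_k(\bds_j)$, the squared-difference part contributes the bin average of $\wh a_{ij}^2-a_{ij}^2=(\wh a_{ij}-a_{ij})(\wh a_{ij}+a_{ij})$, and Cauchy--Schwarz gives
\[
\Bigl|\tfrac{1}{N(h_l)}\textstyle\sum_{d(\bds_i,\bds_j)=h_l}(\wh a_{ij}^2-a_{ij}^2)\Bigr|\le\Bigl\{\tfrac{1}{N(h_l)}\textstyle\sum(\wh a_{ij}-a_{ij})^2\Bigr\}^{1/2}\Bigl\{\tfrac{1}{N(h_l)}\textstyle\sum(\wh a_{ij}+a_{ij})^2\Bigr\}^{1/2}.
\]
The second factor is $O_p(1)$ by the finite moments in \ref{cond:mm} and the a.s.\ bound \ref{cond:xi_k}, while the first is at most a constant times the averaged squared plug-in error $\tfrac1N\sum_i\{\wh u_k(\bds_i)-u_k(\bds_i)\}^2$. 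Decomposing $\wh u_k(\bds_i)-u_k(\bds_i)$ into a mean term $\int(\mu-\wh\mu)\wh\phi_k\,dw$ and an eigenfunction term $\int(r-\mu)(\wh\phi_k-\phi_k)\,dw$, the first is $O_p(1/\lambda_n)$ uniformly in $\bds_i$ by Theorem \ref{thm1}, and the second is bounded by $\|r(\cdot;\bds_i)-\mu(\cdot;\bds_i)\|_2\,\|\wh\phi_k-\phi_k\|_2$; averaging the squares and using $\tfrac1N\sum_i\|r(\cdot;\bds_i)-\mu(\cdot;\bds_i)\|_2^2=O_p(1)$ together with the uniform rate $\|\wh\phi_k-\phi_k\|_2=O_p(\delta_n)$ from Theorem \ref{thm3}, where $\delta_n=1/\lambda_n+h_n^{\beta_1}$, yields $\tfrac1N\sum_i\{\wh u_k(\bds_i)-u_k(\bds_i)\}^2=O_p(\delta_n^2)$. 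Hence the squared-difference contribution is $O_p(\delta_n)=o_p(1)$, and the nugget contribution $\tfrac{1}{2N(h_l)}\sum\{(\wh\tau_k(p_i)-\tau_k(p_i))+(\wh\tau_k(p_j)-\tau_k(p_j))\}$ is $o_p(1)$ because $|\wh\tau_k(p)-\tau_k(p)|=O_p(\delta_n)$ uniformly over the eight footprints. Combining the two steps gives $\wh\bdgamma_k\overset{p}{\to}\bdgamma_k(\theta_k)$ in $\mathbb{R}^L$.

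Finally I would close the argmin argument. Set $Q_n(\theta)=\{\wh\bdgamma_k-\bdgamma_k(\theta)\}\trans V_k(\theta)\{\wh\bdgamma_k-\bdgamma_k(\theta)\}$ and $Q(\theta)=\{\bdgamma_k(\theta_k)-\bdgamma_k(\theta)\}\trans V_k(\theta)\{\bdgamma_k(\theta_k)-\bdgamma_k(\theta)\}$. Because $\bdgamma_k(\cdot)$ and $V_k(\cdot)$ are continuous and uniformly bounded on the compact $\Theta$ by \ref{c2}--\ref{c3}, and $\wh\bdgamma_k-\bdgamma_k(\theta_k)=o_p(1)$, expanding $Q_n-Q$ and bounding each factor gives $\sup_{\theta\in\Theta}|Q_n(\theta)-Q(\theta)|=o_p(1)$. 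The identifiability condition \ref{c1}, together with the positive-definiteness and the uniform bound on $V_k^{-1}$ in \ref{c3} (so that $\lambda_{\min}(V_k)$ is bounded below), forces $Q(\theta)\ge c\,\|\bdgamma_k(\theta_k)-\bdgamma_k(\theta)\|_2^2$ and hence $\inf_{\|\theta-\theta_k\|_2\ge\varepsilon}Q(\theta)>0=Q(\theta_k)$ for every $\varepsilon>0$; that is, $\theta_k$ is a well-separated unique minimizer. The standard consistency theorem for M-estimators (uniform convergence of the criterion plus a well-separated minimum) then yields $\wh\theta_k\overset{p}{\to}\theta_k$.

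The hard part will be the oracle step, namely verifying $\var\{\gamma_k^*(h_l)\}\to0$: the summands are products of weakly dependent field values whose fourth-order cross-covariances must be summed over the footprint-structured fixed-design lattice and shown to be dominated using \ref{cond:alpha}--\ref{cond:g}. This is essentially the content of \cite{lahiri2002asymptotic}, so the genuinely new effort lies in confirming that their framework transfers to the present footprint design and in the plug-in bound above, where the key is that the averaged (rather than pointwise-maximal) squared score error inherits the clean rate $O_p(\delta_n^2)$ from the $L^2$ decomposition, thereby surviving the averaging against the possibly large true differences $a_{ij}$.
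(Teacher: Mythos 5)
Your proposal is correct and follows the same skeleton as the paper's proof: an oracle step for the semivariogram computed from the true $u_k$ and $\tau_k$, a plug-in step controlled by Theorems \ref{thm1}--\ref{thm3}, and a final WLS-consistency step under \ref{c1}--\ref{c3}. The differences lie in where the detail goes, and they are complementary. The paper expands exactly the part you defer to citation: it decomposes the nugget-corrected oracle semivariogram into a pure $\{\xi_k(\bds_i)-\xi_k(\bds_j)\}^2$ term (handled via Proposition~4.1 and Theorem~4.1 of \cite{lahiri2003central} together with the proof of Theorem~3.3 of \cite{lahiri2002asymptotic}, first for a fixed lag vector $\bdh$ and then aggregated over directions), a cross term $\{\xi_k(\bds_i)-\xi_k(\bds_j)\}(e_{ik}-e_{jk})$ (Theorem~4.2 of \cite{lahiri2003central}, using independence of $e_{ik}$ and the field), and a pure $(e_{ik}-e_{jk})^2$ term (weak law of large numbers) --- this three-term split is precisely the variance verification you labeled as "the hard part" and outsourced. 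Conversely, the paper compresses your third step into a single citation of Theorem~3.1 of \cite{lahiri2002asymptotic}, whereas you reprove it via the standard uniform-convergence/well-separated-minimum argument; your version is more self-contained, and your plug-in step (Cauchy--Schwarz against the averaged squared score error, giving the clean $O_p(\delta_n^2)$ rate) is actually more careful than the paper's, which simply sums the pointwise rates from Theorem \ref{thm3}. One technical point your citation-level oracle step glosses over and the paper handles explicitly: Lahiri's results are stated for vector lags, so the isotropic great-circle-distance binning in (\ref{sample_variogram}) requires the extra aggregation-over-directions step, and the final citation must likewise be checked to hold for isotropic lags.
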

\noindent \textit{Remark:} By using Theorem 3.1 in \cite{lahiri2002asymptotic}, the proof is trivial since we have the consistency of $u_k(\bds_i)$ and $\tau_k(p)$ in Theorem \ref{thm3}. This result and Theorem \ref{thm3} are necessary to show that $\wh \xi_k(\bds_0)$ in (\ref{xik_predict}) converges to the BLUP via ordinary kriging.

\section{Applications to OCO-2 data} \label{application}
\subsection{Implementation Details}
In Section \ref{data}, we made it clear that the radiance is observed in a sequence of wavelengths and can be treated as a function in discrete index equivalently. Therefore, the domain $\mathcal{W}$ becomes $\{w \in \mathbb{Z}: 1\leq w \leq 1016\}$ for the sake of implementation of our cross-sectional estimation. In other words, this is a special case of the theoretical setting in Section \ref{model} and \ref{estimation}, which results in estimating mean, covariance and eigenfunction for a finite number of wavelengths. In practice, some of the OCO-2 spectral channels, in particular the largest and smallest wavelengths, do not produce scientifically reliable radiances. These ``bad samples'' are flagged and produce systematic patterns of missing radiances in the data product \citep{OCO2L1BATDB}. In this applicatioin we keep those wavelength indices with enough observations across locations, and focus on imputing radiance functions in wavelength indices available in the area of interest $\mathcal{S}$, which is denoted as $\mathcal{W}_a \subset \mathcal{W}, |\mathcal{W}_a| = m$.

Because the satellite orbit is monotonic (south-to-north) in latitude and the track is quite narrow, as shown in Fig \ref{fig:data_def}, it is sufficient to use latitude $L$ as the covariate in model (\ref{mean_model}). The FVE threshold we chose for determining the number of principal components in FPCA is 0.99. In real data, not all principal component scores contain spatial dependence, especially later ones explaining little fraction of variance. For ease of processing large samples efficiently, spatial dependence can be assessed with a permutation-based test \citep{cressie2015}. For components which are determined to have no spatial dependence, the BLUP estimator is reduced to $\wh \xi_k(\bds_0) = \dfrac{1}{N} \sum_{\bds_i \in \mathcal{S}} \wh u_k(\bds_i)$. Due to limited sample size, $\wh \gamma_k(h_l)$ may have negative values. For ordinary kriging, the \texttt{gstat} package is used in fitting the semivariogram models.

\subsection{Radiance Imputation over the Pacific Ocean} \label{imputation_app}

OCO-2 data aim to provide a comprehensive measurement framework for \ce{CO_2} concentration and the {\it{retrieval algorithm}} implements the estimation of $X_{CO2}$ from Level 1 data, high-resolution spectra of reflected sunlight.
As introduced in Section \ref{data}, OCO-2 has a large amount of locations with completely missing radiance because of atmospheric properties including clouds and cosmic rays. The spatial coverage of the retrieval algorithm will improve if the missing radiance can be imputed. 

\begin{figure}[ht]
	\begin{subfigure}{.48\textwidth}
	     \centering
		\includegraphics[trim = 2cm 0cm 2cm 0cm, clip = true, width = 0.8\textwidth]{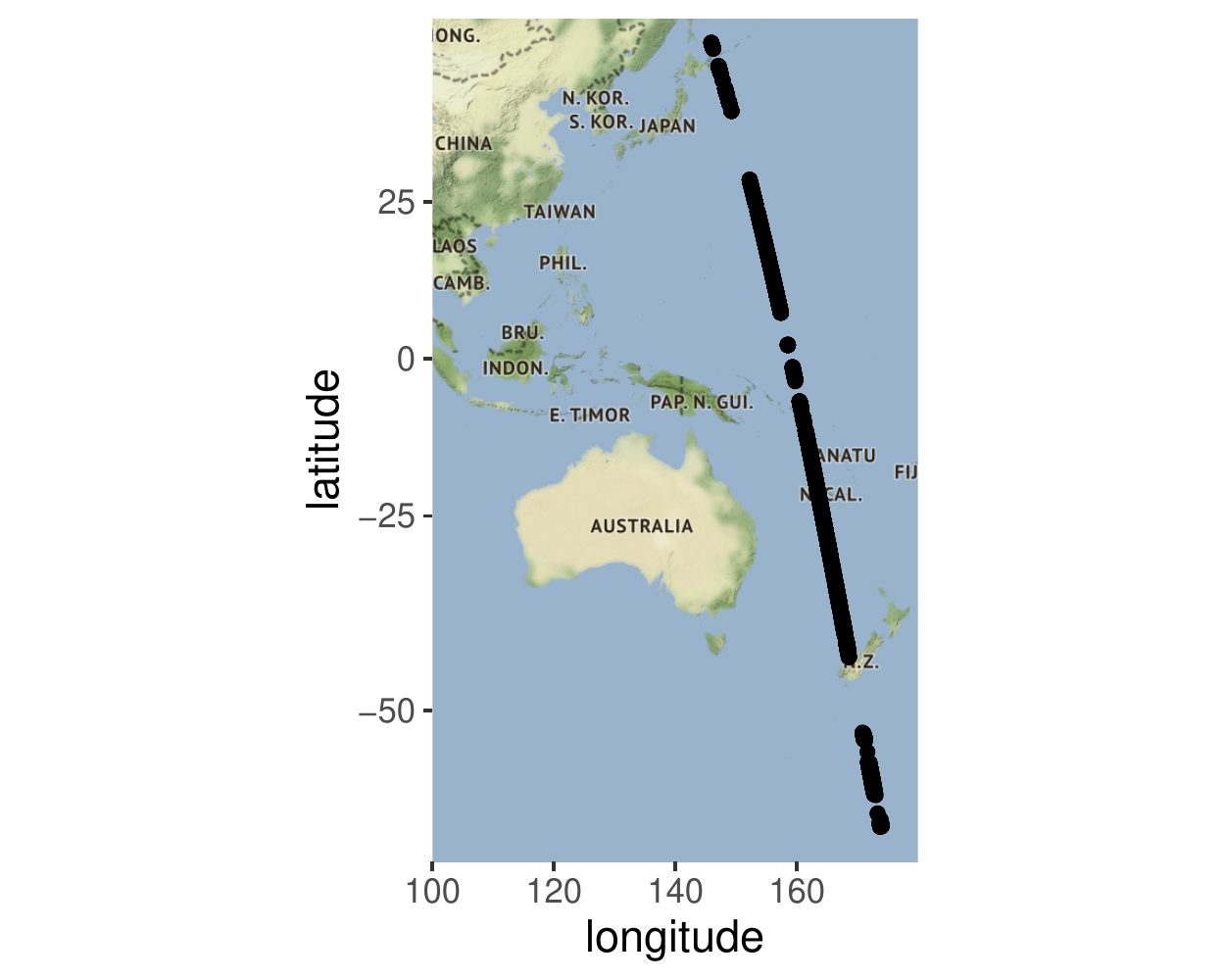}
	\end{subfigure}
	\begin{subfigure}{.48\textwidth}
		\centering
		\includegraphics[trim = 2cm 0cm 2cm 0cm, clip = true, width = 0.8\textwidth]{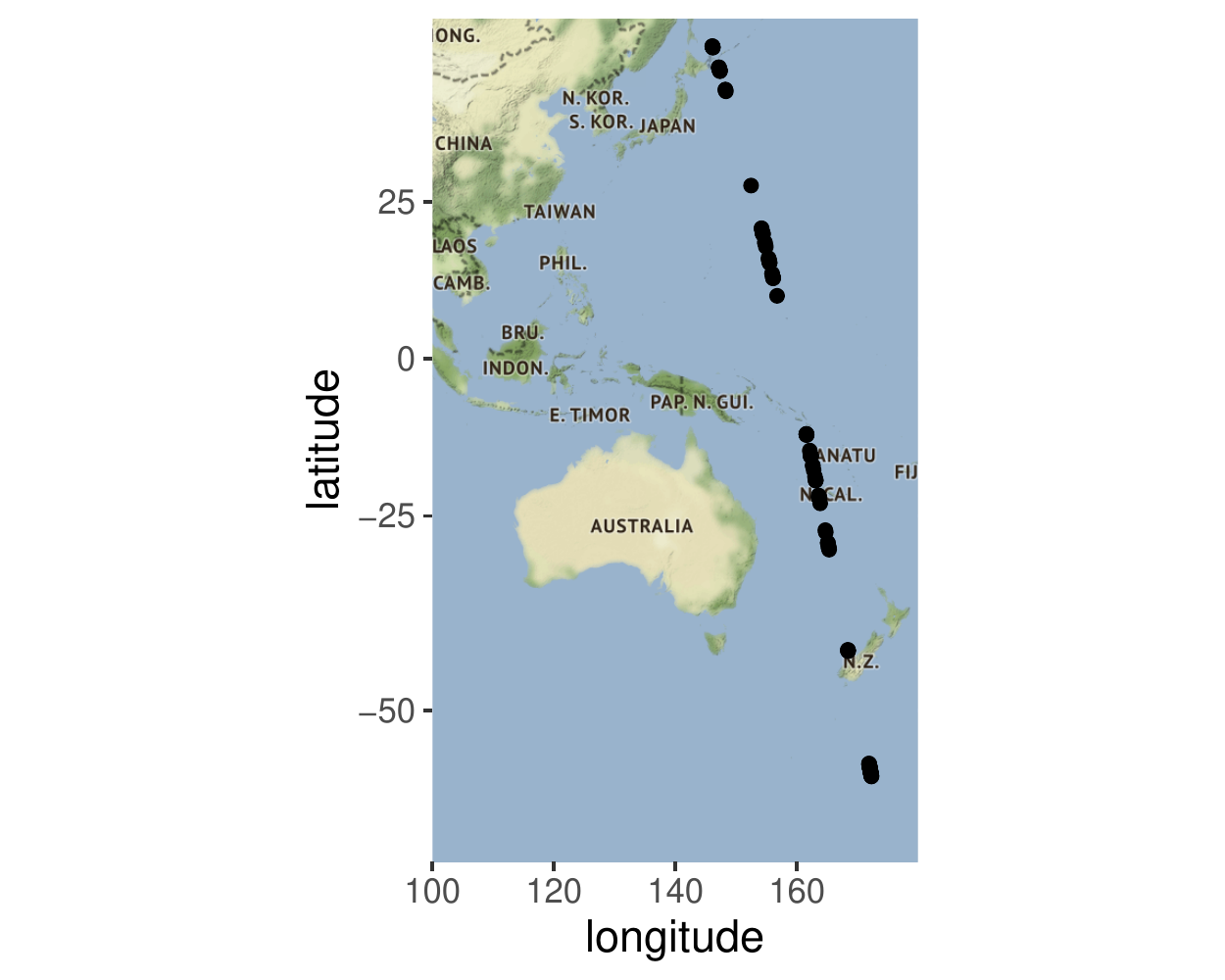}
	\end{subfigure}
	\caption{OCO-2 data used for radiance imputation validation. Left panel: orbit 14793 sounding on Pacific Ocean during 2017-04-13. Right panel: sampled 128 center points from orbit 14793 for experiment.}
	\label{fig:pacific_data}
\end{figure}
The data used as a case study in this section are downloaded from the OCO-2 Level 2 diagnostic data products \citep{v9dug}, available at the NASA Goddard Earth Science Data and Information Services Center (GES DISC, https://disc.gsfc.nasa.gov/OCO-2). As shown in Fig \ref{fig:pacific_data}, they are part of orbit 14793 in glint mode over the Pacific Ocean during 2017-04-13. The Level 1 variable {\tt{measured\_radiance}} was extracted from the dataset, along with geolocation information.

Preliminary examination indicates that locations within a region spanning about $0.5^{\circ}$ latitude in this area can be regarded as having homogeneous covariance function $R_f$. To illustrate our methods on radiance imputation in missing locations, we choose 128 locations at footprint 4, shown in Fig \ref{fig:pacific_data}, as the center points for doing the experiment described below. For each of the selected center points $\bds_i = (L_i, l_i)$, let $\mathcal{S}_i$ be the study area, which we define as the part of this orbit which has latitude between $L_i \pm 0.25^{\circ}$.
We will take a segment of completely observed data around the center points out as the validation data, fit the model without the validation data, and compare the imputation with the validation data to evaluate the performance. Let $T_r(\bds)$ be the area consisted of the closest $r$ cross-tracks (1 to 8 footprints in a row, see Fig \ref{fig:data_def}) near $\bds$. For example, $T_1(\bds)$ would be the cross-track containing $\bds$. If $r$ is odd, $T_r(\bds)$ is the $T_1(\bds)$ plus $(r-1)/2$ cross-tracks below and above $\bds$. If $r$ is even, $T_r(\bds)$ is the $T_1(\bds)$ plus $r/2$ cross-tracks observed before $\bds$ and $(r/2-1)$ cross-tracks observed after $\bds$. In our validation study, we take $T_r(\bds)$ as the validation area, with $r$ ranging from 1 to 8.

To guarantee that we have enough data to fit our model after removing part of the data for validation, the center point sampled $\bds = (L, l)$ are selected to satisfy the following conditions.
\begin{itemize}
    \item[1.] The region between latitude $L \pm 0.25$ has at least 164 non-missing sounding locations such that we can have at least 100 points when $T_8(\bds)$ is removed.
    \item[2.] $T_8(\bds)$ does not have missing locations, i.e., there is no gaps and total number of observations is 64.
\end{itemize}

We conducted the following procedure to validate our imputation algorithm, and repeats it for all 128 selected center points. For $r = 1, \ldots, 8$, 
\begin{itemize}
    \item[1.] Remove the region defined as $T_r(\bds)$, i.e., the $r$ nearest cross-tracks around $\bds$. 
    \item [2.] Impute radiance function for the removed area $T_r(\bds)$ using the radiance data left in the selected validation region $\mathcal{S}$.
    \item[3.] Calculate the RRMSE (Root Relative Mean Squared Error) 
    for the imputed sounding locations $\{\bds_0: \bds_0 \in T_r(\bds)\}$, which is defined as
    \begin{align}
    \label{rmse_eqn}
    \text{RRMSE} & = \sqrt{\dfrac{1}{m} \sum_{w \in \mathcal{W}_a} \dfrac{\{\wh f(w; \bds_0) - r(w; \bds_0)\}^2}{r^2(w; \bds_0)}},
    \end{align}
\end{itemize}

Since measured radiance can be all missing for some specific wavelength and footprint, we can either fill in missing radiance by interpolation or discard the wavelength directly such that $\bdbeta(w)$ can be fully estimated. In this paper, we present the latter approach as they do not differ in terms of imputation performance. Considering radiance at each wavelength changes smoothly along geospatial locations, we treat a naive linear interpolation method as the benchmark to compare with our proposed functional approach. For a location $\bds$ with missing observations, the radiance $f(w; \bds)$ can be imputed by linear interpolation using measured radiance of wavelength $w$ observed at locations with footprint $p = q(\bds)$. The experiment procedure above was also applied to validate linear interpolation with RRMSE (\ref{rmse_eqn}) as the evaluation criteria.

Aggregating results by number of cross-tracks removed, we can see how the radiance imputation method perform when the area to be imputed changes. For each number of cross-tracked removed, we calculate the average RRMSE over all imputation results, and large sample 95\% confidence interval for the mean. Then as shown in Fig \ref{fig:rmse_compare}, RRMSE increases as number of cross-tracks removed increases for both functional approach and linear interpolation. However, the proposed method based on our geospatial functional model consistently attains a lower RRMSE than linear interpolation. Furthermore, the advantage over linear interpolation becomes significant when the missing area is larger than 5 cross-tracks. 
\begin{figure}[htbp]
    \centering
\begin{subfigure}[t]{0.49\textwidth}
    \centering
    \includegraphics[trim = 0 0.5cm 0 0, clip = true, width = \textwidth]{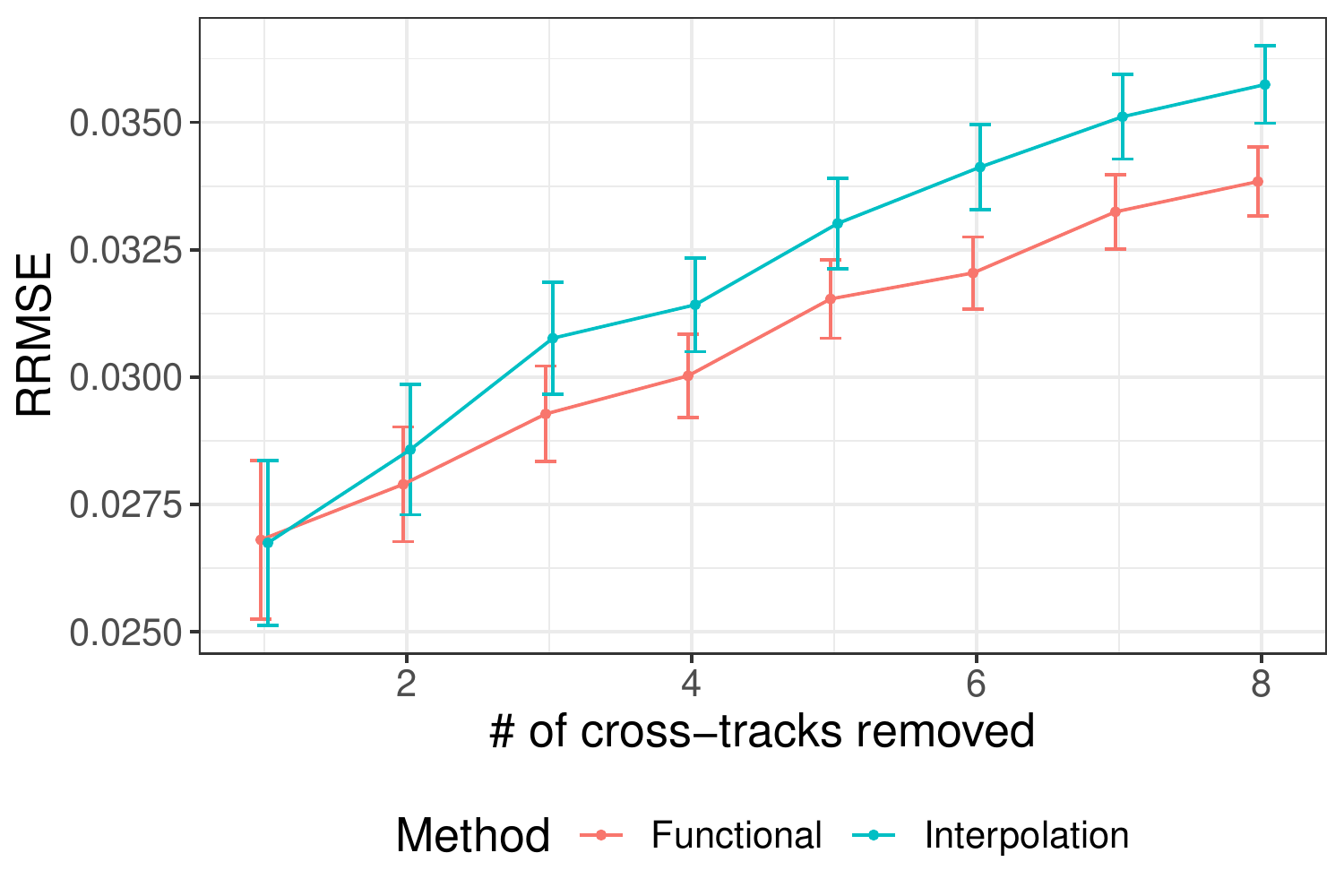}
    \caption{}
    \label{fig:rmse_compare}
\end{subfigure}
\begin{subfigure}[t]{0.49\textwidth}
    \centering
    \includegraphics[trim = 0 0.5cm 0 0, clip = true, width =\textwidth]{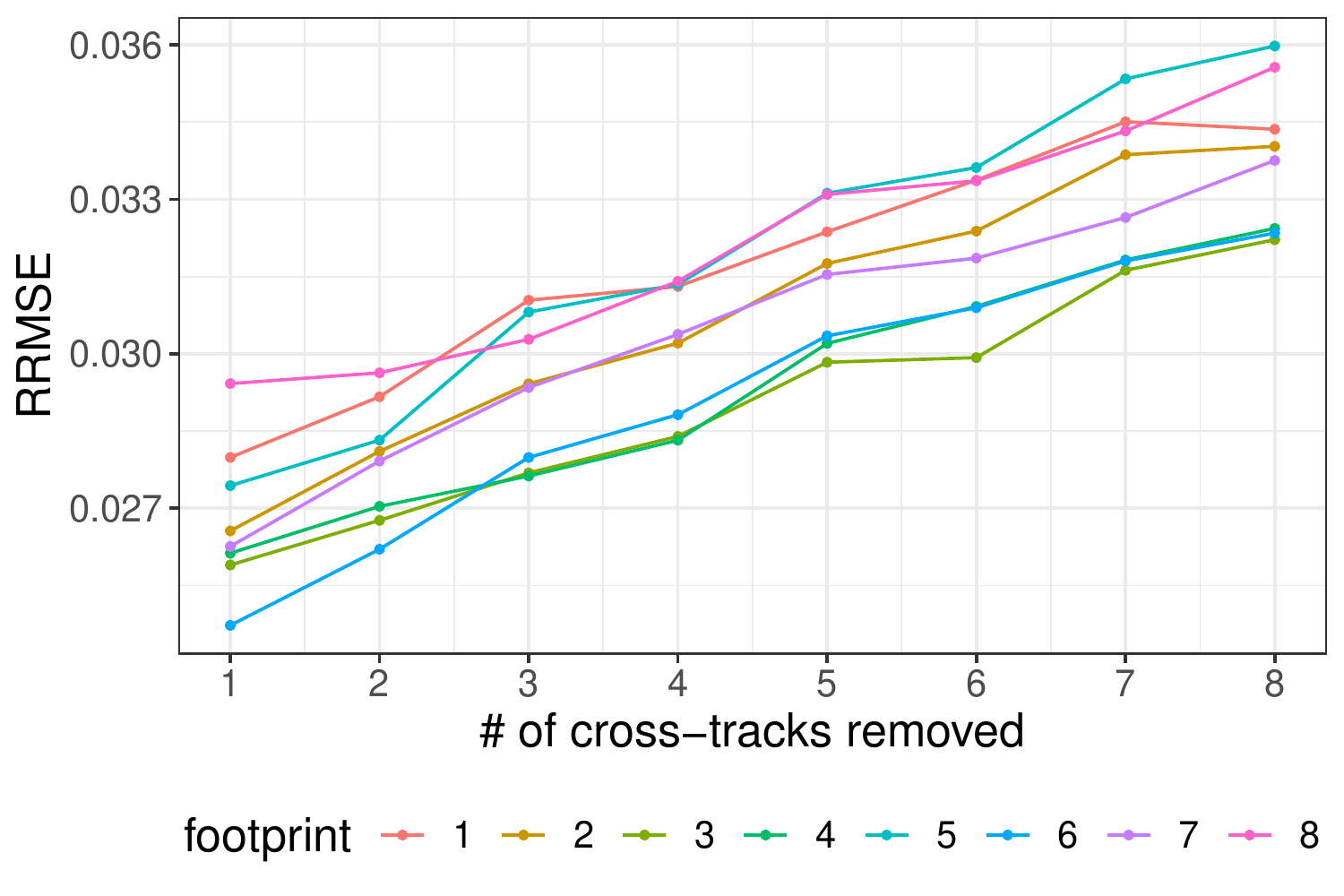}
    \caption{}
   \label{fig:rmse_rpmse_plot}
\end{subfigure}
\caption{(a) Average and 95\% confidence intervals of RRMSE with respect to the number of cross-track removed, \textit{red}: proposed functional approach, \textit{green}: linear interpolation. (b) Average RRMSE in all implementations for different footprints, against 1--8 number of cross-tracks removed.}
\end{figure}
%

In addition to showing proposed approach is overall better than the naive interpolation method, we want to see in detail how the imputation performance differs across footprints. Fig \ref{fig:rmse_rpmse_plot} shows for each footprint, the average RRMSE against the number of cross-tracks removed, respectively. It is consistent with the result in Fig \ref{fig:rmse_compare}: the RRMSE shows an increasing trend, since imputation gets harder when missing region size increases. As lying on the boundaries, imputation for footprint 1 and 8 are generally worse than the other footprints, except for footprint 5 having the highest RRMSE when number of cross-tracks removed is more than 5.
To see average imputation performance over 128 experiments in each missing region $T_r(\cdot), r = 1, \ldots, 8$, please find heatmaps in Supplementary Material S.2 for a more comprehensive representation. 




\subsection{Land Fraction Correction around Greece}
The land fraction variable $\alpha_i$ in OCO-2 data is computed by mapping the OCO-2 location (longitude/latitude) to a static land/water mask. Due to the geo-location uncertainties and the static nature of the mask in this procedure, the land fraction value provided in the OCO-2 data is not reliable. Our unmixing approach can be used to provide a more accurate land fraction estimates for the retrieval algorithm and further increase its spatial coverage.

To evaluate our land fraction estimation method, we use data provided in the OCO-2 Level 1B product \citep{OCO2L1BATDB} along the coastal area of Greece, which includes orbit 05449 on 07-11-2015 and orbit 05216 on 06-25-2016. As shown in Fig \ref{unmixing_data}, there are four mixed regions to estimate respectively after swiping out wavelength index with missing observations. Since the satellite has a repeat cycle of 16 days, orbit 05449 and 05216 are actually in the same area, though they do not have exactly the same coordinates.

Since we do not know the true land fraction, ground truth data regarding the selected two orbits were created manually. The coastal satellite pictures were downloaded from Google Earth and coastlines were added by feature editing in ArcGIS. Then each footprint's land fraction was obtained by calculating the proportion of the coastline polygon inside the area constructed by its four vertices.
\begin{figure}[htbp]
\begin{subfigure}{.48\textwidth}
    \centering
    \includegraphics[trim = 0cm 0cm 0cm 0cm, clip = true, width = \textwidth]{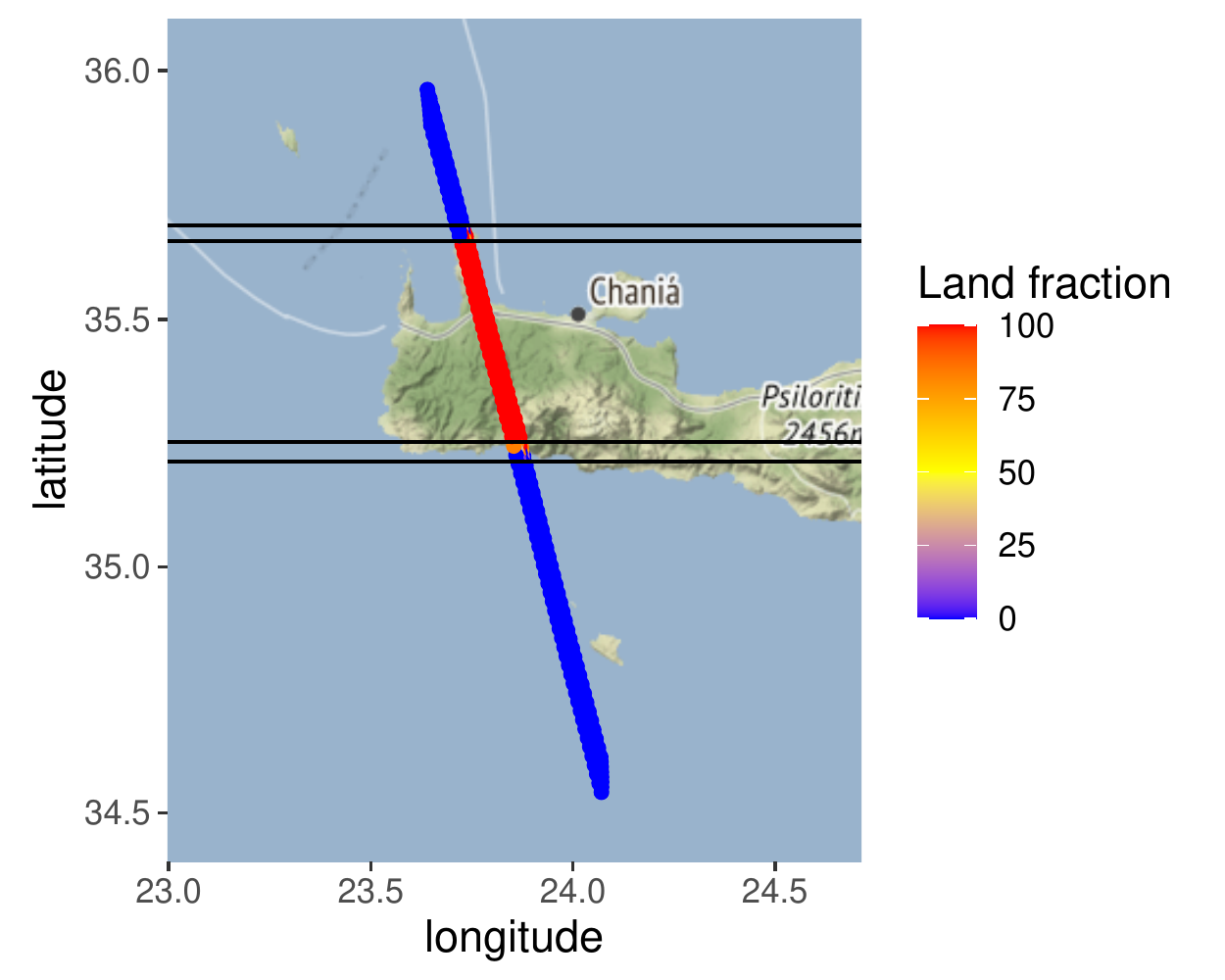}
\end{subfigure}
\begin{subfigure}{.48\textwidth}
    \centering
    \includegraphics[trim = 0cm 0cm 0cm 0cm, clip = true, width = \textwidth]{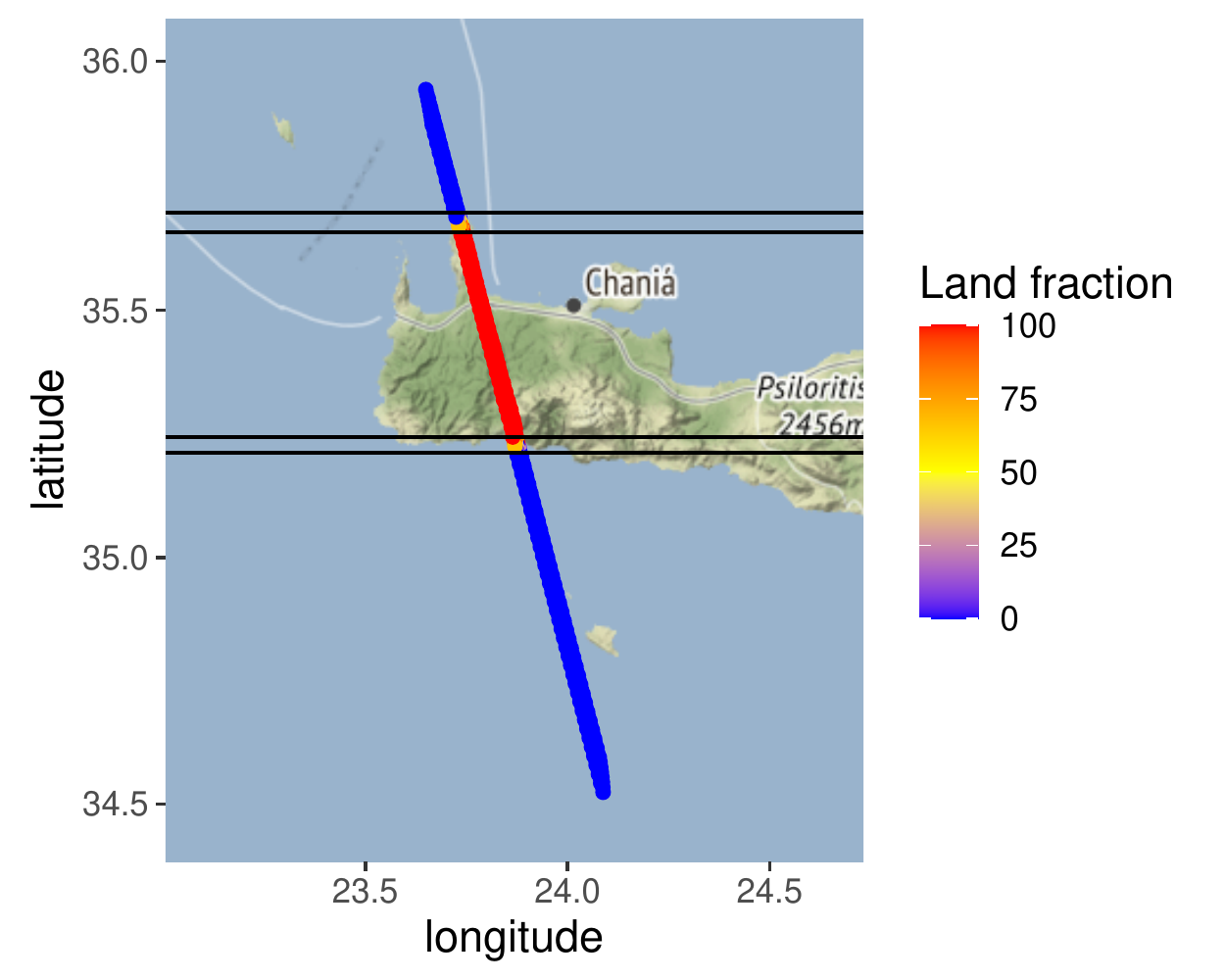}
\end{subfigure}
\caption{OCO-data used for land fraction estimation: Orbit 05216 (left panel) and Orbit 05449 (right panel). Points are colored according to the reported land fraction for each sounding. }
\label{unmixing_data}
\end{figure}

A mixed region is defined as the transition zone from either land to water or water to land. Typically some mixed land/water soundings are present at transition zones. In our algorithm, land fractions are to be estimated in area $\mathcal{M} = \{\bds_i: L_1 - \delta_0 < L_i < L_2 + \delta_0 \}$ where $L_1$ and $L_2$ are the minimum and maximum latitude of the mixed region, and $\delta_0$ is the tolerance (usually set as the average latitude difference) made to account for effect of bad land fraction on the area surrounding the water/land regions. We choose the lower unmixed region as $\mathcal{S}_1 = \{\bds_i: L_1 - \delta_0 - 0.6 \leq L_i \leq L_1 - \delta_0\}$, and the upper unmixed region as $\mathcal{S}_2 = \{\bds_i: L_2 + \delta_0 \leq L_i \leq L_2 + \delta_0 + 0.6\}$, to provide enough data for model estimation. For a given region $\mathcal{M}$, the algorithm conducts the following procedure.
\begin{itemize}
	\item[1.] Determine the type of $\mathcal{S}_1$ and $\mathcal{S}_2$ by land fraction average: for $t = 1, 2$, compute 
	$\sum_{\bds_i \in \mathcal{S}_t} \alpha_i / |\mathcal{S}_t|$. 
	It is recognized as land if the average is more than 70\%, and recognized as water if the average is less than 30\%. Otherwise it is labeled as unidentified. The data is qualified for our unmixing approach if both land and water are identified.
	\item[2.] Do spectral imputation for locations in $\mathcal{M}$ using radiance of unmixed locations in area $\mathcal{S}_1$ and $\mathcal{S}_2$ as input separately. The imputation algorithm is the same as what we proposed above (Section \ref{imputation_app}) except that outliers are removed and a local linear smoother is applied on $\wh \bdu_{k}$ to reduce large variation around $\mathcal{M}$. The bandwidth is selected by cross validation or fixed at 0.1 if any extreme $\wh u_{k}(\bds_i)$ near region $\mathcal{M}$ is detected.
	\item[3.] Estimate $\alpha_i$ for $\bds_i \in \mathcal{M}$ by (\ref{unmixing}), and truncated between 0 and 1.
\end{itemize}
\begin{table}[ht]
	\centering
	\caption{Mean squared error for unmixng and OCO-2 estimates: $\sum_{\bds_i \in \mathcal{M}} (\wh \alpha_i - \alpha_i)^2 / |\mathcal{M}|$}
	\begin{tabular}{ccccc}
		\hline
		MSE &  05216 lower & 05216 upper & 05449 lower & 05449 upper\\
		\hline 
		Unmixing estimates & 0.00300 & 0.07426 & 0.04662 & 0.00783 \\
		OCO-2 & 0.05916 & 0.09546 & 0.08448 & 0.06434 \\
		\hline 
	\end{tabular}
	\label{land_fraction_table}
\end{table}
\begin{figure}[ht]
    \centering
    \includegraphics[trim = 0 0.5cm 0 0.5cm, clip = true, width = 0.8\textwidth]{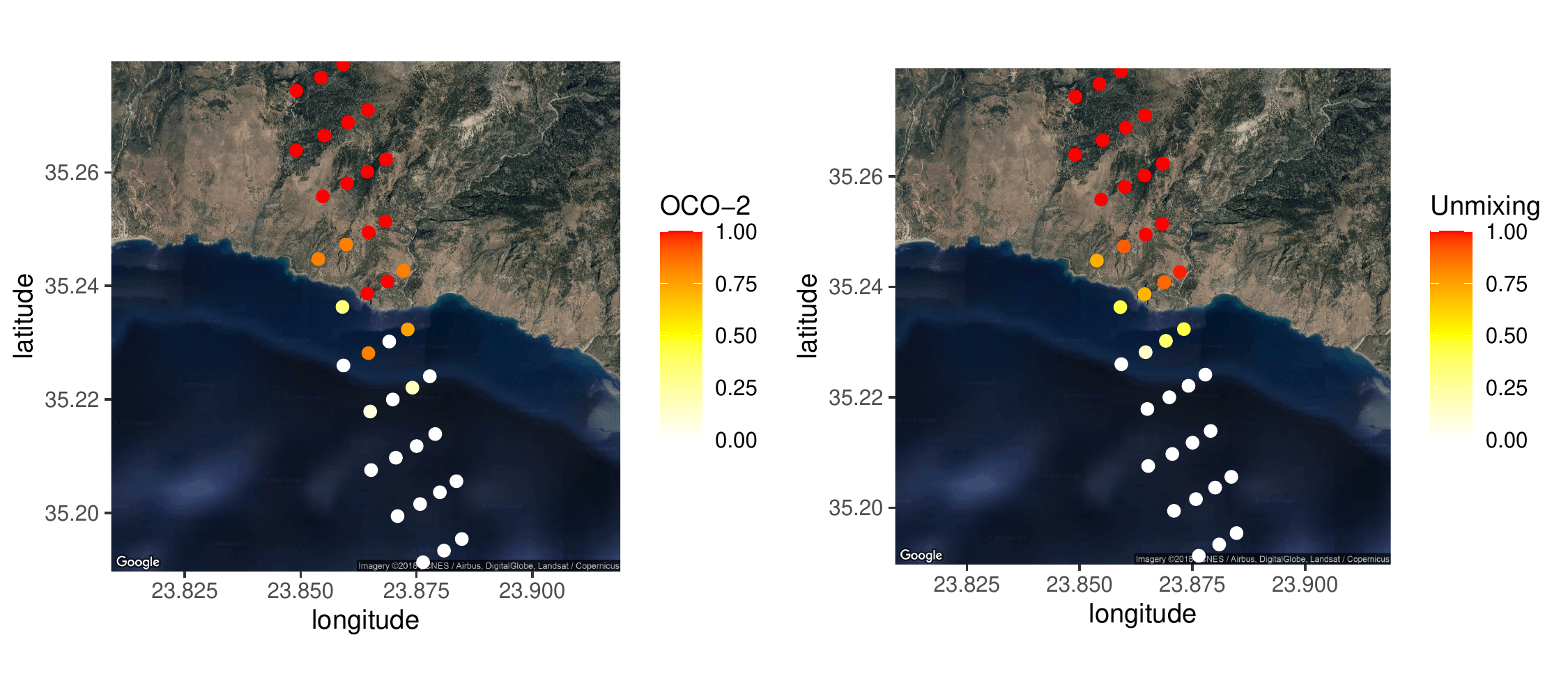}
    \caption{Land fraction heat map of 05126 orbit's lower area: OCO-2 estimate on the left, Unmixing estimate on the right.}
    \label{land_fraction_heatmap}
\end{figure}

We implemented the unmixing algorithm on four mixed regions in Fig \ref{unmixing_data}, and the numerical results are summarized in Table \ref{land_fraction_table}, and Fig \ref{land_fraction_plot} shows land fraction estimates with respect to latitude in these mixed regions. In all four cases the unmixing approach gives a substantially more accurate and reliable land fraction estimate compared to the original estimates in OCO-2 data. The MSE of the unmixing estimates is very small compared to the MSE of OCO-2 land fractions for the lower region of 05216 and the upper region of 05449. The improvement on land fraction estimation is clearly in Fig \ref{land_fraction_heatmap}, where the color change is much smoother and more reasonable in heatmap right-hand side.
\begin{figure}[ht]
\centering
\includegraphics[width = 0.85\textwidth]{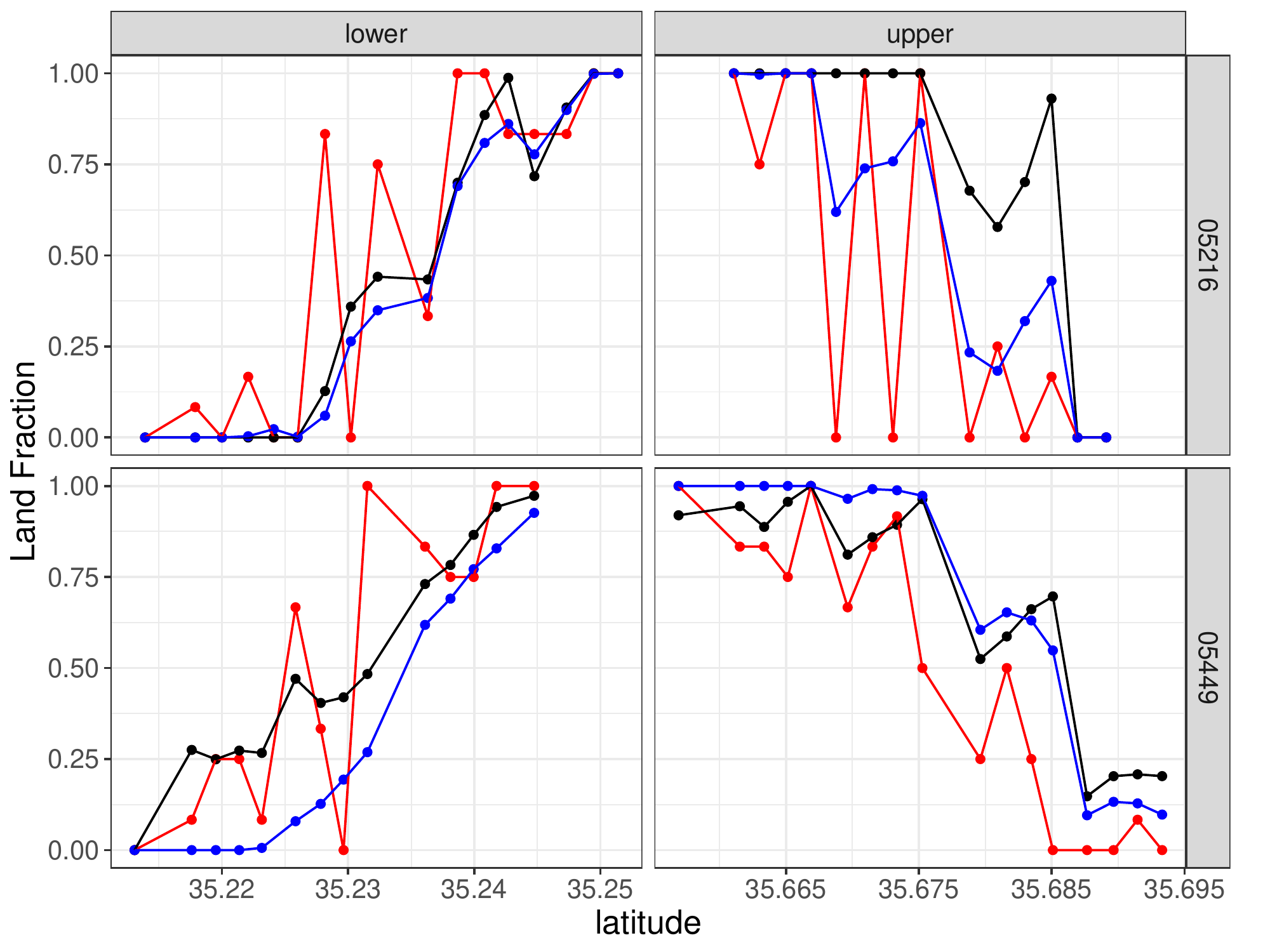}
\caption{Land fraction from three resources against latitude in four mixed regions. \textit{black}: unmixing estimates, \textit{red}: estimates in OCO-2 data, \textit{blue}: ground truth created manually.}
\label{land_fraction_plot}
\end{figure}
Due to large variation near mixed region, the improvements in MSE in the upper area of 05216 and lower area of 05449 is more moderate. 
However, a closer inspection of upper-right and bottom-left panels in Fig \ref{land_fraction_plot} show that our unmixng estimate (black line) is much more aligned with the ground truth (blue line) compared to OCO-2 results (red line). In general, in all of the four mixed regions, OCO-2 estimate is the unstable and unrealistic, while unmixing estimate stays more consistent with ground truth.

\section{Discussion} \label{discussion}
In this paper, we introduced a geospatial functional model for spatial spectral data to impute missing hyperspectral radiance data from the OCO-2 satellite. The model treats spectral radiance as a function of wavelength, and introduces spatial dependence among radiance function by modeling the FPC scores as spatial processes. The model allows footprint-specific mean radiance functions and measurement error process to account for the observed heterogeneity across footprints. Practical algorithms were developed for parameter estimation and prediction, and asymptotic consistency and convergence rate results for the procedure were established.
We successfully implemented the algorithm and achieved acceptable high accuracy for radiance imputation at footprints over water. Furthermore, we developed an unmixing approach to estimate the land fraction in mixed footprints based on the imputation method, and shown that it gives much more accurate land fraction then those provided in the OCO-2 data. Both the imputation and the estimation of land fraction can help increase the spatial coverage of the OCO-2 retrieval algorithm and improve its utility to the research community studying the sources and sinks of carbon dioxide, and contributing to scientists' understanding of how carbon is contributing to climate change. 

The proposed model and algorithm for doing imputation and land fraction estimation is effective over a homogeneous area. Our observation is that for OCO-2 data it works well within a latitude range of about $0.6^{\circ}$ . Due to non-stationarity and weakened spatial dependence over longer distance, this model's effectiveness is diminished for filling large gaps in the data. In this case, a more complex space-time covariance model for spatial spectral data can be developed and studied to tackle imputation at a global scale. This will consider in a separate paper. The methodology developed in this paper also has the potential utility in data processing for hyper-spectral satellite data similar to OCO-2. 




%


\spacingset{1}

\bibliographystyle{apalike}

\bibliography{oco2_ref.bib}
 
 \newpage 
 
 \title{\textbf{Supplementary Material to} \textit{A Geospatial Functional Model For OCO-2 Data with Application on Imputation and Land Fraction Estimation}}
  \author{Xinyue Chang$^1$, Zhengyuan Zhu$^1$, Xiongtao Dai$^1$ \\
   and Jonathan Hobbs$^2$ \\
 \hspace{.2cm}\\
  $^1$Department of Statistics, Iowa State University \\
  $^2$Jet Propulsion Laboratory, California Institute of Technology}
 \maketitle
 
 \spacingset{1.75}
 \setcounter{page}{1}
 \renewcommand{\thepage}{S.\arabic{page}}
 \renewcommand\thesection{S.\arabic{section}}
 \renewcommand{\theequation}{S.\arabic{equation}}
 \renewcommand{\thefigure}{S.\arabic{figure}}
 \section{Technical Proofs}
 This section provides proofs of asymptotic results for dense functional principal analysis (Theorem \ref{thm1}, \ref{thm2} and \ref{thm3}), and consistent BLUP estimator (Theorem \ref{thm4}) for PC scores. We first recall some notations defined previously. Given a sequences $f(n)$ and $g(n)$, the notation $f(n) = O(g(n))$ means $|f(n)| \leq c_1 |g(n)|$ for some $c_1>0$, and $f(n) = \Omega(g(n))$ means that $|f(n)| \geq c_2 |g(n)|$ for some $c_2 > 0$. Also, $f(n) = \Theta(g(n))$ denotes when both $f(n) = O(g(n))$ and $f(n) = \Omega(g(n))$.
 \subsection{Mean function estimation}
 The mean function model (\ref{mean_model}) can be regarded as 8 separate linear models within each footprint group. Without loss of generality, in the proof we assume covariates be centered, i.e., $\bdone\trans \bdx_p = 0$ and $\BX_p = [\bdone \quad \bdx_p]$ and $\bdbeta_p(w) = \{\beta_0^p(w), \bdbeta_1^p(w)\trans \}\trans $, define total error term as 
 \begin{align*}
 u(w; \bds_i) = \sum_{k=1}^K \xi_k(\bds_i) \phi_k(w) + \epsilon_{i}(w), \quad w \in \mathcal{W},
 \end{align*}
 for $p = 1, \ldots, 8$ and any $w \in \mathcal{W}$,
 \begin{align*}
 \BY_{p}(w) = \BX_p \bdbeta_p(w) + \BU_{p}(w),
 \end{align*}
 where $\BU_{p}(w) = \{u(w; \bds_i)\}_{\bds_i \in \mathcal{S}_{np}}$ which corresponds to $\bdx_p$. So equivalent to (\ref{beta_estimate}), $\wh \bdbeta_p(w) = \left(\BX_p\trans \BX_p\right)^{-1} \BX_p\trans \BY_{p}(w)$, and 
 \begin{align}\label{beta_p}
 \wh \bdbeta_p(w) - \bdbeta_p(w) & = \left(\BX_p\trans \BX_p\right)^{-1} \BX_p\trans \BU_{p}(w), \\ \label{beta_0p}
 \wh \beta_0^p(w) - \beta_0^p(w)& = \dfrac{1}{N_p} \sum_{\bds_i \in \mathcal{S}_{np}} u(w; \bds_i), \\ \label{beta_1p}
 \wh \bdbeta_1^p(w) - \bdbeta_1^p(w) & = (\bdx_p\trans \bdx_p)^{-1} \bdx_p\trans \BU_{p}(w).
 \end{align}
 Specifically, let $\bdepsilon_{p}(w) = \{\epsilon_{i}(w)\}_{\bds_i \in \mathcal{S}_{np}}$ with $\var\{\bdepsilon_{p}(w)\} = \sigma_p^2(w) \BI$ and $\bdxi_{kp} = \{\xi_k(\bds_i)\}_{\bds_i \in \mathcal{S}_{np}}$, then
 \begin{align}  \nonumber
 & \wh \bdbeta_1^p(w) - \bdbeta_1^p(w) \\ \nonumber
 = & (\bdx_p\trans \bdx_p)^{-1}\bdx_p\trans \bdepsilon_{p}(w) + \sum_{k=1}^K (\bdx_p\trans \bdx_p)^{-1} \bdx_p\trans \bdxi_{kp} \phi_k(w) \\ \label{beta1_ueqn}
 := & \Delta_{p}^1(w) + \Delta_{p}^2(w),
 \end{align}
 where $\Delta_{p}^1(w) = (\bdx_p\trans \bdx_p)^{-1}\bdx_p\trans \bdepsilon_{p}(w)$ and $\Delta_{p}^2(w) = \sum_{k=1}^K (\bdx_p\trans \bdx_p)^{-1} \bdx_p\trans \bdxi_{kp} \phi_k(w)$. 
 
 We begin by diagonalizing the matrix $\bdx_p\trans \bdx_p$, writing $\bdx_p\trans \bdx_p = UD U\trans$ where $D$ is diagonal and $U$ is unitary. Define $s_{1,n}^2 = \sum_{\bds_i \in \mathcal{S}_{np}} (u_1\trans \bds_i)^2$ and $s_{2,n}^2 = \sum_{\bds_i \in \mathcal{S}_{np}} (u_2\trans \bds_i)^2$, where $(u_1, u_2)$ is the orthonormal matrix $U$. Hence, $D = U\trans \bdx_p\trans \bdx_p U = \mathrm{diag}(s_{1,n}^2, s_{2,n}^2)$. The centered sampling region of $\mathcal{S}_{np}$ denotes as $\mathcal{R}_{np}$ contains origin and has the same shape as $\lambda_n \mathcal{R}_0$. Define a lattice at stage $n$ with size $z > 0$ as $J_n(z) = \{(\delta_1 h_n i_1, \delta_2 h_n i_2): |i_1| \leq z, |i_2| \leq z\}$, then the largest lattice within $\mathcal{S}_{np}$ has size $z_n = \underset{J_n(z) \subset \mathcal{S}_{np}}{\max} z$. By the fact that it is a scaled sampling region, $z_n^2 = \Theta(N_p) = \Theta(\lambda_n^2/h_n^2)$. Thus, for $d = 1, 2$, 
 \begin{align} \nonumber
 s_{d,n}^2 &= \sum_{\bds_i \in \mathcal{S}_{np}} (u_{d1}L_i + u_{d2}l_i)^2 \\ \nonumber
 & \geq \sum_{|i_1| \leq z_n} \sum_{|i_2| \leq z_n} (u_{d1} \delta_1 h_n i_1 + u_{d2}\delta_2 h_n i_2)^2 \\ \nonumber
 & \geq h_n^2 \{2 z_n u_{d1}^2 \delta_1^2 \sum_{|i_1| \leq z_n} i_1^2 + 2 z_n u_{d2}^2 \delta_2^2 \sum_{|i_2| \leq z_n} i_2^2\} \\ \label{sdn2}
 & = \Omega(h_n^2 z_n^4) = \Omega(\lambda_n^4 / h_n^2)
 \end{align}
 Then we introduce the following lemma regarding the weighted sum of error processes.
 \begin{lm} \label{lm:weighted_sum_e}
 	Consider the measurement error process $\epsilon_i(w) = \sigma_p(w)e_i(w)$ at $\bds_i \in \mathcal{S}_{np}$ and $w \in \mathcal{W}$, $e_i(w)$ are i.i.d. mean zero stochastic processes with $\var\{e_i(w)\} = 1$, $w \in W$. In addition,
 	\begin{align*}
 	|e_i(w) - e_i(w')| \leq L_i |w-w'|,
 	\end{align*}
 	where $L_i$ are i.i.d. random variables with $\E L_i^2 < \infty$. If $\E\{\sup_{w\in\mathcal{W}} |e_1(w)|^2\} < \infty$, then 
 	\begin{align*}
 	\sup_{w \in \mathcal{W}} \left|s_{d,n}^{-2} \sum_{\bds_i \in \mathcal{S}_{np}} (u_d\trans \bds_i ) \epsilon_i(w)\right| = O_p(h_n/\lambda_n^2)
 	\end{align*}
 	for $d = 1, 2$.
 \end{lm}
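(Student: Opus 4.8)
The plan is to reduce the claim to a maximal bound on a single centered process and then exploit a one-dimensional Sobolev-type inequality, so that only second-moment information is needed. Write $Z_i = u_d\trans \bds_i$ (deterministic under fixed design) and set $T_n(w) = s_{d,n}^{-2}\sum_{\bds_i \in \mathcal{S}_{np}} Z_i e_i(w)$. Since $\epsilon_i(w) = \sigma_p(w)e_i(w)$ with the \emph{same} $\sigma_p$ for every $\bds_i \in \mathcal{S}_{np}$, we have $s_{d,n}^{-2}\sum_i Z_i \epsilon_i(w) = \sigma_p(w)T_n(w)$, and boundedness of $\sigma_p$ gives $\sup_{w}|s_{d,n}^{-2}\sum_i Z_i\epsilon_i(w)| \le \|\sigma_p\|_\infty \sup_w |T_n(w)|$; thus it suffices to show $\sup_w |T_n(w)| = O_p(h_n/\lambda_n^2)$. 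First I would record the pointwise behaviour: by independence and $\var\{e_i(w)\}=1$, $\E\{T_n(w)^2\} = s_{d,n}^{-4}\sum_i Z_i^2 = s_{d,n}^{-2}$, which by (\ref{sdn2}) is $O(h_n^2/\lambda_n^4)$; hence $T_n(w) = O_p(h_n/\lambda_n^2)$ at each fixed $w$, and the only real work is upgrading this to a uniform statement over $\mathcal{W}$.

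The key device is that each $e_i$, being Lipschitz with constant $L_i$, is absolutely continuous on the interval $\mathcal{W}$, so $T_n$ is absolutely continuous with $\partial_w T_n(w) = s_{d,n}^{-2}\sum_i Z_i e_i'(w)$ a.e. and $|e_i'|\le L_i$. Averaging the identity $T_n(w)=T_n(v)+\int_v^w \partial_w T_n(u)\,du$ over $v\in\mathcal{W}$ and applying Cauchy--Schwarz yields the elementary interpolation bound
\[
\sup_{w\in\mathcal{W}}|T_n(w)| \le c_w^{-1/2}\|T_n\|_{L^2(\mathcal{W})} + c_w^{1/2}\|\partial_w T_n\|_{L^2(\mathcal{W})},
\]
with $c_w = |\mathcal{W}|$. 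This replaces the supremum by two $L^2$ norms and avoids a covering/union-bound argument, which here would cost a spurious $\sqrt{\log(\lambda_n/h_n)}$ factor because a grid fine enough to match the small Lipschitz modulus must have $\Theta(\lambda_n/h_n)$ nodes.

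It then remains to bound the two $L^2$ norms in mean. For the first, Tonelli together with the pointwise variance gives $\E\|T_n\|_{L^2}^2 = \int_{\mathcal{W}}\E\{T_n(w)^2\}\,dw = c_w s_{d,n}^{-2} = O(s_{d,n}^{-2})$. For the second, I would use that the summands $Z_i e_i'(w)$ are independent and that $\sum_i Z_i = 0$ (the covariates are centered, $\bdone\trans\bdx_p = 0$); expanding the square the cross terms cancel and $\E\{\partial_w T_n(w)^2\} = s_{d,n}^{-4}\sum_i Z_i^2\var\{e_1'(w)\} \le s_{d,n}^{-2}\E L_1^2$, so that $\E\|\partial_w T_n\|_{L^2}^2 \le c_w s_{d,n}^{-2}\E L_1^2 = O(s_{d,n}^{-2})$. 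Markov's inequality turns both $L^2$ norms into $O_p(s_{d,n}^{-1})$, and combining with the display above and $s_{d,n}^{-1} = O(h_n/\lambda_n^2)$ from (\ref{sdn2}) gives $\sup_w|T_n(w)| = O_p(h_n/\lambda_n^2)$, completing the argument.

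I expect the main obstacle to be the regularity bookkeeping behind the second paragraph: justifying that a Lipschitz $e_i$ yields an a.e.\ derivative dominated by $L_i$, that differentiation commutes with the finite weighted sum, and that $T_n$ genuinely lies in $W^{1,2}(\mathcal{W})$ almost surely so the interpolation inequality applies pathwise. The moment hypotheses are exactly matched to this route: $\E L_i^2<\infty$ controls $\|\partial_w T_n\|_{L^2}$, $\var\{e_i(w)\}=1$ controls $\|T_n\|_{L^2}$, and $\E\{\sup_w|e_1(w)|^2\}<\infty$ guarantees finiteness and the Tonelli step, so that no higher moments of $e_i$ and no dependence structure among the $e_i$ enter.
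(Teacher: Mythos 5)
Your proof is correct, but it follows a genuinely different route from the paper's. Both arguments share the same skeleton -- factor out the common $\sigma_p(w)$, reduce to a uniform bound $O_p(s_{d,n}^{-1})$ for the normalized process, and invoke (\ref{sdn2}) -- but they diverge at the key step of upgrading the pointwise rate to a supremum. The paper treats $Z_{ni,d}(w)=s_{d,n}^{-1}(u_d\trans\bds_i)e_i(w)$ as a triangular array of independent processes and runs empirical-process machinery: the Lindeberg condition for finite-dimensional convergence, a random semimetric dominated by $B_n\rho$ with $B_n=O_p(1)$, an entropy-integral bound, and then Theorem~2.11.1 of van der Vaart and Wellner to conclude that $\sup_{w\in\mathcal{W}}\bigl|\sum_{\bds_i\in\mathcal{S}_{np}}Z_{ni,d}(w)\bigr|$ converges weakly to a tight limit. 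You replace all of this with the pathwise one-dimensional Agmon/Sobolev inequality $\sup_w|f|\le c_w^{-1/2}\|f\|_{L^2}+c_w^{1/2}\|f'\|_{L^2}$, plus Tonelli and Markov on second moments; this is more elementary, self-contained, and in fact needs weaker hypotheses (you never use $\E\{\sup_w|e_1(w)|^2\}<\infty$, only the pointwise variance and $\E L_1^2<\infty$). Two caveats. First, in your derivative bound the cross terms do not literally ``cancel'': $\E\{e_i'(w)\}=0$ requires a dominated-convergence argument on the difference quotients (dominated by $L_i$), or else you should note that by identical distribution and $\sum_i Z_i=0$ the cross terms equal $-\{\E e_1'(w)\}^2\sum_i Z_i^2\le 0$, which is all you need -- either fix is one line. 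Second, the paper's route buys something yours does not: its conclusion is a weak-convergence statement, and the same argument is explicitly reused in two dimensions for the products $e_i(w)e_i(w')$ in the covariance-function proof (result (\ref{Re_conv_rate})). Your inequality does not extend verbatim there, since $W^{1,2}\not\hookrightarrow L^\infty$ in two dimensions; one would instead need the mixed-derivative (dominating mixed smoothness) variant involving $\partial_w\partial_{w'}$, which is available and covered by $\E L_i^4<\infty$ from \ref{cond:mm}, but requires additional bookkeeping.
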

 \begin{proof}
 	For each $n$, let $Z_{ni,d}(w) = s_{d,n}^{-1} (u_d\trans \bds_i) e_i(w)$, $\bds_i \in \mathcal{S}_{np}$, $w \in \mathcal{W}$, be independent stochastic processes and $\rho(w, w') = |w - w'|$. For every $\varepsilon > 0$, let $M_n = \sup_{\bds_i \in \mathcal{S}_{np}} |u_d\trans \bds_i| = O(\lambda_n)$ by mixed-increasing-domain,
 	\begin{align*}
 	& \sum_{\bds_i\in\mathcal{S}_{np}}\E \left[\left\{ \sup_{w \in \mathcal{W}} \left|\dfrac{u_d\trans \bds_i}{s_{d,n}} e_i(w)\right|\right\}^2 I\left\{\sup_{w\in \mathcal{W}} \left|\dfrac{u_d\trans \bds_i}{s_{d,n}} e_i(w)\right| > \varepsilon\right\}\right] \\
 	\leq & \sum_{\bds_i\in\mathcal{S}_{np}}\E \left[\left\{ \sup_{w \in \mathcal{W}} \left|\dfrac{M_n}{s_{d,n}} e_i(w)\right|\right\}^2 I\left\{\sup_{w\in \mathcal{W}} \left|\dfrac{M_n}{s_{d,n}} e_i(w)\right| > \varepsilon\right\}\right] \\
 	= & \dfrac{M_n^2}{s_{d,n}^2} N_p \E \left[\left\{\sup_{w \in \mathcal{W}} |e_i(w)|^2\right\} I \left\{\sup_{w \in \mathcal{W}} |e_i(w)| > \varepsilon \dfrac{s_{d,n}}{M_n}\right\}\right] \\
 	= &  o(1)
 	\end{align*}
 	by $N_p (M_n/ s_{d,n})^2 = O(1)$ and DCT with $\E\{\sup_{w\in\mathcal{W}} |e_i(w)|^2\} < \infty$. In addition, for $w, w' \in \mathcal{W}$, 
 	\begin{align*}
 	&\sum_{\bds_i \in \mathcal{S}_{np}} \cov \{Z_{ni,d}(w), Z_{ni,d}(w')\} \\
 	=& s_{d,n}^{-2} \sum_{\bds_i \in \mathcal{S}_{np}} (u_d\trans \bds_i)^2 \cov\{e_i(w), e_i(w')\} \\
 	=&  \cov\{e_1(w), e_1(w')\} \leq \sqrt{\E\{e_1(w)\}^2 \E\{e_1(w')\}^2} < \infty
 	\end{align*}
 	by assumptions on process $e_i(w)$. Thus the previous two displays give marginal weak convergence to a Gaussian limit by Lindeberg condition theorem. For every sequence $\delta_n \downarrow 0$, by the Lipschitz continuity assumption, 
 	\begin{align*}
 	& \sup_{\rho(w,w') < \delta_n} \sum_{\bds_i \in \mathcal{S}_{np}}\E\{Z_{ni,d}(w) - Z_{ni,d}(w')\}^2 \\
 	\leq & \delta_n^2 \sum_{\bds_i \in \mathcal{S}_{np}} \dfrac{(u_d\trans \bds_i)^2}{s_{d,n}^2} \E L_i^2 = O(\delta_n^2) \rightarrow 0. 
 	\end{align*}
 	Define a random semimetric (Section~2.11.1 in \cite{kiessler2009weak}) by 
 	\begin{align*}
 	d_n^2(w, w') = \sum_{\bds_i \in \mathcal{S}_{np}} \{Z_{ni,d}(w) - Z_{ni,d}(w')\}^2,
 	\end{align*}
 	which is upper bounded as
 	\begin{align*}
 	d_n^2(w, w') &\leq \sum_{\bds_i \in \mathcal{S}_{np}} \left(\dfrac{u_d\trans \bds_i}{s_{d,n}}\right)^2 L_i^2 (w- w')^2 \\
 	& \leq B_n^2 \rho^2(w, w'),
 	\end{align*}
 	where $B_n^2 = \sum_{\bds_i \in \mathcal{S}_{np}} \left(\dfrac{u_d\trans \bds_i}{s_{d,n}}\right)^2 L_i^2$. Let $N(\varepsilon, \mathcal{F}, ||\cdot||)$ be the minimal number of balls $\{g: ||g-f|| < \varepsilon\}$ of radius $\varepsilon$ needed to cover the set $\mathcal{F}$. For any sequence $\delta_n \downarrow 0$,
 	\begin{align*}
 	\int_0^{\delta_n} \sqrt{\log N(\varepsilon, \mathcal{W}, d_n)}d\varepsilon \leq & \int_0^{\delta_n} \sqrt{\log N(\varepsilon, \mathcal{W}, B_n\rho)}d\varepsilon \\
 	= & B_n \int_0^{\delta_n/B_n} \sqrt{\log N(u, \mathcal{W}, \rho)}du = O_p(\delta_n^{1/2}).
 	\end{align*}
 	Since $N(u, \mathcal{W}, \rho) \leq \lceil |\mathcal{W}|/(2u) \rceil$ and $B_n = O_p(1)$ by assumption \ref{cond:mm}. By Theorem~2.11.1 in \cite{kiessler2009weak}, $\sup_{w\in \mathcal{W}} \left|\sum_{\bds_i \in \mathcal{S}_{np}} Z_{ni}(w)\right|$ converges to a tight limit weakly. The proof is complete by noticing that $s_{d,n}^{-1} \sigma_p(w)Z_{ni}(w) = s_{d,n}^{-2} (u_d\trans \bds_i) \epsilon_i(w)$ and $s_{d,n} = O(\lambda_n^2 /h_n)$.
 \end{proof}
 
 Applying Lemma \ref{lm:weighted_sum_e}, we have
 \begin{align*}
 \sup_{w \in \mathcal{W}}||\Delta_p^1(w)||_2  = & \sup_{w \in \mathcal{W}} ||(U\trans \bdx_p\trans \bdx_p U)^{-1} U \trans \bdx_p\trans \bdepsilon_p(w)||_2 \\
 \leq & \sup_{w \in \mathcal{W}} ||\{s_{1,n}^{-2} \sum_{\bds_i \in \mathcal{S}_{np}} (u_1\trans \bds_i) \epsilon_{i}(w), s_{2,n}^{-2} \sum_{\bds_i \in \mathcal{S}_{np}} (u_2\trans \bds_i) \epsilon_{i}(w)\}\trans ||_2 \\
 = & O_p(h_n/\lambda_n^2)
 \end{align*}
 by condition \ref{cond:mm}. 
 Similarly, for the second term $\Delta_p^2(w)$, 
 \begin{align*}
 ||(\bdx_p\trans \bdx_p)^{-1} \bdx_p\trans \bdxi_{kp}||_2 = ||\{s_{1,n}^{-2} \sum_{\bds_i \in \mathcal{S}_{np}} (u_1\trans \bds_i) \xi_k(\bds_i), s_{2,n}^{-2} \sum_{\bds_i \in \mathcal{S}_{np}} (u_2\trans \bds_i) \xi_k(\bds_i)\}\trans ||_2.
 \end{align*}
 Since $\bdx_p$ is centered, the functions $u_1\trans \bds_i$ and $u_2 \trans \bds_i$ satisfy the (F.1) condition of \cite{lahiri2003central} with limiting function equal to 1. Let $\gamma_{d,n}^2 = M_{d,n}^2 \lambda_n^2 h_n^{-2} s_{d,n}^{-2}$ where $d = 1,2$, $M_{d,n} = \sup\{|u_d\trans \bds|: \bds \in \mathcal{R}_{np} \} = O(\lambda_n)$. Together with result (\ref{sdn2}), it implies that $\gamma_{d,n}^2 = O(1)$. Under conditions \ref{cond:alpha}--\ref{cond:g} and \ref{cond:xi_k}, by Proposition 4.1 and Theorem 4.1 in \cite{lahiri2003central},
 \begin{align*}
 s_{d,n}^{-2} \sum_{\bds_i \in \mathcal{S}_{np}} (u_d\trans \bds_i) \xi_k(\bds_i) = O_p(h_n^{-1} (s_{d,n}^2)^{-1/2}) = O_p(1/\lambda_n^2),
 \end{align*}
 where $d = 1, 2$. It follows that 
 \begin{align*}
 \sup_{w \in \mathcal{W}} ||\Delta_p^2(w)||_2 \leq \sum_{k=1}^K ||(\bdx_p\trans \bdx_p)^{-1} \bdx_p\trans \bdxi_{kp}||_2 \sup_{w \in \mathcal{W}} |\phi_k(w)| = O_p(1/\lambda_n^2).
 \end{align*}
 Combing two terms $\Delta_p^1(w)$ and $\Delta_p^2(w)$, we have
 \begin{align} \label{beta1_est_conv_rate}
 \sup_{w \in \mathcal{W}} ||\wh \bdbeta_1^p(w) - \bdbeta_1^p(w)||_2 = O_p(1/\lambda_n^2) 
 \end{align}
 
 For the intercept estimate $\wh \beta_0^p(w)$, following eqn. (\ref{beta_0p}), 
 \begin{align*}
 \sup_{w \in \mathcal{W}}||\wh \beta_0^p(w) - \beta_0^p(w)||_2 \leq & \sum_{k = 1}^K \sup_{w \in \mathcal{W}}|\phi_k(w)| \left\lvert \dfrac{1}{N_p} \sum_{\bds_i \in \mathcal{S}_{np}} \xi_k(\bds_i)\right\rvert \\
 & + \sup_{w \in \mathcal{W}} \left|N_p^{-1} \sum_{\bds_i \in \mathcal{S}_{np}} \epsilon_i(w)\right|.
 \end{align*}
 Similarly by the Theorem 4.1 in \cite{lahiri2003central}, under mixing conditions assumed, $\dfrac{1}{N_p}\sum_{\bds_i \in \mathcal{S}_p} \\ \xi_k(\bds_i) = O_p(N_p^{-1/2}h_n^{-1})$. As a simpler case than Lemma \ref{lm:weighted_sum_e}, $\sup_{w \in \mathcal{W}} |N_p^{-1} \sum_{\bds_i \in \mathcal{S}_{np}} \epsilon_i(w)| = O_p(N_p^{-1/2})$ by the same argument. Therefore, 
 $$\sup_{w \in \mathcal{W}} ||\wh \beta_0^p(w) - \beta_0^p(w)||_2 = O_p(1/\lambda_n),$$
 and furthermore with the previous result (\ref{beta1_est_conv_rate}),  
 \begin{align} \label{beta_est_order}
 \sup_{w \in \mathcal{W}} ||\wh \bdbeta(w) - \bdbeta(w)||_2 = O_p(1/\lambda_n).
 \end{align}
 For a given $\bds_i \in \mathcal{S}$, recall $p_i = q(\bds_i)$, 
 \begin{align} \nonumber
 & \sup_{w \in \mathcal{W}} |\wh \mu(w; \bds_i) - \mu(w; \bds_i)| \\ \nonumber
 \leq & \sup_{w \in \mathcal{W}} |\wh \beta_0^{p_i}(w) - \beta_0^{p_i}(w)| + \sup_{w \in \mathcal{W}} |\bds_i \trans \{\wh \bdbeta_1^{p_i}(w) - \bdbeta_1^{p_i}(w)\}| \\ \label{mean_est_order}
 = & O_p(1/\lambda_n)
 \end{align}
 As shown in (\ref{beta_est_order}) and (\ref{mean_est_order}), the proof of Theorem \ref{thm1} is done.
 
 \subsection{Covariance function estimation}
 Based on condition \ref{cond:mm}, $\sup_{w \in \mathcal{W}} |\sum_{k=1}^K \xi_k(\bds_i) \phi_k(w) + \epsilon_{i}(w)| = O_p(1)$. For any $w, w' \in \mathcal{W}$, by Theorem \ref{thm1},
 \begin{align*}
 & \sup_{w,w'\in \mathcal{W}}|\wh R_f(w, w') - R_f(w, w')| \\
 \leq & \sup_{w, w'\in\mathcal{W}}|\mathcal{A}_{1,n} + \mathcal{A}_{2,n} + \mathcal{A}_{3,n}| + O_p(1/\lambda_n)
 \end{align*}
 where 
 \begin{align*}
 \mathcal{A}_{1, n}& = \dfrac{1}{N-1}\sum_{\bds_i \in \mathcal{S}_n} \left\{\sum_{k=1}^K \xi_k(\bds_i) \phi_k(w)\right\}\epsilon_{i}(w') + \dfrac{1}{N-1}\sum_{\bds_i \in \mathcal{S}_n} \left\{\sum_{k=1}^K \xi_k(\bds_i) \phi_k(w')\right\} \epsilon_{i}(w) \\
 \mathcal{A}_{2,n}& = \dfrac{1}{N-1}\sum_{\bds_ \in \mathcal{S}_n} \left\{\sum_{k=1}^K \xi_k(\bds_i) \phi_k(w)\right\}\left\{\sum_{k=1}^K \xi_k(\bds_i) \phi_k(w')\right\} - \sum_{k=1}^K \lambda_k \phi_k(w) \phi_k(w')\\
 \mathcal{A}_{3,n} & = \dfrac{1}{N-1} \sum_{\bds_i \in \mathcal{S}_n} \epsilon_{i}(w) \epsilon_{i}(w') - \dfrac{1}{N-1}\sum_{p=1}^8 N_p \wh R_{\epsilon, p}(w, w')
 \end{align*}
 For the first term in $\mathcal{A}_{1,n}$,
 \begin{align*}
 & \sup_{w, w'\in \mathcal{W}}\left \lvert\dfrac{1}{N-1}\sum_{\bds_i \in \mathcal{S}_n} \left\{\sum_{k=1}^K \xi_k(\bds_i) \phi_k(w)\right\}\epsilon_{i}(w')\right\rvert \\
 \leq & \sum_{k=1}^K \sup_{w\in\mathcal{W}}|\phi_k(w)| \dfrac{1}{N-1} \sum_{p=1}^8 \sup_{w' \in\mathcal{W}} \left\lvert \sum_{\bds_i \in \mathcal{S}_{np}} \xi_k(\bds_i)\epsilon_{i}(w')\right\rvert.
 \end{align*}
 Conditioning on $\xi_k(\bds_i)$, because $\epsilon_i(w)$ and $\xi_k(\bds_i)$ are independent, then $\sum_{\bds_i \in \mathcal{S}_{np}} \xi_k(\bds_i)\epsilon_{i}(w')$ is a weighted sum of the independent error processes. Moreover, by condition \ref{cond:xi_k}, we have 
 \[
 \sup_{w' \in \mathcal{W}}\left|\sum_{\bds_i \in \mathcal{S}_{np}} \xi_k(\bds_i)\epsilon_{i}(w')\right| = O_p\left(\sqrt{\sum_{\bds_i \in \mathcal{S}_{np}} \xi_k^2(\bds_i)}\right) = O_p(\sqrt{N_p}), 
 \]
 by using the same proof in Lemma \ref{lm:weighted_sum_e}. 
 Therefore, the converge rate is
 \begin{align} \label{A1n_conv_rate}
 \sup_{w,w'\in\mathcal{W}} |\mathcal{A}_{1,n}| = O_p(h_n/\lambda_n).
 \end{align}
 
 For the second term $\mathcal{A}_{2,n}$, 
 \begin{align*} \nonumber
 \mathcal{A}_{2,n} = & \dfrac{1}{N-1}\sum_{i=1}^N \left\{\sum_{k_1 =1}^K \sum_{k_2 = 1}^K \xi_{k_1}(\bds_i) \xi_{k_2}(\bds_i) \phi_{k_1}(w) \phi_{k_2} (w')\right\} \\ \nonumber
 & -\sum_{k=1}^K \lambda_k \phi_k(w) \phi_k(w') \\ 
 = & \sum_{k=1}^K \left\{\dfrac{1}{N-1}\sum_{i=1}^N \xi_k^2(\bds_i) - \lambda_k\right\} \phi_k(w) \phi_k(w') \\ 
 & + \sum_{k_1 = 1}^K \sum_{k_2 \neq k_1}^K \phi_{k_1}(w) \phi_{k_2}(w') \dfrac{1}{N-1}\sum_{i=1}^N \xi_{k_1}(\bds_i) \xi_{k_2}(\bds_i)
 \end{align*}
 By the definition of strong mixing coefficient (\ref{mix_coef}), the mixing coefficient of $\xi_k^2(\cdot)$ is less or equal to $\alpha_k(a; b)$. Thus, we also have conditions \ref{cond:alpha}--\ref{cond:g} satisfied, by Proposition 4.1 and Theorem 4.1 in \cite{lahiri2003central}, $\dfrac{1}{N-1}\sum_{i=1}^N \xi_k^2(\bds_i) - \lambda_k = O_p(1/\lambda_n)$. Similarly, the mixing coefficient for the randome field $\{\xi_{k1}(\bds) \xi_{k2}(\bds): \bds \in \mathbb{R}^2\}$ is less or equal to $\alpha_{k1}(a; b) + \alpha_{k2}(a; b)$ because of independence \citep{bradley2005basic}. Then we have $\dfrac{1}{N-1}\sum_{i=1}^N \xi_{k_1}(\bds_i) \xi_{k_2}(\bds_i) = O_p(1/\lambda_n)$ as well. We can have an uniform bound of $|\mathcal{A}_{2,n}|$ as
 \begin{align*}
 |\mathcal{A}_{2,n}| \leq & \sum_{k=1}^K \left\lvert \dfrac{1}{N-1}\sum_{i=1}^N \xi_k^2(\bds_i) - \lambda_k\right\rvert \sup_{w \in \mathcal{W}}|\phi_k(w)| \sup_{w' \in \mathcal{W}} |\phi_k(w')| \\
 & + \sum_{k_1 = 1}^K \sum_{k_2 \neq k_1}^K \sup_{w \in \mathcal{W}} |\phi_{k_1}(w)| \sup_{w' \in \mathcal{W}}|\phi_{k_2}(w')| \left\lvert\dfrac{1}{N-1}\sum_{i=1}^N \xi_{k_1}(\bds_i) \xi_{k_2}(\bds_i)\right\rvert,
 \end{align*}
 which implies 
 \begin{align} \label{A2n_conv_rate}
 \sup_{w,w'\in \mathcal{W}} |\mathcal{A}_{2,n}| = O_p(1/\lambda_n).
 \end{align}
 
 Considering the third term $\mathcal{A}_{3,n}$, we defined $\Delta_p(w; \bds_i) = r(w; \bds_{i_1}) - 2r(w; \bds_i) + r(w; \bds_{i_2})$ where $\bds_{i_1}$ and $\bds_{i_2}$ are defined as nearest locations of $\bds_i$ in positive and negative latitude directions respectively, 
 \begin{align*}
 & \sup_{w,w' \in \mathcal{W}} |\wh R_{\epsilon,p}(w, w') - R_{\epsilon, p}(w, w)| \\
 = & \sup_{w,w' \in \mathcal{W}} \left\lvert\dfrac{1}{6\wt N_p} \sum_{\bds_i \in \mathcal{S}_{np}} \Delta_p(w; \bds_i)\Delta_p(w'; \bds_i) - R_{\epsilon, p}(w, w)\right\rvert.
 \end{align*}
 Based on our spatial asymptotic framework,  condition \ref{cond:xi_k_smooth} and \ref{cond:mm},
 \begin{align} \nonumber
 & \sup_{w\in\mathcal{W}} |f(w; \bds_{i_1}) - 2f(w; \bds_i) + f(w; \bds_{i_2})| \\ \nonumber
 \leq & \sup_{w\in\mathcal{W}}||\bdbeta_1^p(w)||_2||\bds_{i_1} - 2\bds_{i} + \bds_{i_2}||_2 \\ \nonumber
 & + \sum_{k=1}^K \sup_{w \in \mathcal{W}} |\phi_k(w)| |\xi_k(\bds_{i_1}) - 2\xi_k(\bds_i) + \xi_k(\bds_{i_2})| \\ \label{f_2nd_order}
 =& O_p(h_n^{\beta_1}), \quad \beta_1 = \min\{1, \beta\}.
 \end{align}
 Also, we notice that $e_i(w) e_i(w')$ satisfy the Lipschitz continuity, 
 \begin{align*}
 & |e_i(w_1)e_i(w_2) - e_i(w_1')e_i(w_2')| \nonumber \\
 \leq & \sup_{w \in \mathcal{W}} |e_i(w)| L_i \{|w_1 - w_1'| + |w_2 - w_2'|\}, 
 \end{align*}
 where $\E\{\sup_{w \in \mathcal{W}} |e_i(w)| L_i\}^2 < \infty$ by condition \ref{cond:mm}. Extending the proof of Lemma \ref{lm:weighted_sum_e} to 2-dimensional case, then it is easy to see that 
 \begin{align} \label{Re_conv_rate}
 \sup_{w,w'\in \mathcal{W}} |\dfrac{1}{N_p} \sum_{\bds_i \in S_{np}} e_i(w)e_i(w') - R_e(w, w')| = O_p(N_p^{-1/2}).
 \end{align}
 By previous two results (\ref{f_2nd_order}) and (\ref{Re_conv_rate}), and $\wt N_p$ has the same order as $N_p$ based on the asymptotic framework, 
 \begin{align*}
 & \sup_{w,w' \in \mathcal{W}} |\wh R_{\epsilon,p}(w, w') - R_{\epsilon, p}(w, w)| \\
 \leq &  \sup_{w, w' \in\mathcal{W}} \left\lvert\dfrac{1}{6 \wt N_p} \sum_{\bds_i \in \mathcal{S}_{np}} \{\epsilon_{i_1}(w) - 2\epsilon_{i}(w) + \epsilon_{i_2}(w)\}\{\epsilon_{i_1}(w') - 2\epsilon_{i}(w') + \epsilon_{i_2}(w')\} \right. \\
 & \left. - \sigma_p(w) \sigma_p(w') R_e(w,w') \right\rvert + O_p(h_n^{\beta_1}) \\
 \leq &  \sup_{w, w' \in\mathcal{W}} \left\lvert\dfrac{1}{6\wt N_p} \sum_{\bds_i \in \mathcal{S}_{np}} \{-2\epsilon_{i_1}(w)\epsilon_{i}(w') + \epsilon_{i_1}(w)\epsilon_{i_2}(w') - 2\epsilon_i(w) \epsilon_{i_1}(w') \right. \\
 & \left.  - 2\epsilon_i(w) \epsilon_{i_2}(w') + \epsilon_{i_2}(w) \epsilon_{i_1}(w') - 2\epsilon_{i_2}(w) \epsilon_i(w')\} \right\rvert + O_p(h_n/\lambda_n + h_n^{\beta_1}).
 \end{align*}
 The spatial asymptotic framework is based on lattice, then $\sum_{\bds_i \in \mathcal{S}_{np}} \epsilon_{i_1}(w) \epsilon_i(w')$ can be written as sum of two sequences taking independent processes by choosing $\bds_i$ every other latitude. This applies to the other products above as well and conditions for Lemma \ref{lm:weighted_sum_e} are satisfied. Thus, 
 \begin{align} \label{Reps_conv_rate}
 \sup_{w,w' \in \mathcal{W}} |\wh R_{\epsilon,p}(w, w') - R_{\epsilon, p}(w, w)| = O_p(h_n^{\beta_1} + h_n/\lambda_n)
 \end{align}
 where $\beta_1 = \min\{1, \beta\}$. With result (\ref{Re_conv_rate}), it follows that
 \begin{align*}
 & \sup_{w\in\mathcal{W}} \left\lvert \dfrac{1}{N-1}\sum_{i=1}^N \epsilon_{i}(w)\epsilon_{i}(w') - \dfrac{1}{N-1} \sum_{p=1}^8 N_p \wh R_{\epsilon,p}(w, w') \right\rvert\\
 \leq & \dfrac{1}{N-1} \sum_{p=1}^8 \left\{\sup_{w\in \mathcal{W}} \left|\sum_{\bds_i \in \mathcal{S}_{np}} \epsilon_{i}(w)\epsilon_{i}(w') - N_p R_{\epsilon,p}(w, w') \right| \right. \\
 & \left. + \sup_{w\in\mathcal{W}} |N_p \wh R_{\epsilon,p}(w, w')  - N_p R_{\epsilon,p}(w, w') |\right\} \\
 = & O_p(h_n^{\beta_1} + h_n/\lambda_n).
 \end{align*}
 Now we verify that $\mathcal{A}_{3,n} = O_p(h_n/\lambda_n + h_n^{\beta_1})$. Finally, together with (\ref{A1n_conv_rate}) and (\ref{A2n_conv_rate}), the proof completes.
 
 \subsection{Principal component analysis}
 Similar to result (S.2) in \cite{li2013selecting}, using (2.8) in \cite{hall2006properties} with $K < \infty$, we have the asymptotic expansion as 
 \begin{align*}
 \wh \phi_k(w) - \phi_k(w) = & \left\{ \sum_{\substack{k' = 1 \\ k' \neq k}}^K \dfrac{\lambda_{k'} \phi_{k'}(w)}{(\lambda_k - \lambda_{k'}) \lambda_k} \int \int (\wh R_f - R_f) \phi_k \phi_{k'} \right.\\
 & \left. - \lambda_k^{-1}\phi_k(w) \int \int (\wh R_f - R_f)\phi_k\phi_k \right. \\
 & \left. + \lambda_{k}^{-1} \int (\wh R_f - R_f)(x,w) \phi_k(x)dx \right\} \{1+o_p(1)\}.
 \end{align*}
 By Theorem \ref{thm2}, we have
 \begin{align} \label{phik_order}
 \sup_{w \in \mathcal{W}} |\wh \phi_k(w) - \phi_k(w)| = O_p(1/\lambda_n + h_n^{\beta_1}).
 \end{align}
 Since we estimate scores $u_k(\bds_i)$ by (\ref{xik_estimate}),
 \begin{align*}
 & |\wh u_k(\bds_i) - u_k(\bds_i)| \\
 \leq & \int |r(w; \bds_i) - \wh \mu(w; \bds_i)| |\wh \phi_k(w) - \phi_k(w) | dw + \int |\mu(w; \bds_i) - \wh \mu(w, \bds_i)| |\phi_k(w)| dw \\
 = & O_p(1/\lambda_n + h_n^{\beta_1}).
 \end{align*}
 Applying triangle inequality, and result (\ref{Reps_conv_rate})--(\ref{phik_order}), 
 \begin{align*}
 & |\wh \tau_k(p) - \tau_k(p)| \\
 = & |\int \wh R_{\epsilon, p}(w, w') \wh \phi_k(w) \wh \phi_k(w')dw dw' - \int R_{\epsilon, p}(w, w') \phi_k(w) \phi_k(w') dwdw'| \\
 = & O_p(1/\lambda_n + h_n^{\beta_1}), 
 \end{align*}
 for all $p = 1, \ldots, 8$.
 
 \subsection{BLUP for principal component scores}
 For a distance lag $h \in \mathbb{R}$, let $\wt \gamma_k(h)$ be the sample semivariogram (\ref{sample_variogram}) calculated using $u_k(\bds_i)$,
 \begin{align*}
 & \wt \gamma_k(h) - \gamma_k(h; \theta_k) - \dfrac{1}{2N(h)}\sum_{d(\bds_i, \bds_j)=h} \{\tau_k(p_i) + \tau_k(p_j)\}\\
 = & \dfrac{1}{2N(h)}\sum_{d(\bds_i, \bds_j)=h} \{\xi_k(\bds_i) - \xi_k(\bds_j)\}^2 - \gamma_k(h; \theta_k) \\ 
 & + \dfrac{1}{N(h)}\sum_{d(\bds_i, \bds_j)=h} \{\xi_k(\bds_i) - \xi_k(\bds_j)\}(e_{ik} - e_{jk}) \\
 & + \dfrac{1}{2N(h)}\sum_{d(\bds_i, \bds_j)=h} \{(e_{ik} - e_{jk})^2 - \tau_k(p_i) - \tau_k(p_j)\}.
 \end{align*}
 Let $\bdh \in \mathbb{R}^2$ be a vector that have length of $h$, by conditions \ref{cond:alpha}--\ref{cond:g} and \ref{cond:xi_k}, Proposition~4.1 and Theorem~4.1 in \cite{lahiri2003central}, 
 \begin{align*}
 \dfrac{1}{N} \sum_{\bds_i \in \mathcal{S}_n} [\{\xi_k(\bds_i) - \xi_k(\bds_i + \bdh)\}^2 - 2 \gamma_k(h; \theta_k)] = O_p(N^{-1/2}h_n^{-1}).
 \end{align*}
 Following proof of Theorem~3.3 in \cite{lahiri2002asymptotic}, we obtain
 \begin{align*}
 \dfrac{1}{2N(\bdh)}\sum_{\bds_i-\bds_j=\bdh} \{\xi_k(\bds_i) - \xi_k(\bds_j)\}^2 - \gamma_k(h; \theta_k) = o_p(1)
 \end{align*}
 for any $\bdh$ with great circle length of $h$. Thus,
 \begin{align} \label{xi_k2_conv}
 \dfrac{1}{2N(h)}\sum_{d(\bds_i, \bds_j)=h} \{\xi_k(\bds_i) - \xi_k(\bds_j)\}^2 - \gamma_k(h; \theta_k) = o_p(1).
 \end{align}
 Using the similar argument, we have consistency for the other two terms as well. Since $e_{ik}$ and $\xi_k(\bds_i)$ are independent, then 
 \begin{align} \label{xi_ke_conv}
 \dfrac{1}{N(h)} \sum_{d(\bds_i - \bds_j)=h} \{\xi_k(\bds_i) - \xi_k(\bds_j)\}(e_{ik} - e_{jk}) = o_p(1),
 \end{align}
 by using condition \ref{cond:alpha}--\ref{cond:mm} and Theorem~4.2 in \cite{lahiri2003central}. Also, 
 \begin{align} \label{e_k2_conv}
 \dfrac{1}{2N(h)} \sum_{d(\bds_i, \bds_j)=h} \{(e_{ik} - e_{jk})^2 - \tau_k(p_i) - \tau_k(p_j)\} = o_p(1), 
 \end{align}
 by condition \ref{cond:mm} and weak law of large numbers. Finally, 
 by (\ref{xi_k2_conv})--(\ref{e_k2_conv}) and Theorem \ref{thm3},  
 \begin{align*}
 & |\wh \gamma_k(h) - \gamma_k(h; \theta_k)|  \\
 = & |\wh \gamma_k(h) - \wt \gamma_k(h) + \wt \gamma_k(h) - \gamma_k(h; \theta_k)| \\
 \leq  & \dfrac{1}{2N(h)} \sum_{d(\bds_i, \bds_j)=h} |\{\wh u_k(\bds_i) - \wh u_k(\bds_j)\}^2 - \{u_k(\bds_i) - u_k(\bds_j)\}^2| \\
 & + \dfrac{1}{2N(h)}\sum_{d(\bds_i, \bds_j)=h} |\tau_k(p_i) + \tau_k(p_j) - \wh \tau_k(p_i) - \wh \tau_k(p_j)| + o_p(1) \\
 = & o_p(1)
 \end{align*}
 Theorem~3.1 in \cite{lahiri2002asymptotic} also hold if distance lag $h_l, l = 1, \ldots, L$ is defined as isotropic distance, then by conditions \ref{c1}--\ref{c3} and the previous display, we have $\wh \theta_k - \theta_k = o_p(1)$.
 
 \section{Additional Application Results} \label{sec:appendix:application}
 
 In Fig \ref{fig:rmse1} and \ref{fig:rmse2}, the 8 heat maps represent average RRMSE (\ref{rmse_eqn}) in 128 imputations for removed region $T_r(\cdot)$ from $r = 1$ to $r = 8$. 
 \begin{figure}[ht]
 	\begin{subfigure}{1\textwidth}
 		\centering
 		\includegraphics[trim = 0cm 6cm 0cm 4cm, clip = true, width = 0.9\textwidth]{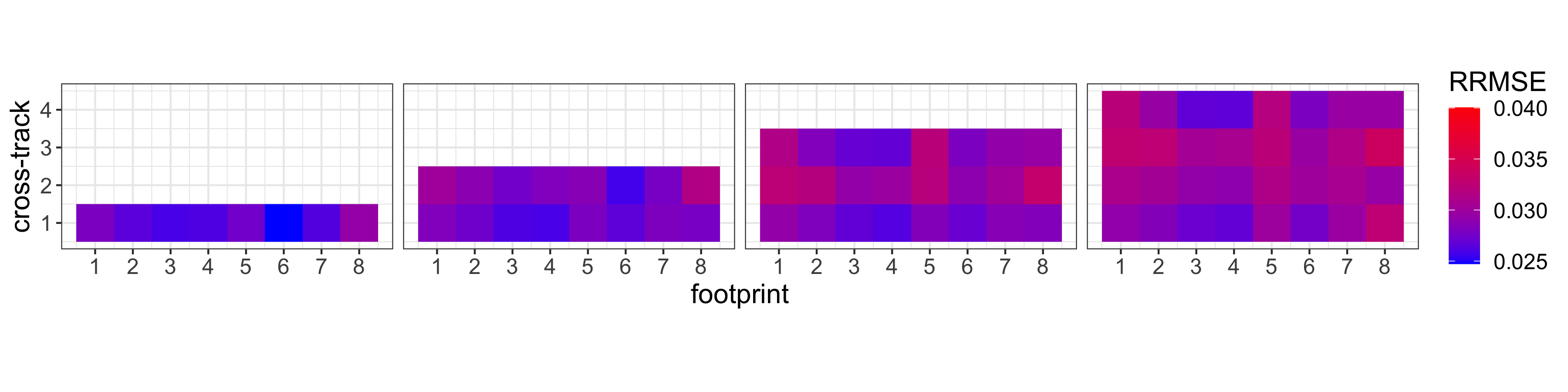}
 		\caption{}
 		\label{fig:rmse1}
 	\end{subfigure}
 	
 	\begin{subfigure}{1\textwidth}
 		\centering
 		\includegraphics[trim = 4cm 0cm 4cm 0cm, clip = true, width = 0.9\textwidth]{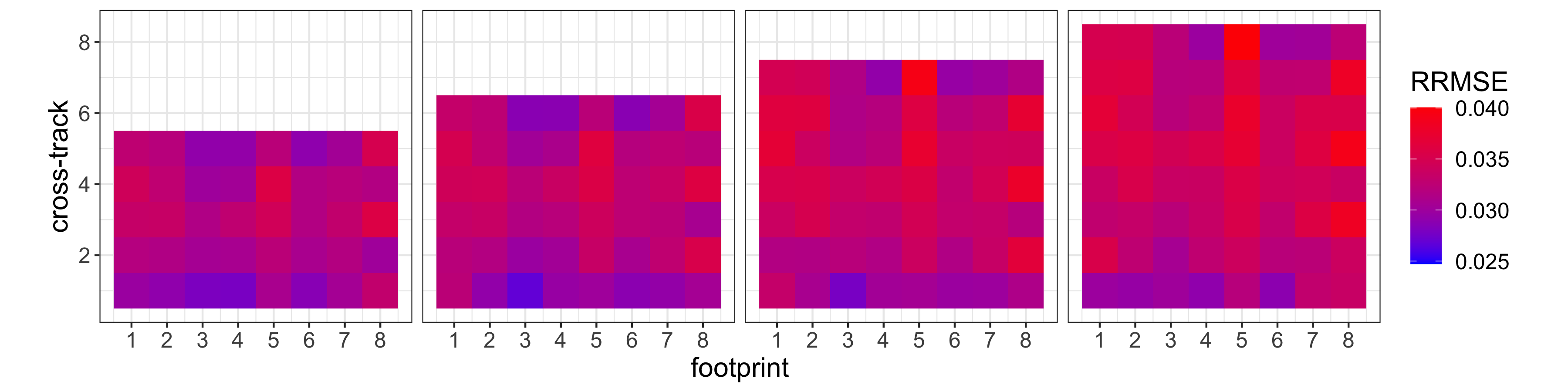}
 		\caption{}
 		\label{fig:rmse2}
 	\end{subfigure}
 	\caption{Removed cross-tracks colored by RRMSE in radiance imputation: (a) average RRMSE in all implementations for 1-4 cross-tracks removed; (b) average RRMSE in all implementations for 5-8 cross-tracks removed.}
 \end{figure}
 Overall the imputation is very close to the observed value, with the root relative MSE less than 0.04. The imputation performance deteriorates as the size of the missing region increases, which is as expected, as the dependence decreases with distance, and it is more difficult to fill in a larger gap. The heat maps also show some variation across the footprints in imputation accuracy: locations in boundaries of the layout or middle are harder to impute than others.
 
 To evaluate how well our ordinary kriging predictor for principal component score performs in radiance imputation, we define the RMSPE (Root Mean Squared Prediction Error) as
 \begin{align}
 \label{rpmse_eqn}
 \text{RMSPE} = \sqrt{\dfrac{1}{m} \sum_{w \in \mathcal{W}_a} \left[\sum_{k=1}^K \{u_k(\bds_0) - \wh \xi_k(\bds_0)\} \wh \phi_k (w)\right]^2}.
 \end{align}
 \begin{figure}[H]
 	\centering
 	\includegraphics[width = 0.8\textwidth]{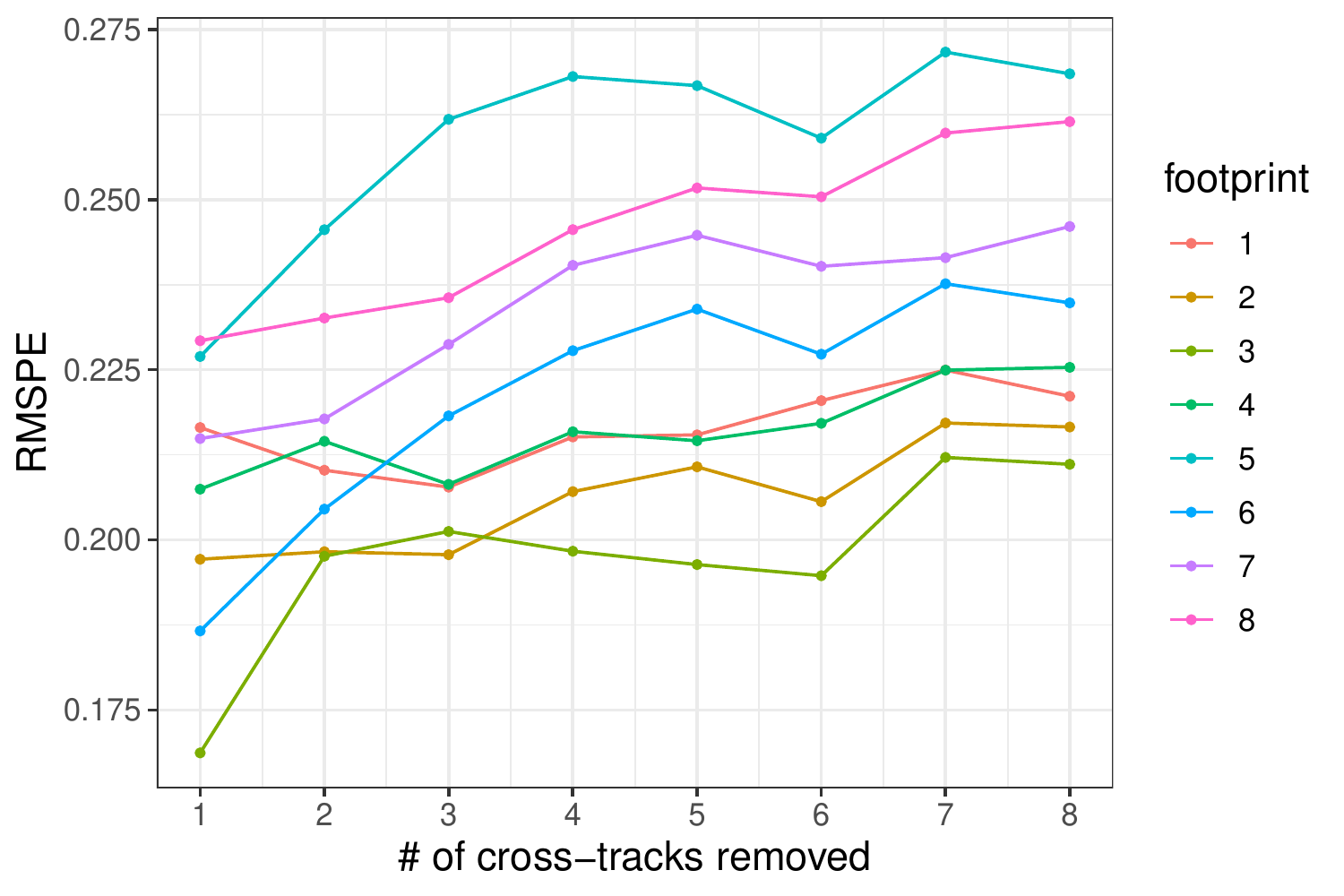}
 	\caption{Average RMSPE in all implementations for different footprints, against 1--8 number of cross-tracks removed.}
 	\label{fig:rpmse_plot}
 \end{figure}
 \noindent Fig \ref{fig:rpmse_plot} shows the consistent results to Fig \ref{fig:rmse_rpmse_plot}, and footprint 5 is the worst in terms of predicting component scores.
 Similarly, predicted mean square error results are summarized in Fig \ref{fig:rpmse1} and \ref{fig:rpmse2}. It is clear that prediction becomes worse as points are closer to the center and further to the edge of imputed region. 
 \begin{figure}[ht]
 	\begin{subfigure}{1\textwidth}
 		\centering
 		\includegraphics[trim = 0cm 6cm 0cm 4cm, clip = true, width = 0.95\textwidth]{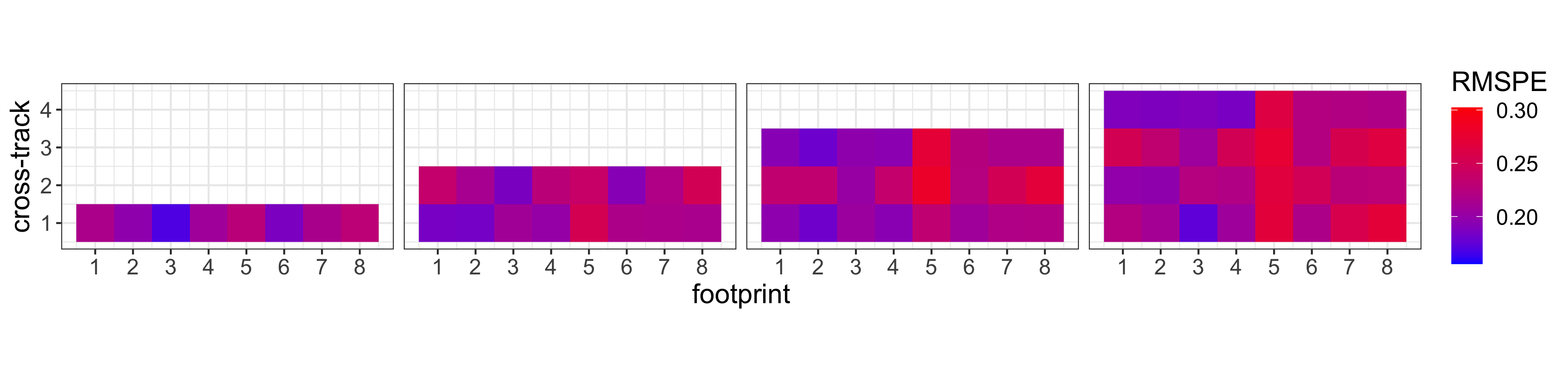}
 		\caption{}
 		\label{fig:rpmse1}
 	\end{subfigure}
 	
 	\begin{subfigure}{1\textwidth}
 		\centering
 		\includegraphics[trim = 4cm 0cm 4cm 0cm, clip = true, width = 0.95\textwidth]{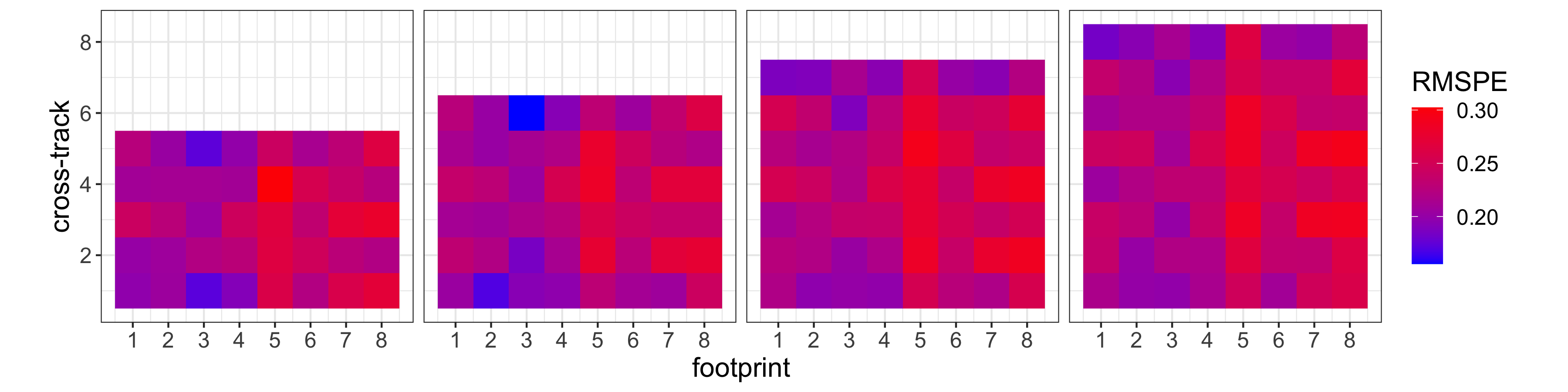}
 		\caption{}
 		\label{fig:rpmse2}
 	\end{subfigure}
 	\caption{Removed cross-tracks colored by RMSPE in radiance imputation: (a) average RMSPE in all implementations for 1-4 cross-tracks removed; (b) average RMSPE in all implementations for 5-8 cross-tracks removed.}
 \end{figure}
 
 

\end{document}